\titleformat{\section}{\bfseries\scshape\Large}{\thesection}{1em}{}{}
\titleformat*{\subsection}{\scshape\bfseries\large}
\numberwithin{equation}{section}
\newtheorem{thm}{Theorem}[section]
\newtheorem{lem}[thm]{Lemma}
\newtheorem{prop}[thm]{Proposition}
\newtheorem{cor}[thm]{Corollary}
\theoremstyle{definition}
\newtheorem{hyp}{Hypothesis}
\renewcommand*{\thehyp}{\Alph{hyp}}
\theoremstyle{remark}
\newtheorem{rem}[thm]{Remark}
\newtheorem{ex}[thm]{Example}
\crefname{hyp}{Hypothesis}{Hypotheses}
\Crefname{hyp}{Hypothesis}{Hypotheses}
\crefname{lem}{Lemma}{Lemmas}
\Crefname{lem}{Lemma}{Lemmas}
\crefname{thm}{Theorem}{Theorems}
\Crefname{thm}{Theorem}{Theorems}
\crefname{prop}{Proposition}{Propositions}
\Crefname{prop}{Proposition}{Propositions}
\crefname{enumi}{}{}
\Crefname{enumi}{}{}
\crefname{equation}{}{}
\Crefname{equation}{}{}
\crefname{rem}{Remark}{Remarks}
\Crefname{rem}{Remark}{Remarks}
\renewcommand{\@upn}{} 
\patchcmd{\endthm}{\@endpefalse}{}{}{}
\patchcmd{\endcor}{\@endpefalse}{}{}{}
\patchcmd{\endlem}{\@endpefalse}{}{}{}
\patchcmd{\endprop}{\@endpefalse}{}{}{}
\patchcmd{\endproof}{\@endpefalse}{}{}{}
\newlist{enumthm}{enumerate}{1} 
\setlist[enumthm]{label=\upshape(\roman*),ref=\thethm~(\roman*)}  
\newlist{enumcor}{enumerate}{1}
\setlist[enumcor]{label=\upshape(\roman*),ref=\thecor~(\roman*)}
\newlist{enumlem}{enumerate}{1}
\setlist[enumlem]{label=\upshape(\roman*),ref=\thelem~(\roman*)}
\newlist{enumprop}{enumerate}{1}
\setlist[enumprop]{label=\upshape(\roman*),ref=\theprop~(\roman*)}
\newlist{enumhyp}{enumerate}{1}
\setlist[enumhyp]{label=\upshape(\roman*),ref=\thehyp~(\roman*)}
\newlist{enumproof}{enumerate*}{1}
\setlist[enumproof]{label=\upshape(\roman*)}
\newlist{enumdef}{enumerate}{1}
\setlist[enumdef]{label=\upshape(\roman*),ref=\thedefn~(\roman*)}
\newcounter{subcreftmpcnt} %
\newcommand\romansubformat[1]{(\roman{#1})} 
\def\subcref{\@ifstar\@@subcref\@subcref}
\newcommand\@subcref[2][\romansubformat]{%
	\ifcsname r@#2@cref\endcsname
	\cref@getcounter {#2}{\mylabel}%
	\setcounter{subcreftmpcnt}{\mylabel}%
	\hyperref[#2]{\romansubformat{subcreftmpcnt}}%
	\else ?? \fi}   
\newcommand\@@subcref[2][\romansubformat]{%
	\ifcsname r@#2@cref\endcsname
	\cref@getcounter {#2}{\mylabel}%
	\setcounter{subcreftmpcnt}{\mylabel}%
	\romansubformat{subcreftmpcnt}%
	\else ?? \fi}   
\DeclareRobustCommand{\crefnosort}[1]{%
	\begingroup\@cref@sortfalse\cref{#1}\endgroup
}
\newcommand{\cD}{{\mathcal D}}\newcommand{\cE}{{\mathcal E}}\newcommand{\cF}{{\mathcal F}}
\newcommand{\cH}{{\mathcal H}}
\newcommand{\cM}{{\mathcal M}}
\newcommand{\cP}{{\mathcal P}}\newcommand{\cQ}{{\mathcal Q}}\newcommand{\cR}{{\mathcal R}}
\newcommand{\cV}{{\mathcal V}}\newcommand{\cX}{{\mathcal X}}
\newcommand{\fP}{{\mathfrak P}}
\newcommand{\fh}{{\mathfrak h}}
\newcommand{\fr}{{\mathfrak r}}
\newcommand{\fv}{{\mathfrak v}}\newcommand{\fw}{{\mathfrak w}}
\newcommand{\BC}{{\mathbb C}}
\newcommand{\BN}{{\mathbb N}}
\newcommand{\BR}{{\mathbb R}}
\newcommand{\DSE}{{\mathds E}}
\newcommand{\DSP}{{\mathds P}}
\newcommand{\dsone}{{\mathds 1}}
\newcommand{\sD}{{\mathscr D}}
\newcommand{\sI}{{\mathscr I}}
\newcommand{\sfc}{{\mathsf c}}
\newcommand{\sfd}{{\mathsf d}}\newcommand{\sfe}{{\mathsf e}}\newcommand{\sff}{{\mathsf f}}
\newcommand{\sfs}{{\mathsf s}}
\newcommand{\IN}{\BN}\newcommand{\IR}{\BR}\newcommand{\IC}{\BC}
\newcommand{\N}{\BN}\newcommand{\R}{\BR}\newcommand{\C}{\BC}
\newcommand{\PP}{\DSP}\newcommand{\EE}{\DSE}
\newcommand{\hs}{\fh}\newcommand{\HS}{\cH}\newcommand{\VS}{\cV}
\newcommand{\eps}{\varepsilon}\newcommand{\ph}{\varphi}
\renewcommand{\i}{\sfi}\newcommand{\Id}{\dsone} \renewcommand{\d}{\sfd}
\newcommand{\ran}{\operatorname{ran}}\newcommand{\res}[1]{\!\upharpoonright_{#1}}
\newcommand{\supp}{\operatorname{supp}}\DeclareMathOperator*{\essinf}{ess\,inf}
\newcommand{\spd}{\sigma_{\sfd}}\newcommand{\spess}{\sigma_{\sfe\sfs\sfs}}
\DeclareMathOperator*{\slim}{s-lim}
\newcommand{\wt}[1]{\widetilde{#1}}\renewcommand{\bar}[1]{\overline{#1}}
\DeclareFontFamily{U}{mathx}{\hyphenchar\font45}
\DeclareFontShape{U}{mathx}{m}{n}{
	<5> <6> <7> <8> <9> <10>
	<10.95> <12> <14.4> <17.28> <20.74> <24.88>
	mathx10
}{}
\DeclareSymbolFont{mathx}{U}{mathx}{m}{n}
\DeclareMathAccent{\widecheck}{0}{mathx}{"71}
\DeclareMathAccent{\wideparen}{0}{mathx}{"75}
\DeclareFontFamily{OMX}{MnSymbolE}{}
\DeclareFontShape{OMX}{MnSymbolE}{m}{n}{
	<-6>  MnSymbolE5
	<6-7>  MnSymbolE6
	<7-8>  MnSymbolE7
	<8-9>  MnSymbolE8
	<9-10> MnSymbolE9
	<10-12> MnSymbolE10
	<12->   MnSymbolE12}{}
\DeclareSymbolFont{mnlargesymbols}{OMX}{MnSymbolE}{m}{n}
\DeclareMathDelimiter{\llangle}{\mathopen}{mnlargesymbols}{'164}{mnlargesymbols}{'164}
\DeclareMathDelimiter{\rrangle}{\mathclose}{mnlargesymbols}{'171}{mnlargesymbols}{'171}
\DeclareMathDelimiter{\lsem}{\mathopen}{mnlargesymbols}{'102}{mnlargesymbols}{'102}
\DeclareMathDelimiter{\rsem}{\mathclose}{mnlargesymbols}{'107}{mnlargesymbols}{'107}
\DeclareMathDelimiter{\langlebar}{\mathopen}{mnlargesymbols}{'152}{mnlargesymbols}{'152}
\DeclareMathDelimiter{\ranglebar}{\mathclose}{mnlargesymbols}{'157}{mnlargesymbols}{'157}
\DeclareMathDelimiter{\lWavy}{\mathopen}{mnlargesymbols}{'137}{mnlargesymbols}{'137}
\DeclareMathDelimiter{\rWavy}{\mathopen}{mnlargesymbols}{'137}{mnlargesymbols}{'137}
\newcommand{\FGamma}{\Gamma}
\newcommand{\FS}{\cF}\newcommand{\dG}{\sfd\FGamma}\newcommand{\ad}{a^\dagger}
\def\titlename{\scshape  FKN Formula and Ground State Energy for the Spin Boson Model with External Magnetic Field}
\title{\LARGE\scshape  FKN Formula and Ground State Energy\\ for the Spin Boson Model\\ with External Magnetic Field}
\newcommand{\shortauthors}{D. Hasler, B. Hinrichs, O. Siebert}
\author{David Hasler\thanks{\texttt{david.hasler@uni-jena.de}}}
\author{Benjamin Hinrichs\thanks{\texttt{benjamin.hinrichs@uni-jena.de}}}
\affil{Friedrich-Schiller-Universität Jena\newline{\small Department of Mathematics\\  Ernst-Abbe-Platz 2\\  07743 Jena\\ Germany}\vspace*{.5em}}
\author{Oliver Siebert\thanks{\texttt{oliver.siebert@uni-jena.de}}\thanks{Present Affiliation: Eberhard-Karls-Universität Tübingen, Institute of Mathematics, An der Morgenstelle 10, D-72076 Tübingen, Germany}}
\affil{\'Ecole polytechnique f\'ed\'erale de Lausanne\newline{\small Institute of Mathematics\\ Station 8\\ CH-1015 Lausanne\\ Switzerland}
}
\newcommand{\one}{{\mathds{1}}}
\renewcommand{\i}{{\mathsf{i}}}
\newcommand{\Htm}{\widetilde{H}}
\newcommand{\Ht}{\widetilde{H}}
\newcommand{\HE}{{\mathcal E}}
\newcommand{\Od}{{\Omega_\downarrow}}
\newcommand{\QE}{{\cQ_{\mathsf E}}}
\newcommand{\SE}{{\Sigma_{\mathsf E}}}
\newcommand{\pE}{{\phi_{\mathsf E}}}
\newcommand{\pEK}{{\phi_{\mathsf E,K}}}
\newcommand{\mE}{{\mu_{\mathsf E}}}
\newcommand{\oE}{{\omega_{\mathsf E}}}
\newcommand{\EEE}{\EE_{\mathsf E}}
\newcommand{\Df}{\sD_\sff}
\newcommand{\s}{{\mathsf s}}
\newcommand{\os}{\otimes_\s}
\newcommand{\bn}[1]{\llangle{#1}\rrangle}
\renewcommand{\spd}{\sigma_{\sf disc}}
\newcommand{\chr}[1]{\mathbf{1}_{#1}}
\begin{document}
	
	\maketitle\thispagestyle{empty}\vspace*{-2em}
	\begin{abstract}\noindent
		We consider the spin boson model with external magnetic field. We prove a path integral formula for the 
	heat kernel, known as Feynman-Kac-Nelson (FKN) formula. We use this path integral representation to express the  
	ground state energy as a stochastic integral.  Based on this connection,
	we determine the expansion coefficients of the ground state energy with respect to the magnetic field strength and express them in terms of correlation functions of a continuous Ising model. 
	From a recently proven correlation inequality, we can then deduce that the second order derivative is finite.
	As an application, we show existence of ground states in infrared-singular situations.
	\end{abstract}

\section{Introduction}

In this paper, we study  the spin boson model with external magnetic field.  This model describes the interaction of a two-level quantum mechanical system with a boson  field in presence of a constant external magnetic field. We derive a Feynman-Kac-Nelson (FKN) formula for this model, which  relates  expectation values of the semigroup generated by  the Hamilton operator  to the expectation value of a Poisson-driven jump process and a  Gaussian random process   indexed by a real Hilbert space obtained by an Euclidean extension of the dispersion relation of the bosons.  
Especially, when calculating  expectation values with respect  to the ground state of the free Hamiltonian, one can  explicitly integrate out the boson  field  and obtain  expectations 
only with respect to  the jump process. This allows us to express the ground state energy and its derivatives in terms of correlation functions of a continuous Ising model, provided a gap assumption is satisfied.
As an application, we show  the existence of ground states for the spin boson model  in the case of massless bosons for infrared singular interactions,
using a recent correlation bound and a regularization procedure.

The history of FKN-type theorems 
dates back to   the work of Feynman and Kac \cite{Feynman.1954,Kac.1951}. Such functional integral respresentations were used to study the spectral properties of models in quantum field theory by Nelson \cite{Nelson.1973}. Since then, many authors have used this approach to study models of non-relativistic quantum field theory, see for example \cite{GlimmJaffe.1985,GlimmJaffe.1987,Spohn.1987,FeffermanFroehlichGraf.1997,Hiroshima.1997,BetzHiroshimaLorincziMinlosSpohn.2002,BetzSpohn.2005,HiroshimaLorinczi.2008,BetzHiroshima.2009} and references therein.
The spin boson model without an external magnetic field has been investigated using this approach in \cite{SpohnDuemcke.1985,FannesNachtergaele.1988,Abdessalam.2011} and recently in \cite{HirokawaHiroshimaLorinczi.2014}. In \cite{Spohn.1989} path measures for the spin boson model with magnetic field were studied  by means of Gibbs measures.
In this paper, we extend the FKN formula for the spin boson model   to external magnetic fields.

This paper is structured as follows.
\Cref{sec:modelandres} is devoted to the definition of the spin boson model and the statement of our main results. We start out with a rigorous definition of the spin boson Hamiltonian with external magnetic field as a selfadjoint lower-semibounded operator in \cref{subsec:SBmodel}. In \cref{subsec:FKNformula}, we then describe its probabilistic description through the FKN formula stated in \cref{thm:feynmankac} and reduce the degrees of freedom to study expectation values with respect to the ground state of the free operator as expectation values of a continuous Ising model in \cref{cor:intout}. In \cref{subsec:GSenergy}, we then use the well-known connection between expectation values of the semigroup and the ground state energy to express the derivatives of the ground state energy with respect to the magnetic field strength as correlation functions of this   continuous Ising model, under the assumption of massive bosons.
The proofs of the results presented in \cref{sec:modelandres} are given in \cref{sec:proof}.

In \cref{sec:suszbound}, we then apply our results and prove  \cref{thm:gsexists}. Explicitly, we use the recent result from \cite{HaslerHinrichsSiebert.2021a} to prove the existence of ground states of the spin boson Hamiltonian with vanishing external magnetic field. Our proof especially includes the case of massless bosons with  infrared-singular coupling. 

The article is accompanied by a series of appendices.  In \cref{app:integrals,app:faa,app:FQ}, we present some essential technical requirements for our proofs, including standard Fock space properties in \cref{app:Fockspace} and a construction of the so-called $\cQ$-space in \cref{appsec:Q}. In \cref{app:massive}, we give a proof for the existence of ground states at arbitrary external magnetic field in the case of massive bosons, a case which to our knowledge is not covered in the literature.

\subsubsection*{\textsc{ General Notation}}

{\em $L^2$-spaces:} For a measure space $(\cM,d\mu)$ and a real or complex Hilbert space $\hs$, we denote by $L^2(\cM,d\mu;\hs)$ the real or complex Hilbert space of  square-integrable $\hs$-valued measurable functions on $\cM$, respectively. If $\hs=\IC$, we write for simplicity $L^2(\cM,d\mu)=L^2(\cM,d\mu;\IC)$. Further, we assume $\IR^d$ for any $d\in\IN$ to be equipped with the Lebesgue measure without further mention.

\bigskip \noindent
{\em Characteristic functions:}
For $A\subset X$, we define the function $\chr{A}:X\to\IR$ with $\chr A(x)=1$, if $x\in A$, and $\chr A(x)=0$, if $x\notin A$.

\section{Model and Results}\label{sec:modelandres}


\subsection{Spin Boson Model with External Magnetic Field}\label{subsec:SBmodel}

In this \lcnamecref{subsec:SBmodel}, we give a precise definition of the spin boson Hamiltonian with external magnetic field and prove that it defines a selfadjoint lower-semibounded operator.

Let us recall the standard Fock space construction from the Hilbert space perspective. Textbook expositions on the topic can, for example, be found in \cite{ReedSimon.1975,Parthasarathy.1992,BratteliRobinson.1996,Arai.2018}. 

Throughout, we assume $\hs$ to be a complex Hilbert space.
Then, we define the bosonic Fock space over $\hs$ as
\begin{equation}
	\FS(\hs) = \IC\oplus \bigoplus_{n\in\IN} \otimes^n_{\mathrm{s}}\hs,
\end{equation}
where $\otimes^n_{\mathrm{s}}\hs$ denotes the $n$-fold symmetric tensor product of Hilbert spaces. 
We write Fock space vectors as sequences $\psi=\left(\psi^{(n)}\right)_{n\in\IN_0}$ with $\psi^{(0)}\in\IC$ and $\psi^{(n)}\in\otimes^n_{\mathrm{sym}}\hs$. Especially, we define the Fock space vacuum $\Omega=(1,0,0,\ldots)$.

For a self-adjoint operator $A$, let the (differential) second quantization operator $\dG(A)$ on $\FS(\hs)$ be the operator
\begin{equation}
	\dG(A) = \bar{0\oplus \bigoplus_{n\in\IN}\sum_{k=1}^{n}(\otimes^{k-1}\Id)\otimes A \otimes (\otimes^{n-k-1}\Id)},
\end{equation}
where $\bar{(\cdot)}$ denotes the operator closure.
Next, if $\fv$ is another complex Hilbert space and $B:\hs\to\fv$ is a contraction operator (i.e., $\|B\|\le 1$), the second quantization operator $\Gamma(B):\FS(\hs)\to\FS(\fv)$ is given as
\begin{equation}\label{def:G}
	\Gamma(B) = 1\oplus\bigoplus_{n\in\IN}\otimes^nB.
\end{equation}
Furthermore, for $f\in\hs$, we define the creation and annihilation operators $a^\dag(f)$ and $a(f)$ as the closed linear operators acting on pure tensors as
\begin{equation}\label{def:creann}
	\begin{aligned}
		& a(f)g_1\os\cdots\os g_n = \frac{1}{\sqrt{n}}\sum_{k=1}^n\braket{f,g_k}g_1\os\cdots\widehat{g_k}\cdots\os g_n,\\
		& a^\dag(f)g_1\os\cdots \os g_n = \sqrt{n+1}f\os g_1\os\cdots\os g_n,
	\end{aligned}
\end{equation}
where $\os$ denotes the symmetric tensor product and $\ \widehat{\cdot}\ $ in the first line means that the corresponding entry is omitted. Note that $\ad(f)$ is the adjoint of $a(f)$. We introduce the field operator as
\begin{equation}\label{def:field}
	\ph (f) = \frac{1}{\sqrt 2} \bar{a(f)+a^\dag (f)}.
\end{equation}
In \cref{app:Fockspace}, we provide a variety of well-known properties of the operators defined above, which will be used throughout this article. From now on, we also write $\FS=\FS(L^2(\IR^d))$.

To define the spin boson Hamiltonian with external magnetic field, let $\sigma_x,\sigma_y,\sigma_z$ denote the $2\times 2$-Pauli matrices
\begin{equation}
	\sigma_x = \begin{pmatrix} 0 & 1 \\ 1 & 0\end{pmatrix},
	\qquad 
	\sigma_y = \begin{pmatrix} 0 & -\i \\ \i & 0\end{pmatrix}
	\qquad
	\sigma_z = \begin{pmatrix} 1 & 0 \\ 0 & -1\end{pmatrix}.
\end{equation}
We consider the Hamilton operator
\begin{equation}\label{def:H}
	H(\lambda,\mu) = \sigma_z\otimes\Id + \Id\otimes \dG(\omega) + \sigma_x\otimes(\lambda \ph(v)+\mu\Id)
	\qquad\mbox{acting on}
	\ \HS=\IC^2\otimes \FS.
\end{equation}
%
To prove that the expression \cref{def:H} defines a selfadjoint lower-semibounded operator, we need the following assumptions.
\begin{hyp}\label{hyp:sbmin}\ 
	\begin{enumhyp}
		\item $\omega:\IR^d\to[0,\infty)$ is measurable and has positive values almost everywhere.
		\item $v\in L^2(\IR^d)$ satisfies $\omega^{-1/2}v\in L^2(\IR^d)$.
	\end{enumhyp}
\end{hyp}
\begin{lem}\label{lem:sbmin}
	Assume \cref{hyp:sbmin} holds. Then
	the operator $H(\lambda,\mu)$ given by \cref{def:H} is selfadjoint and lower-semibounded on the domain $\cD(\Id\otimes \dG(\omega))$ for all values of $\lambda,\mu\in\IR$.
\end{lem}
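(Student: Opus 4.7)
My plan is a routine application of the Kato--Rellich theorem, splitting $H(\lambda,\mu)=H_0+V$ with
\begin{equation*}
H_0 = \sigma_z\otimes\Id + \Id\otimes\dG(\omega), \qquad V = \sigma_x\otimes\bigl(\lambda\,\ph(v)+\mu\Id\bigr),
\end{equation*}
and showing that $V$ is infinitesimally $H_0$-bounded. Since the Pauli matrices are bounded and $\dG(\omega)$ is selfadjoint and nonnegative on $\cD(\dG(\omega))$ (as $\omega\geq 0$ almost everywhere under \cref{hyp:sbmin}(i)), $H_0$ is selfadjoint and lower-semibounded on $\cD(\Id\otimes\dG(\omega))$ by the usual tensor-product argument (it is simply the operator sum of a bounded selfadjoint operator and a nonnegative selfadjoint one).

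The main ingredient is the standard relative bound for the field operator with respect to $\dG(\omega)^{1/2}$, which is exactly what \cref{hyp:sbmin}(ii) is designed for. From the bounds
\begin{equation*}
\|a(f)\psi\|\leq \|\omega^{-1/2}f\|\,\|\dG(\omega)^{1/2}\psi\|,\qquad \|\ad(f)\psi\|^2\leq \|\omega^{-1/2}f\|^2\,\|\dG(\omega)^{1/2}\psi\|^2+\|f\|^2\|\psi\|^2,
\end{equation*}
valid for $\psi\in\cD(\dG(\omega)^{1/2})$ with $f,\omega^{-1/2}f\in L^2(\IR^d)$ (these are collected in \cref{app:Fockspace}), one obtains
\begin{equation*}
\|\ph(v)\psi\|\leq C\bigl(\|\omega^{-1/2}v\|+\|v\|\bigr)\bigl(\|\dG(\omega)^{1/2}\psi\|+\|\psi\|\bigr)
\end{equation*}
on $\cD(\dG(\omega)^{1/2})\supset\cD(\dG(\omega))$. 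Tensoring with $\sigma_x$ does not change operator norms, so the same estimate holds for $V-\sigma_x\otimes\mu\Id$ on $\cD(\Id\otimes\dG(\omega))$.

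To turn this into an infinitesimal bound relative to $H_0$, I use the elementary inequality $t^{1/2}\leq\varepsilon t+\tfrac{1}{4\varepsilon}$ for $t\geq 0$ and $\varepsilon>0$, applied to $t=\dG(\omega)$ via the spectral theorem, which yields $\|\dG(\omega)^{1/2}\psi\|\leq\varepsilon\|\dG(\omega)\psi\|+C_\varepsilon\|\psi\|$. Combined with the preceding estimate and the fact that $\sigma_z\otimes\Id$ is bounded (so $\|\Id\otimes\dG(\omega)\psi\|\leq\|H_0\psi\|+\|\psi\|$), this shows $V$ is infinitesimally $H_0$-bounded. Kato--Rellich then gives that $H(\lambda,\mu)=H_0+V$ is selfadjoint and lower-semibounded on $\cD(H_0)=\cD(\Id\otimes\dG(\omega))$.

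I do not expect any genuine obstacle here: the hypothesis $\omega^{-1/2}v\in L^2$ is tailored precisely to make $\ph(v)$ infinitesimally $\dG(\omega)$-bounded, and all remaining pieces (tensor-product domains, the spectral-theorem inequality) are standard. The only small care needed is to keep track of the tensor structure when invoking the relative bound for $\ph(v)$, but this is purely cosmetic since every operator appearing in $V$ acts either trivially or boundedly on the $\IC^2$ factor.
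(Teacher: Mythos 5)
Your proof is correct and follows essentially the same route as the paper's: decompose $H(\lambda,\mu)$ into the free part $H_0$ and the interaction $V$, invoke the standard field-operator bound (\cref{lem:Fockprops.standardbound}) to get infinitesimal $H_0$-boundedness of $V$, and apply Kato--Rellich. The paper simply cites the lemma for the relative bound rather than spelling out the $a$/$a^\dag$ estimates and the inequality $t^{1/2}\le\varepsilon t+\tfrac{1}{4\varepsilon}$, but the content is identical.
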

\begin{proof}
	As a sum of strongly commuting selfadjoint and lower-semibounded operators $H(0,0)$ is selfadjoint and lower-semibounded, cf. \cref{lem:Fockprops.pos}. Further, by \cref{lem:Fockprops.standardbound}, with $A=\omega$, and the boundedness of $\sigma_x$, the operator $\sigma_x\otimes (\lambda \ph(v)+\mu\Id)$ is infinitesimally bounded with respect to $H(0,0)$. Hence, the statement follows from the Kato-Rellich theorem (\cite[Theorem X.12]{ReedSimon.1975}).
\end{proof}

\subsection{Feynman-Kac-Nelson Formula}\label{subsec:FKNformula}

In this \lcnamecref{subsec:FKNformula}, we move to a probabilistic description of the spin boson model. Except for \cref{lem:jprops}, all statements are proved in \cref{sec:FKNproof}.

The spin part can be described by a jump process, which we construct here explicitly.
To that end, let $(N_t)_{t\ge 0}$ be a Poisson process with unit intensity, i.e., a stochastic process with state space $\IN_0$, stationary  independent increments,  and satisfying
\begin{equation}\label{eq:poisson}
	\PP[N_t=k] = e^{-t}\frac{t^k}{k!} \qquad\mbox{for}\ k\in\IN_0,\ t\ge 0,
\end{equation}
realized on some measurable  space $\Omega$. We refer the reader to \cite{Billingsley.1999}  for a concrete realization of $\Omega$.
Moreover, we can choose $\Omega$ such  that $N_t(\omega)$ is  right-continuous for all $\omega \in \Omega$, see for example \cite[Section 23]{Billingsley.2012}.
Further, let $B $ be a Bernoulli random variable with $\PP[B=1]=\PP[B=-1]=\frac 12$,  which we realize on the space  $\{-1, 1\}$.
Then, we define the jump process $( \widetilde{X}_t)_{t\ge 0}$ on the product space  $\Omega \times \{-1 , 1\}$ (equipped with the product measure) by 
\begin{equation}\label{def:X}
	\widetilde{X}_t(\omega,b) = B(b)(-1)^{N_t(\omega)} , \quad (\omega, b) \in \Omega \times \{ - 1 , 1 \} . 
\end{equation}
To fix a suitable  measure space to  work with, we use  the law of the   process  $(\widetilde{X}_t)_{t \geq 0}$.
That is, we  realize  the stochastic process     on the space  
\begin{equation}
	\sD = \{x:[0,\infty)\to\{\pm 1\} :  x\ \mbox{is right-continuous}\}   , 
\end{equation}
where  we equip $\sD$ with the $\sigma$-algebra generated by the projections $\pi_t(x)=x_t$,  $t \geq 0$. 
The measure, $\mu_X$,    on $\sD$  is  then   given by the pushforward with respect to the map 
\begin{align*} 
 \widetilde{X} : \ 
   \Omega  \times \{ -1 , 1 \}   \to  \sD,\qquad 
    (\omega, b)  \mapsto ( t \mapsto  \widetilde{X}_t(\omega, b)   )  ,
\end{align*} 
for which it is straightforward to see that it is measurable. 
We define the process $X_t(x) = x_t$ for $x \in \sD$, $t \geq 0$. It follows by construction that the stochastic processes  $X_t$ and   $\widetilde{X}_t$ are equivalent, in the sense that they have the same finite-dimensional distributions. 
 For random variables $Y$ on the measure space $(\sD,\mu_X)$, we define 
\[\EE_X[Y] = \int_{\sD} Y d\mu_X.\]
We note that by the construction \cref{eq:poisson}, the paths of $X$ $\mu_X$-almost surely have only finitely many jumps in any compact interval. We denote the set of all such paths by $\Df$. The property $\mu_X(\Df)=1$ can alternatively also be deferred from the theory of continuous-time Markov processes, cf. \cite{Resnick.1992,Liggett.2010}.

We now want to give a probabilistic description of the bosonic field.
To that end, we define the Euclidean dispersion relation $\oE:\IR^{d+1}\to[0,\infty)$ as $\oE(k,t)=\omega^2(k)+t^2$ and the Hilbert space of the Euclidean field as
\begin{equation}
	\HE = L^2(\IR^{d+1},\oE^{-1}(k,t)d(k,t)).
\end{equation}
Let $\pE$ be the Gaussian random variable indexed by the real Hilbert space
\begin{equation}
	\cR=\{f\in\HE: f(k,t)=\bar{f(-k,-t)}\}
\end{equation}
on the (up to isomorphisms unique) probability space $(\QE,\SE,\mE)$ and denote expectation values w.r.t. $\mE$ as $\EEE$. For the convencience of the reader, we have described a possible explicit construction in \cref{appsec:Q}.
We note that the complexification $\cR_\IC$ is unitarily equivalent to $\cE$, by the map $(f,g)\mapsto f+\i g$, and hence $\FS(\cE)$ and $L^2(\QE)$ are unitarily equivalent, by \cref{prop:Q}.

For $t\in\IR$, we define
\begin{equation}\label{def:jt}
	j_tf(k,s) = \frac{e^{-\i ts}}{\sqrt \pi}\omega^{1/2}(k)f(k).
\end{equation}

\begin{lem}\label{lem:jprops}\ 
	\begin{enumlem}
		\item\label{jpart:isom} \cref{def:jt} defines an isometry
		$j_t:L^2(\IR^d)\to\HE$ for any $t\in\IR$.
		\item\label{jpart:real} If, for almost all $k\in\IR^d$, $f\in L^2(\IR^d)$ satisfies $f(k)=\bar{f(-k)}$ and $\omega(k)=\omega(-k)$, then $j_tf\in\cR$.
		\item\label{jpart:exp} $j_s^*j_t = e^{-|t-s|\omega}$ for all $s,t\in\IR$.
	\end{enumlem}
\end{lem}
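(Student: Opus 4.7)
The proof splits into three short computations, all of which reduce to the identity
\[ \int_{\IR} \frac{e^{\i \tau s}}{\omega^2(k)+s^2}\,\rmd s \;=\; \frac{\pi}{\omega(k)}\, e^{-|\tau|\omega(k)}, \qquad \tau\in\IR, \]
(valid pointwise in $k$ where $\omega(k)>0$, and well-defined by \cref{hyp:sbmin}~(i) almost everywhere). This is the one analytic ingredient; everything else is bookkeeping.

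For \cref{jpart:isom}, I would just write out
\[ \|j_t f\|_\HE^2 = \int_{\IR^d}\!\!\int_\IR \frac{\omega(k)|f(k)|^2}{\pi(\omega^2(k)+s^2)}\,\rmd s\,\rmd k, \]
apply Fubini together with the displayed identity specialised to $\tau=0$ to evaluate the inner integral as $\pi/\omega(k)$, and conclude $\|j_tf\|_\HE^2 = \|f\|_{L^2}^2$. Note this in particular shows $j_tf\in\HE$, so the map is well defined.

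For \cref{jpart:real}, I would check the reality condition pointwise: using $\omega(k)=\omega(-k)$ and the assumed conjugation symmetry of $f$, together with $\omega^{1/2}(k)\in\IR_{\ge 0}$,
\[ \overline{(j_tf)(-k,-s)} = \overline{\frac{e^{\i ts}}{\sqrt\pi}\omega^{1/2}(-k) f(-k)} = \frac{e^{-\i ts}}{\sqrt\pi}\omega^{1/2}(k)f(k) = (j_tf)(k,s), \]
which places $j_tf\in\cR$.

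For \cref{jpart:exp}, I would first compute the adjoint from the definition of the $\HE$-inner product: for $g\in\HE$ and $f\in L^2(\IR^d)$,
\[ \braket{g, j_t f}_\HE = \int_{\IR^d}\!\!\int_\IR \frac{\overline{g(k,s)}\, e^{-\i ts}\omega^{1/2}(k) f(k)}{\sqrt\pi(\omega^2(k)+s^2)}\,\rmd s\,\rmd k, \]
so that
\[ (j_t^* g)(k) \;=\; \frac{\omega^{1/2}(k)}{\sqrt\pi}\int_\IR \frac{e^{\i ts}}{\omega^2(k)+s^2}\,g(k,s)\,\rmd s. \]
Substituting $g=j_tf$ and invoking Fubini (justified by the finiteness already established in \cref{jpart:isom}) yields
\[ (j_s^* j_t f)(k) = \frac{\omega(k)}{\pi}\, f(k) \int_\IR \frac{e^{\i(s-t)s'}}{\omega^2(k)+s'^2}\,\rmd s', \]
and the displayed identity with $\tau = s-t$ collapses this to $e^{-|t-s|\omega(k)}f(k)$, i.e.\ $j_s^*j_t = e^{-|t-s|\omega}$ as operators on $L^2(\IR^d)$.

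There is no real obstacle here; the only point that requires a moment's care is justifying Fubini in \cref{jpart:exp} for general (not necessarily nonnegative) $f$, which is handled by the $L^2$-bound from \cref{jpart:isom} applied to $|f|$, or equivalently by first verifying the identity on the dense subset of $f$ with $\omega^{-1/2}f\in L^2$ and extending by continuity using \cref{jpart:isom}.
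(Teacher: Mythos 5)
Your proof is correct and follows essentially the same route as the paper: the single analytic ingredient is the Lorentzian Fourier-transform identity $\int_\IR e^{-\i\tau s}(\omega^2+s^2)^{-1}\,ds = \pi\omega^{-1}e^{-|\tau|\omega}$, from which the paper extracts all three claims at once by computing $\braket{j_sf,j_tg}_\HE$ (part (i) is the case $s=t$, $f=g$; part (iii) follows by duality/polarization; part (ii) is a pointwise check). You spell out the same calculation in three separate steps, which is fine.
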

\begin{proof}
	The statements follow by the direct calculation
	\[ \braket{j_sf,j_tg}_\HE = \int_{\IR^d}\bar{f(k)}g(k)\int_{\IR}e^{-\i(t-s)\tau}\frac{\omega(k)}{\omega^2(k)+\tau^2}\frac{d\tau}{\pi}dk = \int_{\IR^d}\bar{f(k)}e^{-|t-s|\omega(k)}g(k)dk.\qedhere \]
\end{proof}
\begin{rem}
	In the literature \cref{def:jt} is often defined via the Fourier transform $\widecheck{j_tf} = \delta_t\otimes \widecheck f$.
\end{rem}

We set
\begin{equation}\label{def:It}
	\wt{I}_t:\FS\to L^2(\QE), \qquad \psi\mapsto \Theta_{\cR}\Gamma(j_t)\psi,
\end{equation}
where $\Theta_{\cR}$ denotes the Wiener-It\^o-Segal isomorphism introduced in  \cref{prop:Q} and $\Gamma$ is the second quantization of the contraction operator $j_t$, as defined in \cref{def:G}.
Further, we define the isometry $\iota :  \IC^2 \to  L^2( \{ \pm 1 \}  ,  \mu_{1/2} )$, with $\mu_{1/2}(\{ s \} ) = \frac{1}{2}$  for $s \in \{\pm1\}$,  by 
\begin{equation}\label{def:iota}
	(\iota \alpha)(+1)=\sqrt 2 \alpha_1 \text{ and }  (\iota \alpha)(-1)=\sqrt 2 \alpha_2 ,
\end{equation}
where $\alpha_i$ denotes the $i$--th entry of the vector $\alpha\in\IC^2$.
We define the map  $I_t :=  \iota \otimes \wt{ I}_t$, where 
$$I_t  : \HS = \C^2 \otimes \FS \to   L^2( \{ \pm 1 \}  ,  \mu_{1/2} ) \otimes L^2(\QE) \cong  L^2( \{ \pm 1 \} ,  \mu_{1/2} ;  L^2(\QE)   ) . $$
%
To formulate the Feynman-Kac-Nelson (FKN) formula, it will be suitable to work with the following transformed Hamilton operator, which is unitary equivalent to $H(\lambda,\mu)$ up to a constant multiple of the identity.
Explicitly,
we apply the unitary
\begin{equation}\label{def:unitary}
	U = e^{\i\frac{\pi}{4}\sigma_y} = \frac{1}{\sqrt 2}\begin{pmatrix}1 & 1\\ -1 & 1\end{pmatrix}
\end{equation}
and define the transformed Hamilton operator
\begin{equation}\label{def:Ht}
	\Ht(\lambda,\mu) = \Id + (U\otimes\Id)H(\lambda,\mu)(U\otimes\Id)^* = (\Id-\sigma_x)\otimes\Id + \Id\otimes\dG(\omega)+\sigma_z\otimes(\lambda\ph(v)+\mu\Id),
\end{equation}
where we used $U\sigma_zU^*=-\sigma_x$ and $U\sigma_x U^*=\sigma_z$.

Our result holds under the following assumptions.
\begin{hyp}\label{hyp:feynmankac} Assume \cref{hyp:sbmin} and the following:
	\begin{enumhyp}
		\item $\omega(k)=\omega(-k)$ for almost all $k\in\IR^d$.
		\item $v$ has real Fourier transform, i.e., $v(k)=\bar{v(-k)}$ for almost all $k\in\IR^d$.
	\end{enumhyp}
\end{hyp}
%
%
We are now ready to state the FKN formula for the
spin boson model with external magnetic field.
\begin{thm}[FKN Formula]\label{thm:feynmankac}
	Assume \cref{hyp:feynmankac} holds. Then, 
	for all $\Phi,\Psi\in\HS$ and $\lambda,\mu\in\IR$, we have
	\[\Braket{\Phi,e^{-T\Htm(\lambda,\mu)}\Psi} = \EE_X\EEE\left[\bar{I_0\Phi(X_0)}e^{{-}\lambda\int_0^T\pE\left( j_t v\right)X_tdt {-} \mu\int_0^T X_tdt}I_T\Psi(X_T)\right].\]
\end{thm}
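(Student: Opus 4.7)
The plan is to combine the Trotter product formula with the isometric embeddings $I_t$ into $L^2(\{\pm 1\};L^2(\QE))$ to unfold the semigroup of $\Ht(\lambda,\mu)$ as a limit of discrete products of multiplication operators that assemble into the stochastic exponential on the right-hand side.

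\textbf{Building blocks.} I would first isolate three elementary identities. For the bosonic factor, \cref{lem:jprops} together with functoriality of $\Gamma$ gives $\wt{I}_s^*\wt{I}_t = \Gamma(j_s)^*\Gamma(j_t) = \Gamma(j_s^*j_t) = e^{-|t-s|\dG(\omega)}$, so that matrix elements of $e^{-(t-s)\dG(\omega)}$ on $\FS$ equal $L^2(\QE)$-inner products of $\wt{I}_s$- and $\wt{I}_t$-images. For the spin factor, $-(\Id-\sigma_x)$ is precisely the rate-one spin-flip generator, so under $\iota$ the operator $e^{-t(\Id-\sigma_x)}$ acts as the transition semigroup of the jump process $X_t$; iterating the Markov property, matrix elements of alternating products of $e^{-\Delta t_k(\Id-\sigma_x)}$ and multiplication operators $m_k(\sigma_z)$ equal $\EE_X[\bar{\iota\alpha(X_0)}\prod_k m_k(X_{t_k})\iota\beta(X_T)]$. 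Finally, using $j_t^*j_t=\Id$ one derives the Nelson intertwining $\wt{I}_t\ph(v)=\pE(j_tv)\wt{I}_t$, which by functional calculus extends to $\wt{I}_tF(\ph(v))=F(\pE(j_tv))\wt{I}_t$ for suitable $F$.

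\textbf{Trotter decomposition.} Write $\Ht(\lambda,\mu)=\Ht_0+V$ with $\Ht_0=(\Id-\sigma_x)\otimes\Id+\Id\otimes\dG(\omega)$ (a sum of strongly commuting selfadjoint operators) and $V=\sigma_z\otimes(\lambda\ph(v)+\mu\Id)$, which is infinitesimally $\dG(\omega)$-bounded by \cref{lem:Fockprops.standardbound}. The Kato-Trotter product formula yields
\[ e^{-T\Ht(\lambda,\mu)} \;=\; \slim_{n\to\infty}\left(e^{-\frac{T}{n}\Ht_0}\,e^{-\frac{T}{n}V}\right)^n. \]
Taking the matrix element $\Braket{\Phi,e^{-T\Ht}\Psi}$ and inserting identities $I_{t_k}^*I_{t_k}$ at $t_k=kT/n$, I would process each slice using the three identities above: $e^{-\Delta t\,\dG(\omega)}$ is absorbed into a shift of Euclidean time index on $\wt{I}$, $e^{-\Delta t(\Id-\sigma_x)}$ contributes a jump-chain transition $X_{t_k}\to X_{t_{k+1}}$, and the interaction exponential is pulled through $\wt{I}_{t_k}$ via the Nelson identity to become multiplication by $e^{-\frac{T}{n}(\lambda\pE(j_{t_k}v)+\mu)X_{t_k}}$ on $L^2(\{\pm1\};L^2(\QE))$, using that under $\iota$ the matrix $\sigma_z$ is multiplication by the spin variable.

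\textbf{Passage to the limit.} After all insertions, the matrix element equals
\[ \EE_X\EEE\!\left[\bar{I_0\Phi(X_0)}\,\exp\!\Bigl({-\tfrac{T}{n}\sum_{k=0}^{n-1}\bigl(\lambda\pE(j_{t_k}v)+\mu\bigr)X_{t_k}}\Bigr)\,I_T\Psi(X_T)\right]. \]
Now $\mu_X$-a.e.\ path lies in $\Df$ (right-continuous with finitely many jumps on $[0,T]$), and $t\mapsto \pE(j_tv)$ is $\mE$-a.s.\ continuous since $t\mapsto j_tv$ is norm continuous in $\HE$ by dominated convergence. Hence the Riemann sums converge pointwise almost surely to $\int_0^T(\lambda\pE(j_tv)+\mu)X_t\,dt$, and the interchange with the joint expectation follows from dominated convergence using $|X_t|=1$ and the Gaussian moment bound $\EEE[e^{p\pE(f)}]=e^{p^2\|f\|_\HE^2/2}$ to produce an $n$-uniform $L^p$ majorant.

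\textbf{Main obstacle.} The core technical difficulty is the interchange of the Trotter operator limit with the $L^2(\QE)$-expectation in presence of the unbounded Gaussian field $\pE(j_tv)$ inside the exponent. Producing integrable majorants uniform in $n$ via Gaussian hypercontractivity, and simultaneously ensuring strong convergence of the Trotter products on the relevant dense domain (e.g.\ $\cD(\dG(\omega))$), is where care is needed; everything else amounts to careful bookkeeping of the isometries $I_{t_k}$ against the Markov and Gaussian structure.
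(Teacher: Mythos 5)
Your outline follows the paper's general strategy (Trotter decomposition, Nelson intertwining, jump-process reduction, Riemann sums with dominated convergence), but there is a genuine gap at the very first step. You propose to apply the Trotter product formula directly to $\Ht(\lambda,\mu)=\Ht_0+V$ with $V=\sigma_z\otimes(\lambda\ph(v)+\mu\Id)$. The standard imaginary-time Trotter theorem (\cite[Theorem VIII.31]{ReedSimon.1972} and the Kato refinements) requires \emph{both} summands to be self-adjoint and lower-semibounded, so that $e^{-tA/n}$, $e^{-tB/n}$ are contractions. Here $V$ is unbounded in both directions — $\ph(v)$ has spectrum all of $\IR$ — so $e^{-tV/n}$ is not even a bounded operator, and the strong limit you invoke is not justified by the cited result. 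The paper resolves exactly this: it first truncates $\ph(v)\rightsquigarrow\ph_K(v)=\chi_K(\ph(v))$, obtaining a bounded perturbation for which Trotter applies; it then passes $N\to\infty$ to a Riemann integral, and only afterwards removes the cutoff $K\to\infty$ using strong resolvent convergence of $\Ht_K$ to $\Ht$ on one side and dominated convergence (with a Jensen-inequality majorant that is integrable by \cref{eq:expgauss}) on the other. Your proposal omits the truncation entirely and hence never earns the right to apply Trotter.

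A second, milder point: your "inserting identities $I_{t_k}^*I_{t_k}$" and "absorbed into a shift of Euclidean time index" elides the place where the Euclidean Markov property is actually used. After replacing $e^{-\Delta t\,\dG(\omega)}$ by $J_{t_{k-1}}^*J_{t_k}$ and pulling the intertwined multiplication operators through, one is left with intermediate conditional-expectation factors $E_{t_k}=J_{t_k}J_{t_k}^*$ (equivalently $\wt I_{t_k}\wt I_{t_k}^*$), which are \emph{not} the identity on $\FS(\HE)$. Eliminating them requires the Markov property $E_{[a,b]}E_tE_{[c,d]}=E_{[a,b]}E_{[c,d]}$ of \cref{lem:Iprops}, applied iteratively because the left and right vectors in each inner product land in $\operatorname{Ran}(E_{[0,t_{k-1}]})$ and $\operatorname{Ran}(E_{[t_{k},T]})$ respectively. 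You mention a Markov property only for the jump process; the analogous statement for the Euclidean field is a separate lemma that must be proved and invoked, and your sketch does not account for it.
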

We note that the integrability of the right hand side in above \lcnamecref{thm:feynmankac} follows from the identity
\begin{equation}\label{eq:expgauss}
	\EE\left[\exp(Z)\right] = \exp\left(\frac 12\EE[Z^2] \right),
\end{equation}
which holds for any Gaussian random variable $Z$ (see for example \cite[(I.17)]{Simon.1974}). We outline the argument in the \lcnamecref{rem:integrals} below.
\begin{rem}\label{rem:integrals}
%
%
%
	By \cref{def:jt}, the map $[0,T]\to \cE,t\mapsto j_tv$ is strongly continuous. Hence, by \cref{def:gauss}, the map $\IR\to L^2(\QE),t\mapsto \pE(j_tv)$ is  continuous.
	Thus, for $(x_t)_{t\ge 0}\in \Df$, the function 
	 $t \mapsto \phi_E(j_t v) x_t$ is a piecewise continuous  	 $L^2(\QE)$-valued function on compact intervals of $[0,\infty)$. Thus, the integral over $t$ exists as an $L^2(\QE)$-valued  Riemann integral $\mu_X$-almost surely. Since Riemann integrals are given as limits of sums, the  measurability with respect to the product measure  $\mu_X \otimes \mE$  follows. 
	 In fact, again fixing $x\in \Df$ and using Fubini's theorem as well as Hölder's inequality, one can prove that the integral $\int_0^T \pE(j_tv)x_tdt$ can also be calculated as Lebesgue-integral evaluated $\mE$-almost everywhere pointwise in $\QE$ with the same result. This is outlined in  \cref{app:integrals}.
	  Furthermore, $\int_0^T \pE(j_tv)x_tdt$ is a Gaussian random variable, since $L^2$-limits of linear combinations of Gaussians are Gaussian.
	 We conclude that the right hand side of the FKN formula is finite, since exponentials of Gaussian random variables  are integrable, cf. \cref{eq:expgauss}.
\end{rem}
We now want to
describe the expectation value of the semigroup associated with $H(\lambda,\mu)$ (cf. \cref{def:H}) with respect to the ground state of the free operator $H(0,0)$, by integrating out the field contribution in the expectation value. To that end, let 
\begin{equation}\label{def:vac}
	\Od = \begin{pmatrix}0\\1\end{pmatrix}\otimes \Omega
\end{equation}
and define
\begin{equation}\label{def:W}
	W(t) = \frac {1}4\int_{\IR^d}|v(k)|^2e^{-|t|\omega(k)}dk.
\end{equation}
\begin{cor}\label{cor:intout}
	Assume \cref{hyp:feynmankac} holds.
	Then, for all $\lambda,\mu\in\IR$, we have
	\[e^{-T}\Braket{\Od,e^{-TH(\lambda,\mu)}\Od} = \EE_X\left[\exp\left(\lambda^2\int_0^T\int_0^TW(t-s)X_tX_sdsdt {-} \mu\int_0^T X_tdt\right)\right].\]
\end{cor}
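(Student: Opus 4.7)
The plan is to reduce the statement to an application of the FKN formula (\cref{thm:feynmankac}) for a particular choice of $\Phi=\Psi$ and then to integrate out the Gaussian field explicitly, using that for the resulting vector the $I_t$-maps become trivial.

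First I would eliminate the unitary transformation and the shift by $\Id$ that relates $H(\lambda,\mu)$ to $\Ht(\lambda,\mu)$ in \cref{def:Ht}. Since $\Ht(\lambda,\mu)=\Id+(U\otimes\Id)H(\lambda,\mu)(U\otimes\Id)^*$, functional calculus gives
\begin{equation*}
e^{-T}\Braket{\Od,e^{-TH(\lambda,\mu)}\Od}
=\Braket{(U\otimes\Id)\Od,\,e^{-T\Ht(\lambda,\mu)}(U\otimes\Id)\Od}.
\end{equation*}
A direct matrix computation with $U$ from \cref{def:unitary} yields $(U\otimes\Id)\Od=\tfrac{1}{\sqrt 2}\binom{1}{1}\otimes\Omega$.

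Next I would apply \cref{thm:feynmankac} with $\Phi=\Psi=(U\otimes\Id)\Od$. The key observation is that $I_t$ acts trivially on this vector for every $t$: from \cref{def:G} one has $\Gamma(j_t)\Omega=\Omega$, and the Wiener-It\^o-Segal isomorphism $\Theta_{\cR}$ maps $\Omega$ to the constant function $1\in L^2(\QE)$, so $\wt I_t\Omega=1$; meanwhile, $\iota$ applied to $\tfrac{1}{\sqrt 2}\binom{1}{1}$ is identically $1$ on $\{\pm 1\}$ by \cref{def:iota}. Thus $I_t\Phi(X_t)=1$ for all $t$ and all trajectories. The FKN formula then collapses to
\begin{equation*}
e^{-T}\Braket{\Od,e^{-TH(\lambda,\mu)}\Od}
=\EE_X\EEE\!\left[\exp\!\left(-\lambda\!\int_0^T\!\pE(j_tv)X_t\,dt-\mu\!\int_0^T\!X_t\,dt\right)\right].
\end{equation*}

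To finish, I would perform the $\EEE$-integration first by Fubini (justified exactly as in \cref{rem:integrals}, with the measurability of the full integrand requiring a reference to \cref{app:integrals}). For $\mu_X$-a.e. fixed path $x\in\Df$, the random variable $Z(x):=-\lambda\int_0^T\pE(j_tv)x_t\,dt$ is a Gaussian on $(\QE,\mE)$ as an $L^2$-limit of linear combinations of $\pE(j_tv)$'s, and the $\mu$-term is deterministic in the field. Using \cref{eq:expgauss}, together with the covariance identity $\EEE[\pE(f)\pE(g)]=\tfrac12\braket{f,g}_{\cR}$ that comes from the Segal normalization $\ph=\tfrac{1}{\sqrt 2}(a+\ad)$ in \cref{def:field} (and is encoded in the construction in \cref{appsec:Q}), I compute
\begin{equation*}
\EEE\!\left[\exp Z(x)\right]=\exp\!\left(\tfrac{\lambda^2}{2}\int_0^T\!\!\int_0^T\tfrac12\braket{j_tv,j_sv}_{\cR}\,x_tx_s\,ds\,dt\right).
\end{equation*}
Now \cref{jpart:exp} gives $\braket{j_tv,j_sv}_{\cR}=\braket{v,e^{-|t-s|\omega}v}_{L^2(\IR^d)}=\int_{\IR^d}|v(k)|^2e^{-|t-s|\omega(k)}\,dk=4W(t-s)$ by \cref{def:W}, where realness of $j_tv$ (\cref{jpart:real}) ensures that the complex inner product reduces to the real one. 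Plugging this in and putting the $\mu$-term back yields the claim.

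The main technical obstacle is not the algebra but the rigorous justification of the Fubini step: one must argue that $Z(x)$ is, for $\mu_X$-a.e.\ $x$, genuinely Gaussian (not merely the exponential of a Bochner integral of Gaussians), and that the joint integrand is measurable with respect to $\mu_X\otimes\mE$. Both are addressed by the remark preceding the corollary and the computations in \cref{app:integrals}, so the argument amounts to invoking those results at the right places.
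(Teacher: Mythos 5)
Your proof is correct and follows the same route as the paper's: apply \cref{thm:feynmankac} to $\Phi=\Psi=(U\otimes\Id)\Od$, observe $I_t$ sends this vector to the constant function $1$, then evaluate the inner $\EEE$-expectation via \cref{eq:expgauss}, Fubini, the covariance $\EEE[\pE(f)\pE(g)]=\tfrac12\braket{f,g}$, and the identity $\braket{j_tv,j_sv}=4W(t-s)$ from \cref{jpart:exp}. (As a side note, you correctly use $(U\otimes\Id)\Od$, the vector that $\iota$ actually maps to the constant $1$; the paper's printed proof writes $(U^*\otimes\Id)\Od$, which appears to be a harmless typo.)
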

\begin{rem}\label{rem:Ising}
	For $(x_t)_{t\ge 0}\in \Df$, the functions $(s,t)\mapsto W(t-s)x_tx_s$ and $t\mapsto x_t$ are Riemann-integrable, since $W$ is continuous. Further, the continuity also implies that the expression on the right hand side is uniformly bounded in the paths $x$ and hence the expectation value exists and is finite by the dominated convergence theorem.
\end{rem}
\begin{rem}
	The expectation value on the right hand side can be interpreted as the partition function of a long-range continuous Ising model on $\IR$ with coupling functions $W$. This model can be obtained as a limit of a discrete Ising model with long-range interactions, see \cite{SpohnDuemcke.1985,Spohn.1989,HaslerHinrichsSiebert.2021b}.
%
\end{rem}

\subsection{Ground State Energy}
\label{subsec:GSenergy}

We are especially interested in studying the ground state energy of the spin boson model
\begin{equation}\label{def:E}
	E(\lambda,\mu) = \inf\sigma(H(\lambda,\mu)) . 
\end{equation}
 In this \lcnamecref{subsec:GSenergy}, we want to use the FKN formula from the previous section to express derivatives of the ground state energy.

Starting point of this investigation is the following well-known formula, sometimes referred to as Bloch's formula, expressing the ground state energy as expectation value of the semigroup, see for example \cite{Simon.1979}. We verify it in \cref{sec:GSenergyproof} using a positivity argument.
\begin{lem}\label{lem:Bloch}\label{lem:Bloch.1}
	Assume \cref{hyp:sbmin} holds. Then, for all $\lambda,\mu\in\IR$, \[\displaystyle E(\lambda,\mu) = -\lim_{T\to \infty}\frac 1T\ln\Braket{\Od,e^{-TH(\lambda,\mu)}\Od}.\]
\end{lem}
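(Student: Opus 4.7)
The strategy is to sandwich $a_T := -\tfrac{1}{T}\ln\braket{\Od, e^{-TH(\lambda,\mu)}\Od}$ between matching bounds. The easy direction $\liminf_{T\to\infty} a_T \ge E(\lambda,\mu)$ is immediate from the functional calculus applied to $H(\lambda,\mu) \ge E(\lambda,\mu)$, which gives $\braket{\Od, e^{-TH(\lambda,\mu)}\Od} \le e^{-TE(\lambda,\mu)}\|\Od\|^2 = e^{-TE(\lambda,\mu)}$.

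For the reverse inequality $\limsup_{T\to\infty} a_T \le E(\lambda,\mu)$, by the spectral theorem it suffices to show that $E(\lambda,\mu)$ lies in the closed support of the scalar spectral measure $\mu_\Od$ of $\Od$ relative to $H(\lambda,\mu)$; indeed, for every $\varepsilon>0$ this gives
\[
\braket{\Od, e^{-TH(\lambda,\mu)}\Od} \ge e^{-T(E(\lambda,\mu)+\varepsilon)}\,\mu_\Od\!\bigl([E(\lambda,\mu), E(\lambda,\mu)+\varepsilon]\bigr),
\]
so $\limsup a_T \le E(\lambda,\mu)+\varepsilon + O(1/T)$, and sending $\varepsilon\to 0$ finishes the proof. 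To establish the spectral-support statement, I transfer to the $L^2(\{\pm 1\}\times \QE)$ picture via the isometry $I_0\circ(U\otimes\Id)$, so that $H(\lambda,\mu)+\Id$ becomes $\Ht(\lambda,\mu)$ and $\Od$ is sent to the strictly positive constant function $\mathbf 1$. The FKN formula of \cref{thm:feynmankac}, after integrating the Gaussian field (as in the derivation of \cref{cor:intout}) and using the Markov property of the Euclidean field to reduce matrix elements between cylinder functions to a kernel on the time-$0$ fiber, shows that $e^{-T\Ht(\lambda,\mu)}$ is positivity improving in this $L^2$-representation: its kernel between any two nonzero non-negative functions is a strictly positive integral. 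A Perron-Frobenius/Faris type theorem (e.g., Reed-Simon IV, Theorem XIII.44) then provides a strictly positive ground state of $\Ht(\lambda,\mu)$ whose overlap with $\mathbf 1$ is positive, so $\inf\sigma(\Ht(\lambda,\mu))$ lies in the closed support of the spectral measure of $\mathbf 1$; transferring back yields $E(\lambda,\mu)\in\supp(\mu_\Od)$.

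The main obstacle is that the Perron-Frobenius argument as stated requires $\inf\sigma(\Ht(\lambda,\mu))$ to actually be an eigenvalue, which need not hold in the infrared-singular massless setting that is the principal case of interest in this paper. This is handled by a regularization: the replacement $\omega \rightsquigarrow \omega_n := \omega + \tfrac 1n$ places the model in the massive regime, where \cref{app:massive} provides a strictly positive ground state and hence Bloch's formula for $H_n(\lambda,\mu)$ by the argument above. Norm-resolvent convergence $H_n(\lambda,\mu)\to H(\lambda,\mu)$ together with joint continuity in $n$ and $T$ of $\braket{\Od, e^{-TH_n(\lambda,\mu)}\Od}$ and of the bottom of the spectrum then passes Bloch's formula to the limit, establishing the lemma in the full generality of \cref{hyp:sbmin}.
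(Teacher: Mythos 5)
Your easy direction ($\liminf_{T\to\infty} a_T \ge E(\lambda,\mu)$) is correct, and you correctly reduce the hard direction to showing $E(\lambda,\mu)\in\operatorname{supp}\mu_\Od$. You also correctly identify the obstruction that a Perron--Frobenius--Faris argument requires the ground state energy to be an eigenvalue, which is not guaranteed in the massless, infrared-singular setting. The problem is that the fix you propose does not close this gap.

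Your regularization step asserts that Bloch's formula passes from $H_n(\lambda,\mu)$ (with $\omega_n=\omega+\tfrac1n$) to $H(\lambda,\mu)$ via norm resolvent convergence and ``joint continuity in $n$ and $T$.'' This is an uncontrolled exchange of the two limits $T\to\infty$ and $n\to\infty$. For each fixed $n$, the Perron--Frobenius argument gives you a nonzero overlap $|\braket{\phi_n,\Od}|^2 = \mu_{\Od,n}(\{E_n\})>0$, but nothing prevents this overlap from tending to $0$ as $n\to\infty$; norm resolvent convergence gives weak convergence of the scalar spectral measures $\mu_{\Od,n}\to\mu_\Od$, under which point masses can escape and the closed support is not in general lower-semicontinuous. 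Concretely: from $\lim_T a_T^{(n)} = E_n$ and $a_T^{(n)}\to a_T$ (for fixed $T$) and $E_n\to E$, one cannot conclude $\lim_T a_T = E$ without a $T$-uniform error estimate, which your sketch does not provide. So the proposal has a genuine gap at precisely the point it flags.

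The paper avoids the eigenvalue question altogether. It proves a general abstract version (\cref{applem:Bloch2}) that requires only that $e^{-TH}$ be positivity \emph{preserving} and that the reference vector $f$ be strictly positive: for any $h$ with $c_1 f\le h\le c_2 f$, positivity preservation sandwiches $\braket{h,e^{-TH}h}$ between multiples of $\braket{f,e^{-TH}f}$, so all such $h$ have the same $E_h=\inf\operatorname{supp}\nu_h$, equal to $E_f$; if $E_f>\inf\sigma(H)$, the nonzero space $\chi_{(-\infty,E_f)}(H)L^2$ would be orthogonal to the dense linear span of these $h$, a contradiction. No ground state eigenvector, no regularization, and no limit exchange are needed. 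The positivity improvement of the transformed semigroup is established separately (\cref{lem:posimp}) via a perturbative argument \`a la Reed--Simon XIII.45 rather than your proposed FKN/Markov-kernel route; both routes should give positivity preservation, but you should redirect the downstream argument to the density step of \cref{applem:Bloch2} rather than to Perron--Frobenius.
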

The central statement of this section is that above equation carries over to the derivatives with respect to $\mu$, provided that  the ground state energy of $H(\lambda,\mu)$ is in the discrete spectrum, i.e., $E(\lambda,\mu) \in  \spd(H(\lambda,\mu)$. We note that this spectral assumption has been shown in  \cite[Theorem 1.2]{AraiHirokawa.1995} for $\mu=0$  if  $\essinf_{k\in\IR^d}\omega(k)>0$ and we extend the result to arbitrary choices of $\mu$ in \cref{app:massive}.
\begin{thm}\label{prop:partition}
	Assume \cref{hyp:sbmin} holds. Let $\lambda ,\mu_0 \in \R$ and suppose  $E(\lambda,\mu_0) \in \sigma_{\sf disc}(H(\lambda,\mu_0))$.
	Then, for all  $n \in \IN$, the following derivatives exist and satisfy 
	\begin{align*} 
		\left.	\partial_\mu^n E(\lambda,\mu) \right|_{\mu = \mu_0}  & = \lim_{ T \to \infty} 	\left. -\frac 1T\partial_\mu^n \ln    \Braket{\Od,e^{-TH(\lambda,\mu)}\Od}\right|_{\mu = \mu_0}   . 
	\end{align*} 
\end{thm}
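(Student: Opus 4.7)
My strategy is to extend $E(\lambda, \mu)$ analytically in $\mu$ via Kato's perturbation theory, decompose $\langle \Od, e^{-TH(\lambda, \mu)}\Od\rangle$ on a complex neighborhood of $\mu_0$ into a leading ground-state contribution and an exponentially small remainder, and then propagate the resulting pointwise convergence to all $\mu$-derivatives through Cauchy's integral formula combined with the Faà di Bruno expansion of $\ln$.

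\textbf{Analytic perturbation theory.} Since $\sigma_x \otimes \Id$ is bounded, $\mu \mapsto H(\lambda, \mu)$ is an entire analytic family of type (A). The hypothesis $E(\lambda, \mu_0) \in \spd(H(\lambda, \mu_0))$ places the ground state energy as an isolated eigenvalue of finite multiplicity, separated from the rest of the spectrum by a gap $g > 0$; a Perron-Frobenius argument based on the positivity-preserving representation afforded by \cref{thm:feynmankac,cor:intout} upgrades this to non-degeneracy. Kato's theorem then provides a complex neighborhood $U$ of $\mu_0$ on which the Riesz projection
\[ P(\mu) := -\frac{1}{2\pi\i}\oint_\Gamma (H(\lambda, \mu) - z)^{-1} dz \]
is analytic and rank one, and $E(\lambda, \mu)$ extends analytically to $U$, agreeing with the ground state energy on $U \cap \IR$.

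\textbf{Decomposition and gap estimate.} On $U$ one then has the analytic identity
\[ \langle \Od, e^{-TH(\lambda, \mu)}\Od\rangle = e^{-TE(\lambda, \mu)} c(\mu) + r(T, \mu), \]
with $c(\mu) := \langle \Od, P(\mu)\Od\rangle$ and $r(T, \mu) := \langle \Od, e^{-TH(\lambda, \mu)} P^\perp(\mu)\Od\rangle$; positivity via FKN ensures $c(\mu_0) > 0$. The crucial step is to prove $|r(T, \mu)| \leq C e^{-T(E(\lambda, \mu_0) + g')}$ for some $g' > 0$, uniformly on a possibly smaller neighborhood $U' \subset U$. For real $\mu$ this is the standard spectral-gap bound. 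For complex $\mu$ I would use Kato's analytic similarity transform: there is an analytic family $W(\mu)$ on $U'$ with $W(\mu_0) = \Id$ intertwining $P(\mu)$ with $P(\mu_0)$, so that $W(\mu)^{-1} H(\lambda, \mu) W(\mu)$ is block-diagonal with respect to the fixed splitting induced by $P(\mu_0)$, and on the perp-block it is a bounded analytic perturbation of the selfadjoint operator $H(\lambda, \mu_0)|_{\ran P^\perp(\mu_0)}$, whose spectrum lies in $[E(\lambda, \mu_0) + g, \infty)$. Standard semigroup bounds for selfadjoint-plus-bounded perturbations then yield the claimed estimate.

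\textbf{From function decay to derivative decay and conclusion.} Setting $R(T, \mu) := r(T, \mu) e^{TE(\lambda, \mu)}$ and shrinking $U'$ so that $|e^{TE(\lambda, \mu)}| \leq e^{T(E(\lambda, \mu_0) + g'/2)}$ on $U'$, the previous estimate gives $|R(T, \mu)| \leq C e^{-Tg'/2}$ uniformly on $U'$, hence $\partial_\mu^k R(T, \mu_0) = O(e^{-Tg'/2})$ for every $k$ by Cauchy's integral formula. From
\[ \ln\langle \Od, e^{-TH(\lambda, \mu)}\Od\rangle = -TE(\lambda, \mu) + \ln\bigl(c(\mu) + R(T, \mu)\bigr), \]
Faà di Bruno's formula (cf.\ \cref{app:faa}) expresses $\partial_\mu^n \ln(c + R)|_{\mu_0}$ as a polynomial in the $\partial_\mu^k c(\mu_0) + \partial_\mu^k R(T, \mu_0)$ for $k \leq n$ divided by powers of $(c + R)(T, \mu_0)$, which converges to $\partial_\mu^n \ln c(\mu_0)$ and is in particular bounded in $T$. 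Dividing by $-T$ and using \cref{lem:Bloch} for the leading term yields the stated identity. The principal difficulty of the whole argument is the uniform exponential bound on $r(T, \mu)$ across a full complex neighborhood of $\mu_0$, since selfadjointness is lost off the real axis; once Kato's similarity reduction handles this, the conversion to derivatives via Cauchy together with Faà di Bruno is routine.
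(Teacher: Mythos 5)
Your proposal is essentially sound and follows the same overall strategy as the paper's proof: extend $E(\lambda,\mu)$ analytically via Kato--Rellich for type (A) families, establish simplicity of the ground state via a Perron--Frobenius argument, decompose $\Braket{\Od,e^{-TH(\lambda,\mu)}\Od}$ into a ground-state contribution $e^{-TE(\lambda,\mu)}c(\mu)$ plus an exponentially small remainder, and apply Fa\`a di Bruno to the logarithm. Where you genuinely diverge from the paper is in how the $\mu$-derivatives of the remainder are controlled. The paper stays entirely on the real $\mu$-axis: it writes $Q_T(\mu) = \frac{1}{2\pi\i}\int_{\Gamma_1} e^{T(E(\lambda,\mu)-z)}(z-H(\lambda,\mu))^{-1}\,dz$ over an explicit contour $\Gamma_1$ encircling $\sigma(H(\lambda,\mu_0))\setminus\{E(\lambda,\mu_0)\}$, expands $(z-H(\lambda,\mu))^{-1}$ in a Neumann series about $\mu_0$ to justify interchanging $\partial_\mu$ with $\int_{\Gamma_1}$, and then bounds the resulting explicit expressions for $Q_T^{(k)}(\mu_0)$ using the resolvent estimate $\|(z-H(\lambda,\mu_0))^{-1}\|\le 2/\delta$ on $\Gamma_1$ together with the decay of $e^{T(E(\lambda,\mu)-z)}$. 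All of this is selfadjoint spectral theory. You instead complexify $\mu$, prove a uniform exponential bound on $r(T,\mu)$ over a complex disc around $\mu_0$, and pass to derivative bounds via Cauchy's integral formula. This is a clean strategy, but the cost is that for $\mu\notin\IR$ the operator $H(\lambda,\mu)$ is non-selfadjoint, so you need the Kato similarity transform and a perturbed-semigroup estimate of the form $\|e^{-T(A+B)}\|\le e^{-T(\inf\sigma(A)-\|B\|)}$ on the perp-block; you also have to verify that $W(\mu)^{-1}H(\lambda,\mu)W(\mu)-H(\lambda,\mu_0)$ is indeed bounded, which uses that $[H(\lambda,\mu_0),P(\mu)]=(\mu-\mu_0)[P(\mu),\sigma_x]$ is bounded. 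These are real steps you elide as ``standard,'' and they represent more machinery than the paper's argument, which avoids off-axis considerations altogether. Your route buys a conceptually elegant Cauchy-integral reduction from function bounds to derivative bounds; the paper's route buys a shorter technical path by never leaving the real axis and by controlling derivatives directly through the contour representation. One small imprecision in your write-up: the invocation of \cref{lem:Bloch} ``for the leading term'' is unnecessary --- once you have $\ln\Braket{\Od,e^{-TH(\lambda,\mu)}\Od}=-TE(\lambda,\mu)+\ln(c(\mu)+R(T,\mu))$ with $\partial_\mu^n\ln(c+R)|_{\mu_0}$ bounded in $T$, dividing by $-T$ and letting $T\to\infty$ already gives the identity, with no separate appeal to the $n=0$ case.
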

We now want to combine this observation with the FKN formula from \cref{thm:feynmankac}.
To that end, we define
\begin{equation}\label{def:ZT}
	Z_T(\lambda,\mu) = \EE_X\left[\exp\left(\lambda^2\int_0^T\int_0^T W(t-s)X_tX_sdsdt {-} \mu\int_0^TX_tdt\right)\right],
\end{equation}
with $W$ as defined in \cref{def:W} and note that
\begin{equation}\label{eq:intout0}
	Z_{T}(\lambda,\mu) = e^{-T}\Braket{\Od,e^{-TH(\lambda,\mu)}\Od},
\end{equation}
by \cref{cor:intout}.
Thus,  \cref{lem:Bloch} gives
\begin{equation}\label{eq:intout}
	E(\lambda,\mu)= - \lim_{ T \to \infty}\left(\frac 1T\ln Z_{T}(\lambda,\mu) +1 \right).
\end{equation}
We note that the stochastic integral in \cref{eq:intout} was used in \cite{Abdessalam.2011}  to show analyticity of $ \lambda 
\mapsto E(\lambda,0)$ in a neighborhood of zero. 
The next two statements express the derivatives of the ground state energy in terms of a stochastic integral. To that end, for a random variable $Y$ on $(\sD,\mu_X)$, we define  the expectation
\begin{equation}\label{def:exp}
	\bn{   Y    }_{T,\lambda,\mu} =  \frac{1}{Z_T(\lambda,\mu)} \EE_X   \left[ Y  \exp\left(\lambda^2\int_0^T\int_0^T W(t-s)X_tX_sdsdt {-} \mu\int_0^TX_tdt\right)\right] .
\end{equation}
Further,  we denote by  $\cP_n$ the set of all partitions of the set $\{1,\ldots,n\}$ and by $|M|$ the cardinality of a finite set $M$.
\begin{thm}\label{david123}  
	Assume \cref{hyp:feynmankac} holds. Let $\lambda , \mu  \in \R$ and  suppose  $E(\lambda,\mu) \in \spd(H(\lambda,\mu))$.
	Then, for all   $n \in \IN$, the following derivatives exist and satisfy 
	\begin{align*} 
		\partial_\mu^n E(\lambda, \mu) = &  \lim_{ T \to \infty}\frac{1}{T} \sum_{\fP  \in \mathcal{P}_n} (-1)^{|\fP|+n} ( |\fP|  -1)!  \prod_{B \in  \fP  } 
		\left\llangle   \left( \int_0^T X_t dt \right)^{|B|}    \right\rrangle_{T,\lambda,\mu} . 
	\end{align*}
\end{thm}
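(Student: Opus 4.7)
The plan is to reduce the statement to a routine application of Fa\`a di Bruno's formula once \cref{prop:partition} has been used to convert the $n$-th $\mu$-derivative of the ground state energy into the corresponding derivative of $\ln Z_T$. First, identity \cref{eq:intout0} rewrites the vacuum expectation as $\Braket{\Od,e^{-TH(\lambda,\mu)}\Od} = e^T Z_T(\lambda,\mu)$, so that $\ln\Braket{\Od,e^{-TH(\lambda,\mu)}\Od} = T + \ln Z_T(\lambda,\mu)$. Since $n \ge 1$, the additive $T$ is annihilated by $\partial_\mu^n$, and \cref{prop:partition}, applied with $\mu_0 = \mu$, reduces the claim to
\[
\partial_\mu^n E(\lambda,\mu) \;=\; -\lim_{T\to\infty}\frac{1}{T}\,\partial_\mu^n \ln Z_T(\lambda,\mu).
\]
It then suffices to show, for every fixed $T>0$, the identity
\[
-\frac{1}{T}\,\partial_\mu^n \ln Z_T(\lambda,\mu) \;=\; \frac{1}{T}\sum_{\fP \in \cP_n}(-1)^{|\fP|+n}(|\fP|-1)!\prod_{B\in\fP}\bn{Y_T^{|B|}}_{T,\lambda,\mu},
\]
where $Y_T(x) := \int_0^T x_t\,dt$.

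To establish this pointwise identity, I observe that $|Y_T|\le T$ on $\sD$, so dominated convergence permits differentiation under the expectation $\EE_X$ in the definition (\ref{def:ZT}) of $Z_T$, giving $\partial_\mu^k Z_T(\lambda,\mu) = (-1)^k Z_T(\lambda,\mu)\bn{Y_T^k}_{T,\lambda,\mu}$ for every $k \in \IN_0$. I then apply Fa\`a di Bruno's formula to the composition $\ln\circ Z_T(\lambda,\cdot)$. Using $(\ln)^{(m)}(z) = (-1)^{m-1}(m-1)!\, z^{-m}$ and the combinatorial identity $\sum_{B\in\fP}|B| = n$ to cancel the $Z_T^{-|\fP|}$ factor against the product $\prod_B Z_T$, I obtain
\[
\partial_\mu^n \ln Z_T(\lambda,\mu) \;=\; \sum_{\fP \in \cP_n} (-1)^{|\fP|-1+n}(|\fP|-1)! \prod_{B\in\fP}\bn{Y_T^{|B|}}_{T,\lambda,\mu}.
\]
Multiplying by $-1/T$ flips $(-1)^{|\fP|-1+n}$ to $(-1)^{|\fP|+n}$, which is the required identity; passing to the limit $T\to\infty$ then completes the proof.

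The analytic substance of the result is absorbed into \cref{prop:partition}, which provides both the existence of the derivatives in the limit $T\to\infty$ and their commutation with $\lim_T$ under the assumption $E(\lambda,\mu)\in\spd(H(\lambda,\mu))$. The derivation above is purely algebraic; the only analytic step is the elementary interchange of $\partial_\mu$ and $\EE_X$, which is immediate from $|Y_T|\le T$. I therefore anticipate no substantive obstacle beyond careful bookkeeping of the signs and the combinatorial factors in the Fa\`a di Bruno expansion.
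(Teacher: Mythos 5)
Your proposal is correct and follows essentially the same route as the paper's own proof: use \cref{prop:partition} to reduce the claim to a Fa\`a di Bruno computation of $\partial_\mu^n \ln Z_T$, differentiate $Z_T$ under $\EE_X$ via dominated convergence to get $\partial_\mu^k Z_T = (-1)^k Z_T\bn{Y_T^k}_{T,\lambda,\mu}$, and cancel the $Z_T^{|\fP|}$ factors using $\sum_{B\in\fP}|B|=n$. The only cosmetic difference is ordering---you substitute $\Braket{\Od,e^{-TH}\Od}=e^T Z_T$ before applying Fa\`a di Bruno (so the linear $T$ term is killed immediately by $\partial_\mu^n$), whereas the paper applies Fa\`a di Bruno to $\ln\Braket{\Od,e^{-TH}\Od}$ first and substitutes afterward, letting the $e^T$ factors cancel between numerator and denominator; the algebra is identical.
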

In addition, we can  express derivatives of  the ground state energy  in terms of  the so-called Ursell functions \cite{Percus.1975} or cumulants.  This allows us to use 
correlation inequalities  to prove bounds on derivatives. In fact, we will use this in \cref{cor:corr}   below to estimate the second derivative with respect to the magnetic field at zero. 
Given  random variables $Y_1,\ldots,Y_n$   on $(\sD,\mu_X)$, we define the Ursell function 
\begin{equation}\label{def:ursell}
	u_n(Y_1,\ldots,Y_n) = \left. \frac{\partial^n}{\partial h_1 \cdots \partial h_n } \ln  \left\llangle   \exp\left( \sum_{j=1}^n h_i Y_i \right) \right\rrangle_{T, \lambda,\mu} \right|_{h_i = 0}    .
\end{equation} 
\begin{cor}\label{david1234} Assume \cref{hyp:feynmankac} holds. Let $\lambda, \mu  \in \R$ and  suppose  $E(\lambda,\mu) \in \spd(H(\lambda,\mu))$.
	Then, for all   $n \in \IN$, the following derivatives exist and satisfy 
	\begin{align*}  
		\partial_\mu^n E(\lambda, \mu )   & = - \lim_{ T \to \infty}  \frac{1}{T}  \int_{0}^T ds_1  \cdots \int_0^T   ds_n  u_n(X_{s_1},\ldots,X_{s_n} ) .
	\end{align*} 
\end{cor}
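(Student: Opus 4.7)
Since \cref{david123} already gives an expression for $\partial_\mu^n E$ as a limit of a finite sum of products of moments $\bn{(\int_0^T X_t\,dt)^{|B|}}_{T,\lambda,\mu}$ indexed by set partitions, the plan is to recognise this combinatorial structure as nothing other than the moment-cumulant inversion formula applied to the single random variable $\int_0^T X_t\,dt$. No new analytic input is required, only combinatorics and multilinearity.

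First, I would interchange integration and cumulant. Because $u_n$ is defined as a mixed logarithmic derivative of a joint moment-generating function, it is multilinear in each of its $n$ arguments; combined with the uniform bound $|X_t|\le 1$ and Fubini's theorem this yields
\begin{equation*}
\int_{[0,T]^n} u_n(X_{s_1},\ldots,X_{s_n})\,ds_1\cdots ds_n \;=\; \kappa_n\!\left(\int_0^T X_t\,dt\right),
\end{equation*}
where $\kappa_n(Y):=u_n(Y,\ldots,Y)$ denotes the ordinary $n$-th cumulant of $Y$ under $\bn{\cdot}_{T,\lambda,\mu}$. Equivalently, one may substitute a constant test function $h_s\equiv h$ into the generating functional $(h_\bullet)\mapsto\ln\bn{\exp(\int h_sX_s\,ds)}_{T,\lambda,\mu}$ and read off the $n$-th Taylor coefficient in $h$ to obtain the same identity.

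Second, I would invoke the classical moment-cumulant inversion (Möbius inversion on the partition lattice, or equivalently Faà di Bruno applied to $\ln\bn{\exp(hY)}_{T,\lambda,\mu}$):
\begin{equation*}
\kappa_n(Y)\;=\;\sum_{\fP\in\cP_n}(-1)^{|\fP|-1}(|\fP|-1)!\prod_{B\in\fP}\bn{Y^{|B|}}_{T,\lambda,\mu}.
\end{equation*}
Substituting $Y=\int_0^T X_t\,dt$ and comparing with \cref{david123}, the two partition sums have coefficients matched by the sign identity $(-1)^{|\fP|+n}=-(-1)^{n}\cdot(-1)^{|\fP|-1}$; combined with the explicit prefactor in the Corollary, the claimed formula follows.

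The only genuine obstacle is bookkeeping of signs and factorials through the moment-cumulant inversion, and convincing oneself that the multilinearity of $u_n$ in its arguments carries through the integration in $s_1,\ldots,s_n$. Existence of the derivatives and the admissibility of all interchanges of limits, sums, and integrals are immediate from \cref{david123} and the uniform boundedness of the paths.
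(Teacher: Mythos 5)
Your strategy coincides with the paper's own proof: use multilinearity of $u_n$ (justified via boundedness and Fubini) to pull the $s$-integrals inside and identify $\int_{[0,T]^n}u_n(X_{s_1},\ldots,X_{s_n})\,ds=u_n\bigl(\int_0^T X_t\,dt,\ldots,\int_0^T X_t\,dt\bigr)$, then apply Fa\`a di Bruno / moment--cumulant inversion to recognise the partition sum of \cref{david123}. These are exactly the two ingredients the paper uses.

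Where your sketch does not close is the final sign bookkeeping, and you should not gloss over it. With your (standard) cumulant identity $\kappa_n(Y)=\sum_{\fP\in\cP_n}(-1)^{|\fP|-1}(|\fP|-1)!\prod_{B\in\fP}\bn{Y^{|B|}}_{T,\lambda,\mu}$ and the prefactor $(-1)^{|\fP|+n}$ in \cref{david123}, the comparison yields
\begin{equation*}
\partial_\mu^n E(\lambda,\mu)=(-1)^{n+1}\lim_{T\to\infty}\frac1T\,\kappa_n\Bigl(\int_0^T X_t\,dt\Bigr),
\end{equation*}
and $(-1)^{n+1}=-1$ only when $n$ is even, so the stated prefactor $-1$ is not automatic. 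A concrete check: for $n=1$ one has $u_1(X_s)=\bn{X_s}_{T,\lambda,\mu}$, so the right-hand side of the Corollary equals $-\lim_T\frac1T\bn{\int_0^T X_t\,dt}$, while $\partial_\mu\ln Z_T=-\bn{\int_0^T X_t\,dt}$ gives $\partial_\mu E=+\lim_T\frac1T\bn{\int_0^T X_t\,dt}$. So the factor $(-1)^n$ is real and your concluding sentence (``the claimed formula follows'') asserts a match that your computation does not produce. Note that the paper's own proof writes the Ursell partition sum with $(-1)^{|\fP|+n}$ instead of $(-1)^{|\fP|-1}$ --- which likewise does not follow from \cref{def:ursell} as literally written and points to a sign convention issue (most naturally resolved by reading $\exp(-\sum_j h_j Y_j)$ in \cref{def:ursell}). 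A careful proof must either track this extra $(-1)^n$ explicitly or fix the Ursell sign convention so that the minus in the Corollary is genuine; merely asserting that the coefficients match is the one gap in an otherwise correct argument.
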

Next, we  show how  the formulas in  \cref{david123,david1234}, respectively, can be 
used to obtain bounds 
on  derivatives of the ground state energy. For this, we will use the following 
correlation bound of  a continuous long-range Ising model, 
cf. \cref{rem:Ising}. Approximating this model by a discrete Ising model, we proved a bound on these correlation functions in \cite{HaslerHinrichsSiebert.2021b}.
\begin{thm}[\cite{HaslerHinrichsSiebert.2021b}]\label{prop:corbound}
	There exist $\eps>0$ and $C>0$ such that for all $h\in L^1(\IR)$ which are even, continuous and satisfy $\|h\|_{L^1(\IR)}\le \eps$, we have
	\[0\le \limsup_{T\to\infty}   \frac{\displaystyle  \frac 1T\EE_X \left[\left(\int_0^T X_tdt\right)^2\exp\left(\int_0^T\int_0^Th(t-s)X_tX_sdtds\right)\right]}{\displaystyle \EE_X\left[\exp\left(\int_0^T\int_0^Th(t-s)X_tX_sdtds\right)\right]} \le C . \]
\end{thm}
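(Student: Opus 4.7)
The lower bound is immediate, since $\bigl(\int_0^T X_t\,dt\bigr)^2\ge 0$ while the exponential weight is strictly positive; both numerator and denominator are therefore non-negative with the denominator strictly positive. The content of the theorem is the upper bound $C$, and my plan is to reduce it to a susceptibility bound for a one-dimensional Ising model, as suggested by \cref{rem:Ising}.

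First I would discretize time: fix $\Delta>0$, set $N=\lfloor T/\Delta\rfloor$ and $\sigma_k=X_{k\Delta}$ for $k=0,\ldots,N$. A direct computation using $\EE_X[X_sX_t]=e^{-2|t-s|}$ shows that the joint law of $(\sigma_0,\ldots,\sigma_N)$ under $\mu_X$ coincides with the Gibbs measure $\nu_{N,\Delta}$ of a ferromagnetic nearest-neighbor 1D Ising model whose coupling $J_\Delta$ satisfies $\tanh J_\Delta=e^{-2\Delta}$. Since paths in $\Df$ are piecewise constant with finitely many jumps on $[0,T]$, standard Riemann-sum estimates give
\begin{equation*}
\int_0^T X_t\,dt=\Delta\sum_{k=0}^{N-1}\sigma_k+R_1(\Delta,T),\qquad \int_0^T\!\!\int_0^T h(t{-}s)X_tX_s\,ds\,dt=\Delta^2\!\sum_{k,l}h\bigl((k{-}l)\Delta\bigr)\sigma_k\sigma_l+R_2(\Delta,T),
\end{equation*}
where $R_1,R_2$ can be controlled uniformly in $T$ for continuous $h$ with bounded $L^1$-norm.

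After this substitution the ratio in the theorem becomes, modulo errors that vanish as $\Delta\to 0$,
\begin{equation*}
\frac{\Delta^2}{T}\cdot\frac{\EE_{\nu_{N,\Delta}}\!\bigl[\bigl(\sum_k\sigma_k\bigr)^2 e^{\Delta^2\sum_{k,l}h((k-l)\Delta)\sigma_k\sigma_l}\bigr]}{\EE_{\nu_{N,\Delta}}\!\bigl[e^{\Delta^2\sum_{k,l}h((k-l)\Delta)\sigma_k\sigma_l}\bigr]}=\frac{\Delta}{N}\,\chi_{N,\Delta}(h),
\end{equation*}
where $\chi_{N,\Delta}(h)$ is the susceptibility (summed two-point function) of the finite-volume 1D Ising system with nearest-neighbor coupling $J_\Delta$ and additional long-range coupling $\Delta^2 h((k-l)\Delta)$. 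At $h=0$ a direct computation gives $\chi_{N,\Delta}(0)/N\sim\coth(\Delta)\sim 1/\Delta$, so already at the unperturbed level $(\Delta/N)\chi_{N,\Delta}(0)\to 1$ and the claimed bound is tight.

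The heart of the proof is therefore to establish $(\Delta/N)\chi_{N,\Delta}(h)\le C$ uniformly in $N$ and $\Delta$ whenever $\|h\|_{L^1}\le\eps$. The effective perturbation per lattice site is $\Delta\sum_k|h(k\Delta)|\approx\|h\|_{L^1}$, which is small by hypothesis. If $h\ge 0$, one may combine Griffiths/GKS monotonicity with a Simon--Lieb-type bound to compare $\chi_{N,\Delta}(h)$ with $\chi_{N,\Delta}(0)$. For sign-indefinite $h$, I would instead run a high-temperature polymer or cluster expansion keyed on $\eps$, controlling both numerator and denominator through a convergent Mayer-type series. The main obstacle is producing constants depending only on $\|h\|_{L^1}$ and neither on $N$ nor on $\Delta$, so that one may safely take $\Delta\to 0$ first (yielding the continuum weighted two-point function) and then $T\to\infty$; uniformity of the remainders $R_1,R_2$ in $T$ then preserves the bound under the outer $\limsup_{T\to\infty}$.
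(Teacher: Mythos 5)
This statement is imported from the companion paper \cite{HaslerHinrichsSiebert.2021b} and is not proved in the present article; the only hint the paper gives about the method is the remark that the continuous Ising model ``can be obtained as a limit of a discrete Ising model with long-range interactions.'' Your high-level reduction --- discretize time, identify the law of $(X_{k\Delta})_k$ with a nearest-neighbor 1D Ising chain via $\tanh J_\Delta = e^{-2\Delta}$, and recast the ratio as a finite-volume susceptibility --- is therefore consistent with that hint, and the bookkeeping (the $h=0$ computation, $\Delta\coth\Delta\to 1$) is correct. However, the ``heart of the proof'' you identify is where the argument genuinely breaks.

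First, the appeal to GKS/Griffiths monotonicity for $h\ge 0$ points in the wrong direction: adding a ferromagnetic long-range piece \emph{increases} the susceptibility, so $\chi_{N,\Delta}(h)\ge \chi_{N,\Delta}(0)$, which is useless for an upper bound. What is needed is a quantitative comparison showing $\chi_{N,\Delta}(h)$ does not blow up, and neither Griffiths monotonicity alone nor an unquantified Simon--Lieb reference supplies that; Simon--Lieb gives exponential decay only once a suitable finite-volume susceptibility-times-coupling quantity is $<1$, and you have not verified that threshold uniformly in $\Delta$. Second, the cluster-expansion route for sign-indefinite $h$ is undercut by the observation that $J_\Delta=\tanh^{-1}(e^{-2\Delta})\to\infty$ as $\Delta\to 0$: the nearest-neighbor piece is in the low-temperature (strong-coupling) regime, so any expansion must be performed relative to the nearest-neighbor Ising measure, not the product measure, and the convergence radius, smallness condition on $\|h\|_{L^1}$, and bounds must all be shown uniform in $\Delta$ and $N$. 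The heuristic ``effective perturbation per site $\approx\|h\|_{L^1}$'' does not by itself produce a convergent Mayer series against a reference measure with $\Delta$-dependent correlation structure. Until these uniformity estimates are carried out, the key inequality $(\Delta/N)\chi_{N,\Delta}(h)\le C$ remains asserted rather than proved, and with it the theorem.

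A further point worth flagging: even granting the discrete bound, passing to the continuum requires exchanging the limits $\Delta\to 0$ and $T\to\infty$. Your remainders $R_1,R_2$ are controlled ``uniformly in $T$'' only loosely; the double integral term $\int_0^T\int_0^T h(t-s)X_tX_s\,ds\,dt$ has magnitude of order $T\|h\|_{L^1}$, so its Riemann-sum error sits inside a growing exponential, and the claimed dominated-convergence step needs a $T$-uniform error estimate, not merely a fixed-$T$ one.
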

\noindent
As an application  of   \cref{david123,prop:corbound} we obtain 
the following result, giving us a  bound on the second derivative of the ground state energy which is uniform in the size of the spectral gap. Since the proof only demonstrates the application of these theorems, we state it here directly.
\begin{cor}\label{cor:corr}  Let $\nu$ be a  measurable function on $\R^d$  
	satisfying $\nu > 0$ a.e.  and $\nu(-k) = \nu(k) $.  Let $v \in L^2(\R^d)$ have real Fourier transfrom  and $\nu^{-1/2}v\in L^2(\IR^d)$.
	Let  $m > 0$ and $\omega = \sqrt{  \nu^2 + m^2}$.  Then, for every $\lambda \in \R$,
	the function  $ \mu \mapsto E(\lambda,\mu)$ is twice differentiable  in a neighborhood of zero and, choosing $W$ as defined in \cref{def:W},
	\begin{align*} 
		\partial_\mu^2 E(\lambda , 0 ) 
		& = -  \lim_{ T \to \infty}\frac{\displaystyle \frac 1T  \EE_X\left[\left(\int_0^T X_tdt\right)^2\exp\left(\lambda^2\int_0^T\int_0^TW(t-s)X_tX_sdtds\right)\right]}{\displaystyle \EE_X\left[\exp\left(\lambda^2\int_0^T\int_0^TW(t-s)X_tX_sdtds\right)\right]} . 
	\end{align*} 
	Further, there exists a $\lambda_\sfc > 0$ such that for all $\lambda  \in  (-  \lambda_\sfc , \lambda_\sfc)$  the second derivative satisfies 
	\[ \limsup_{ m \downarrow 0}|  \partial_\mu^2 E(\lambda , 0 )| < \infty.\]   
\end{cor}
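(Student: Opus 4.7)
The plan is to specialize \cref{david123} to $n=2$, collapse the resulting expression at $\mu=0$ via a parity argument, and then feed the remaining correlator into \cref{prop:corbound} to extract a bound that is uniform in the boson mass $m$.

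First I check that the hypotheses of the earlier results are in force. Because $\omega=\sqrt{\nu^2+m^2}\ge m>0$, we have $\essinf\omega>0$, so the result proved in \cref{app:massive} gives $E(\lambda,\mu)\in\spd(H(\lambda,\mu))$ for every $\mu\in\R$. Moreover $\omega(k)\ge\nu(k)$ pointwise, so $\nu^{-1/2}v\in L^2(\R^d)$ implies $\omega^{-1/2}v\in L^2(\R^d)$; together with $\omega(-k)=\omega(k)$ and $v(k)=\overline{v(-k)}$ this verifies \cref{hyp:feynmankac}. Hence \cref{david123} applies at every $\mu$ in a neighborhood of zero, and in particular $\mu\mapsto E(\lambda,\mu)$ is twice differentiable there.

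Specializing \cref{david123} to $n=2$, the set $\cP_2$ consists of the two partitions $\{\{1\},\{2\}\}$ and $\{\{1,2\}\}$, and a direct evaluation of the sign and factorial weights produces
\[
\partial_\mu^2 E(\lambda,\mu)\;=\;\lim_{T\to\infty}\frac{1}{T}\biggl[\,\bn{\int_0^T X_t\,dt}_{T,\lambda,\mu}^{\,2}\;-\;\bn{\Bigl(\int_0^T X_t\,dt\Bigr)^{\!2}}_{T,\lambda,\mu}\,\biggr].
\]
At $\mu=0$ the weight $\exp(\lambda^2\int_0^T\!\int_0^T W(t-s)X_tX_s\,ds\,dt)$ is even in $X$, and $\mu_X$ is invariant under the involution $X\mapsto -X$, which at the level of \eqref{def:X} corresponds to flipping the symmetric Bernoulli variable $B$. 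Hence the odd functional $\int_0^T X_t\,dt$ has vanishing expectation under $\bn{\,\cdot\,}_{T,\lambda,0}$, the first bracketed term drops out, and the identity for $\partial_\mu^2 E(\lambda,0)$ claimed in the corollary follows.

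For the uniform bound in $m$, I apply \cref{prop:corbound} with $h(t):=\lambda^2 W(t)$. By \eqref{def:W} together with $\omega(-k)=\omega(k)$, $W$ is even; dominated convergence with dominant $|v|^2$ gives continuity; and a one-line Fubini computation combined with $\omega\ge\nu$ yields
\[
\|W\|_{L^1(\R)}\;=\;\tfrac{1}{2}\,\|\omega^{-1/2}v\|_{L^2(\R^d)}^{\,2}\;\le\;\tfrac{1}{2}\,\|\nu^{-1/2}v\|_{L^2(\R^d)}^{\,2},
\]
which is finite and \emph{independent of $m$}. Setting $\lambda_{\sfc}^{2}:=2\eps/\|\nu^{-1/2}v\|_{L^2}^{2}$, with $\eps$ the constant from \cref{prop:corbound}, forces $\|\lambda^2 W\|_{L^1}\le\eps$ whenever $|\lambda|<\lambda_{\sfc}$. \cref{prop:corbound} then bounds the quotient appearing in the formula for $\partial_\mu^2 E(\lambda,0)$ by the universal constant $C$ uniformly in $m>0$, and passing to $\limsup_{m\downarrow 0}$ yields the claimed finiteness. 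The only genuinely delicate step is the combinatorial reduction at $n=2$ together with the parity argument that eliminates the cross term; once these are settled the proof reduces to quoting the two cited theorems and the uniform $L^1$-estimate on $W$.
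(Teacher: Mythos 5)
Your argument is correct and matches the paper's proof essentially step for step: verify $E(\lambda,0)\in\spd$ via the massive HVZ-type result, invoke \cref{david123} at $n=2$, use spin-flip symmetry to kill the $\bn{\int_0^T X_t\,dt}^2$ term, and then feed $\|W\|_{L^1}=\tfrac12\|\omega^{-1/2}v\|_2^2\le\tfrac12\|\nu^{-1/2}v\|_2^2$ into \cref{prop:corbound} with $h=\lambda^2 W$. Your explicit unpacking of the $n=2$ combinatorics and of \cref{hyp:feynmankac} is a welcome addition, and your choice $\lambda_\sfc^2=2\eps/\|\nu^{-1/2}v\|_2^2$ is just the sharp version of the paper's (more conservative) constant.
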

\begin{proof}
	Due to the definition,  we have $\essinf_{k\in\IR^d}\omega(k)\ge m>0$ and hence $E(\lambda,0)\in\spd(H(\lambda,0))$, by \cref{thm:massive}.
	Thus,    \cref{david123} is applicable. 
	
	Due to the so-called spin-flip-symmetry of the model, i.e., $X$ and $-X$ being equivalent stochastic processes in the sense of their finite-dimensional distributions by the choice of the Bernoulli random variable in \cref{def:X},
	we have
	\[ \left \llangle \int_0^T X_t dt \right\rrangle_{T,\lambda,0} = \left \llangle \int_0^T( -X_t) dt \right\rrangle_{T,\lambda,0} = -\left \llangle \int_0^T X_t dt \right\rrangle_{T,\lambda,0}, \]
	and hence
	$$\left \llangle \int_0^T X_t dt \right\rrangle_{T,\lambda,0} = 0 \qquad\mbox{for any value of}\ T\ge 0. $$
	Thus, by  \cref{david123}
	\[\partial_\mu^2 E(\lambda,0) = -\lim_{T \to \infty}  \frac 1T \left \llangle  \left( \int_0^T  X_t dt \right)^2 \right \rrangle_{T, \lambda, 0 } 
	.\]
	By the definition in \cref{def:W}, the interaction function $W\in L^1(\IR)$ satisfies 
	\[\| W \|_{L^1(\IR)}  = \frac{1}{2} \| \omega^{-1/2} v \|_2^2 \leq \frac{1}{2}  \| \nu^{-1/2} v \|_2^2.\]
	Setting $\lambda_\sfc = (\frac{1}{2} \eps )^{1/2} /\| \nu^{-1/2} v \|_2$ with $\eps$ given  as in \cref{prop:corbound}, we can apply \cref{prop:corbound} with $h=\lambda^2 W$ for all $\lambda\in(-\lambda_\sfc,\lambda_\sfc)$, which proves $|\partial_\mu^2 E(\lambda,0)|\le C$ for all $m>0$. This concludes the proof.
\end{proof}

\section{Proofs}\label{sec:proof}

In this \lcnamecref{sec:proof}, we prove the results presented in \cref{subsec:FKNformula,subsec:GSenergy}

\subsection{The FKN Formula}\label{sec:FKNproof}

We start with the proof of \cref{thm:feynmankac}.
To that end, we first derive a FKN formula for the spin part, which is described by the jump process. For the statement, we recall the definition of $\iota:\IC^2\to L^2({\pm 1},\mu_{1/2})$ in \cref{def:iota}.
\begin{lem}\label{lem:spin}
	Let $n\in\IN$ and $t_1,\ldots,t_n\ge 0$. We set $s_k = \sum\limits_{i=1}^{k}t_k$ for $k=1,\ldots,n$.\\ Then, for all $\alpha,\beta\in\IC^2$ and $f_0,f_1,\ldots,f_n:\{\pm1\}\to\IC$, we have
	\[e^{-s_n}\braket{\alpha,f_0(\sigma_z)e^{t_1\sigma_x}f_1(\sigma_z)e^{t_2\sigma_x}\cdots e^{t_n\sigma_x}f_n(\sigma_z)\beta} = \EE_X\left[\bar{ \iota \alpha({X_0})}f_0(X_0)f_1(X_{s_1})\cdots f_n(X_{s_n}) \iota \beta({X_{s_n})}\right].\]
\end{lem}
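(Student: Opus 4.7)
The plan is to reduce both sides to coordinate sums over $\{\pm 1\}^{n+1}$ and verify they agree term by term. The central observation is that the jump process $X_t$ is a continuous-time Markov chain on $\{\pm 1\}$ whose transition semigroup is precisely $e^{-t}e^{t\sigma_x}$, once we identify $\{\pm 1\}$ with the standard basis of $\IC^2$ so that $\sigma_z$ acts diagonally with eigenvalues $\pm 1$.

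First, I would set up the identification. Writing $e_{\pm 1}$ for the standard basis vectors of $\IC^2$, the multiplication operator $f(\sigma_z)$ is diagonal with entries $f(\pm 1)$. Expanding $\sigma_x = \begin{pmatrix} 0 & 1 \\ 1 & 0 \end{pmatrix}$ gives $e^{t\sigma_x} = \cosh(t)\Id + \sinh(t)\sigma_x$, whose matrix elements are $(e^{t\sigma_x})_{ij} = \cosh(t)$ if $i=j$ and $\sinh(t)$ otherwise. Inserting resolutions of the identity between the factors on the left-hand side yields
\begin{equation*}
e^{-s_n}\braket{\alpha, f_0(\sigma_z) e^{t_1\sigma_x} \cdots e^{t_n\sigma_x} f_n(\sigma_z)\beta} = e^{-s_n}\sum_{j_0,\ldots,j_n \in \{\pm 1\}} \overline{\alpha(j_0)}\beta(j_n)\prod_{k=0}^n f_k(j_k) \prod_{k=1}^n (e^{t_k\sigma_x})_{j_{k-1}j_k}.
\end{equation*}

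Next, I would identify the right-hand side with the same sum. Since $N_t$ has independent increments, $N_{s_k}-N_{s_{k-1}}$ is Poisson$(t_k)$-distributed and independent of $\cF_{s_{k-1}}$, so $\PP[(-1)^{N_{s_k}-N_{s_{k-1}}}=\epsilon] = e^{-t_k}\cosh(t_k)$ for $\epsilon=+1$ and $e^{-t_k}\sinh(t_k)$ for $\epsilon=-1$. Multiplying by the independent Bernoulli $B$ makes $X_t$ a Markov chain whose transition kernel is exactly $e^{-t_k}(e^{t_k\sigma_x})_{j_{k-1}j_k}$, and the initial distribution is $\PP[X_0 = \pm 1] = \tfrac{1}{2}$. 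Using $\iota\alpha(j) = \sqrt 2\,\alpha(j)$ so that $\overline{\iota\alpha(j_0)}\,\iota\beta(j_n) = 2\overline{\alpha(j_0)}\beta(j_n)$, I can write
\begin{equation*}
\EE_X\!\left[\overline{\iota\alpha(X_0)}\prod_{k=0}^n f_k(X_{s_k})\,\iota\beta(X_{s_n})\right] = \sum_{j_0,\ldots,j_n}\tfrac{1}{2}\cdot 2\,\overline{\alpha(j_0)}\beta(j_n)\prod_{k=0}^nf_k(j_k)\prod_{k=1}^n e^{-t_k}(e^{t_k\sigma_x})_{j_{k-1}j_k},
\end{equation*}
where we use the Markov property (equivalently, independence of increments) to factorize the joint law. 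The factors $\tfrac 12$ and $2$ cancel, and pulling out $e^{-s_n} = \prod_k e^{-t_k}$ yields exactly the same expression as above.

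There is no real obstacle; the argument is essentially bookkeeping. The only points to be careful about are that the factor $\sqrt 2$ in the definition of $\iota$ exactly compensates the Bernoulli weight $\tfrac 12$ for the initial state, and that the natural identification $(+1,-1)\leftrightarrow (e_1,e_2)$ makes $\sigma_z$ diagonal so that $f(\sigma_z)$ acts by pointwise multiplication, matching the factors $f_k(X_{s_k})$ on the stochastic side.
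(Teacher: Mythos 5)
Your proof is correct, and it takes a genuinely different route from the paper's. The paper first reduces to the case $f_0=\cdots=f_n=\operatorname{id}$, works in the $\sigma_x$-eigenbasis $e_1=\tfrac1{\sqrt2}(1,1)$, $e_2=\tfrac1{\sqrt2}(1,-1)$ to diagonalize the exponentials, and then matches the resulting expressions with the correlation functions $\EE_X[X_{s_j}\cdots X_{s_k}]$, which it computes via independence of the pairwise products $X_{s_{i-1}}X_{s_i}$ and a parity case-split. You instead stay in the $\sigma_z$-eigenbasis (where the $f_k(\sigma_z)$ are diagonal), insert resolutions of the identity to expand the left-hand side as a sum over paths $(j_0,\ldots,j_n)\in\{\pm1\}^{n+1}$ with weights $\prod_k (e^{t_k\sigma_x})_{j_{k-1}j_k}$, and observe that $e^{-t_k}(e^{t_k\sigma_x})_{j_{k-1}j_k}$ is exactly the Markov transition kernel of $X$ (computed from the Poisson distribution, with $\cosh,\sinh$ giving the even/odd sums), so the right-hand side is the same path sum. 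This is the textbook Feynman--Kac-for-Markov-chains argument: it avoids the preliminary reduction in the $f_k$ and the parity case analysis, and makes the cancellation of the $\sqrt2$ in $\iota$ against the initial Bernoulli weight $\tfrac12$ transparent, at the cost of explicitly invoking the Markov property (whereas the paper only uses independence of increments of $N$). Both proofs are complete and essentially bookkeeping; yours is somewhat shorter and more directly exhibits the semigroup structure $e^{-t}e^{t\sigma_x}$ that the FKN formula is built on.
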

\begin{proof}Since any function $f:\{\pm 1\}\to\IC$ is a linear combination of the identity and the constant function $1$, it suffices to consider the case $f_0=f_1=\cdots= f_n= \operatorname{id}$. Further, due to bilinearity, it suffices to choose $\alpha$ and $\beta$ to be arbitrary basis vectors. We hereby use the basis consisting of eigenvectors of $\sigma_x$, i.e.,
	$e_1=\frac1{\sqrt2}(1,1)$ and $e_2=\frac1{\sqrt2}(1,-1)$. Then
	\[ \sigma_x e_1 = e_1,\quad \sigma_x e_2 = -e_2, \quad \sigma_z e_1 = e_2,  \quad\mbox{and}\quad \sigma_z e_2 = e_1  \]
	and hence
	\begin{equation}\label{eq:expsigma}
		\begin{aligned}
			&\braket{e_1,\sigma_z e^{t_1\sigma_x}\cdots e^{t_n\sigma_x}\sigma_z e_1} = \begin{cases} 0 & \mbox{if}\ n\ \mbox{is even},\\ e^{\sum_{j=1}^{n}(-1)^j t_j } & \mbox{if}\ n\ \mbox{is odd}, \end{cases}\\
			&\braket{e_2,\sigma_z e^{t_1\sigma_x}\cdots e^{t_n\sigma_x}\sigma_z e_1} = \begin{cases} e^{ - \sum_{j=1}^{n}(-1)^j t_j     } & \mbox{if}\ n\ \mbox{is even},\\ 0 & \mbox{if}\ n\ \mbox{is odd}, \end{cases}\qquad \\
			&\braket{e_2,\sigma_z e^{t_1\sigma_x}\cdots e^{t_n\sigma_x}\sigma_z e_2} = \begin{cases} 0 & \mbox{if}\ n\ \mbox{is even},\\ e^{- \sum_{j=1}^{n}(-1)^j t_j    } & \mbox{if}\ n\ \mbox{is odd}. \end{cases}
		\end{aligned}
	\end{equation}
	Now, observe that $X_0$, $X_tX_s$ and $X_uX_v$ are independent random variables if $0\le t\le s\le u\le v$, by construction.
	For $i,k\in\IN$, this yields the identities
	\begin{align*}
		&\EE_X[X_{s_i}\cdots X_{s_{i+k}}] = \EE_X[ (X_0)^k ]\EE_X[(X_0X_{s_i})^k]\prod_{j=1}^{k}\EE_X[(X_{s_{i+j-1}}X_{s_{i+j}})^{k-j+1}],\\
		&\EE_X\left[X_{s_i}\cdots X_{s_{i+2k}} \right] = \EE_X[ X_{s_i}X_{s_{i+1}}]\EE_X[ X_{s_i+2}X_{s_{i+3}}] \cdots \EE_X[X_{s_{i+2k-1}}X_{s_{i+2k}}].
	\end{align*}
	If $k$ is odd, then $\EE_X[ (X_0)^k ]=0$, by \cref{def:X} and the definition of $B$.
	 Further, by \cref{eq:poisson}, we find
	\[ \EE_X[X_tX_s] = \sum_{k=0}^{\infty}(\PP[N_{t-s}=2k]-\PP[N_{t-s}=2k+1]) =  e^{-2(t-s)} \qquad\mbox{for}\ 0\le s\le t.  \]
	Hence, setting $s_0=0$, we arrive at
	\begin{equation}\label{eq:expX}
		\EE_X\left[X_{s_j}\cdots X_{s_k} \right] = \begin{cases} 0 & \mbox{if}\ k-j\ \mbox{is even,}\\ e^{-2(t_{j+1}+t_{j+3}+\cdots+t_{k-2}+t_{k})} & \mbox{if}\ k-j\ \mbox{is odd}  \end{cases}
		\qquad\mbox{for}\ 0\le j\le k\le n.
	\end{equation}
	Combining \cref{eq:expX,eq:expsigma}, we have
	\begin{align*}
		&e^{-s_n}\braket{e_1,\sigma_z e^{t_1\sigma_x}\cdots e^{t_n\sigma_x}\sigma_z e_1} = \EE_X\left[X_0X_{s_1}\cdots X_{s_n}\right] , \\
		&e^{-s_n}\braket{e_2,\sigma_z e^{t_1\sigma_x}\cdots e^{t_n\sigma_x}\sigma_z e_1} = \EE_X\left[X_{s_1}\cdots X_{s_n}\right] ,  \\
		&e^{-s_n}\braket{e_2,\sigma_z e^{t_1\sigma_x}\cdots e^{t_n\sigma_x}\sigma_z e_2} = \EE_X\left[X_{s_1}\cdots X_{s_{n-1}}\right].
	\end{align*}
	Observing that $\iota e_1(x)=1$ and $ \iota e_2(x)=x$ for $x=\pm 1$ finishes the proof.
\end{proof}
We now move to proving the FKN formula for the field part. We recall the definition of the isometry $j_t:L^2(\IR^d)\to\cE$ in \cref{def:jt}.
For $I\subset \IR$, let $e_I$ denote the projection onto the closed subspace $\bar{{\rm lin}\{f\in \HE:f\in\operatorname{Ran}(j_t)\ \mbox{for some}\ t\in I\}}$. Further, set $e_t=e_{\{t \}} $ for any $t \in \R$. 
\begin{lem}\label{lem:eprops} Assume $a\le b\le t\le c\le d$. Then
	\begin{enumlem}
		\item\label{epart:simple} $e_t = j_t j_t^*$,
		\item\label{epart:calc} $e_ae_be_c = e_ae_c$,
		\item\label{epart:markov} $e_{[a,b]}e_te_{[c,d]}=e_{[a,b]}e_{[c,d]}$.
	\end{enumlem}
\end{lem}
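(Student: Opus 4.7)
The plan is to prove the three items separately, leaning throughout on \cref{lem:jprops}, which gives that $j_t$ is an isometry (\subcref{jpart:isom}) and that $j_s^* j_t = e^{-|t-s|\omega}$ (\subcref{jpart:exp}). These two facts do essentially all the work; the lemma amounts to rewriting everything in terms of the $j_t$'s and using the semigroup identity.

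For (i), I would note that since $j_t$ is an isometry of Hilbert spaces, its range is closed, and $j_t j_t^*$ is a self-adjoint idempotent (since $j_t^* j_t = \mathrm{Id}$), hence the orthogonal projection onto $\operatorname{Ran}(j_t)$. This matches the definition of $e_t = e_{\{t\}}$. For (ii), with $a \le b \le c$ obtained from the hypothesis $a \le b \le t \le c$, I would simply compute using (i):
\[
e_a e_b e_c = j_a (j_a^* j_b)(j_b^* j_c) j_c^* = j_a \, e^{-(b-a)\omega} e^{-(c-b)\omega} j_c^* = j_a (j_a^* j_c) j_c^* = e_a e_c,
\]
using \cref{jpart:exp} and that $[e^{-s\omega},e^{-t\omega}]=0$.

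The Markov-type identity (iii) is the only step requiring slightly more care. My plan is to reduce it to a pairing identity on the generating vectors. Since $e_{[a,b]}$ and $e_{[c,d]}$ project onto the closures of $\operatorname{lin}\{j_s f : s \in [a,b],\, f \in L^2(\IR^d)\}$ and $\operatorname{lin}\{j_u g : u \in [c,d],\, g \in L^2(\IR^d)\}$ respectively, it suffices (by linearity, continuity of $e_t$, and density) to verify that for all $s \in [a,b]$, $u \in [c,d]$ and $f,g \in L^2(\IR^d)$,
\[
\braket{j_s f,\, e_t j_u g} = \braket{j_s f,\, j_u g}.
\]
Using (i) and \cref{jpart:exp}, the left-hand side equals $\braket{f, e^{-(t-s)\omega} e^{-(u-t)\omega} g}$ while the right-hand side equals $\braket{f, e^{-(u-s)\omega} g}$, and these agree because $s \le t \le u$ forces $(t-s)+(u-t)=u-s$ with all exponents nonpositive.

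The main obstacle is essentially bookkeeping in (iii): being clean about the density/continuity argument that promotes the pairing identity on $\{j_s f\}$, $\{j_u g\}$ to the projection identity on all of $\HE$. Everything else reduces to the semigroup law from \cref{jpart:exp}.
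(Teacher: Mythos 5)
Your proof is correct and follows essentially the same route as the paper: (i) is the observation that $j_t$ is an isometry so $j_tj_t^*$ is the projection onto $\operatorname{Ran}(j_t)$; (ii) is the same semigroup computation; and (iii) is the same density argument, reducing to a pairing identity on generating vectors $j_s f$, $j_u g$ and then using the semigroup law. The only cosmetic difference is that in (iii) you re-derive the key pairing identity from the $j_t$'s directly, whereas the paper invokes (ii) in the form $e_{t_k}e_te_{s_m}=e_{t_k}e_{s_m}$; the content is identical.
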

\begin{proof}\cref{jpart:isom} and the definition of $e_{\{t\}}$ directly imply \subcref{epart:simple}. Further, \subcref{epart:calc} follows from  \cref{jpart:exp} by
	\[e_ae_be_c=j_aj_a^*j_bj_b^*j_cj_c^* = j_ae^{-(b-a)\omega}e^{-(c-b)\omega}j_c^* = j_ae^{-(c-a)\omega}j_c^* = j_aj_a^*j_cj_c^*=e_ae_c.\]
	To prove \subcref{epart:markov}, let $f,g\in \HE$. By the definition, there exist sequences of times $(t_k)_{k\in \IN}\subset [a,b]$ and $(s_m)_{m\in\IN}\subset [c,d]$ and functions $f_k\in \operatorname{Ran}(j_{t_k})$, $g_m\in\operatorname{Ran}(j_{s_m})$ such that
	\[ e_{[a,b]}f = \sum_{k=1}^{\infty}f_k \qquad\mbox{and}\qquad e_{[c,d]}g=\sum_{m=1}^{\infty}g_m.\]
	Furthermore, again by definition  $ \operatorname{Ran}(j_{t}) = \operatorname{Ran}(e_{t})$ for any $t \in \R$.  
	Hence, we can apply \subcref{epart:calc} and obtain
	\[\braket{e_{[a,b]}e_te_{[c,d]}g,f} = \sum_{k,m=1}^{\infty}\braket{e_tg_m,f_k} =  \sum_{k,m=1}^{\infty}\braket{g_m,f_k} = \braket{e_{[a,b]}e_{[c,d]}g,f}.\]
	Since $f$ and $g$ were arbitrary, this proves the statement.
\end{proof}
Now, for $t\in\IR$ and $I\subset \IR$, let
\begin{equation}
	J_t = \Gamma(j_t), \qquad E_t=\Gamma(e_t) \qquad\mbox{and}\qquad E_I = \Gamma(e_I).
\end{equation}
Then the next statement in large parts follows directly from \cref{lem:jprops,lem:eprops,lem:Fockprops}.
\begin{lem}\label{lem:Iprops}Assume $a\le b\le t\le c\le d$ and $I\subset \IR$. Then
	\begin{enumlem}
		\item $E_I$ is the orthogonal projection onto $\bar{{\rm lin}\{f\in \FS(\HE):f\in\operatorname{Ran}(J_t)\ \mbox{for some}\ t\in I\}}$.
		\item $E_t=J_tJ_t^*$,
		\item $E_aE_bE_c=E_aE_c$,
		\item\label{Ipart:markov} $E_{[a,b]}E_tE_{[c,d]}=E_{[a,b]}E_{[c,d]}$.
		\item\label{Ipart:scalar} For all $F\in\operatorname{Ran}(E_{[a,b]}) $ and $G\in\operatorname{Ran}(E_{[c,d]})$, we have $\braket{F,E_tG}=\braket{F,G}$.
		\item\label{Ipart:FKN} $J_s^*J_t = e^{-|t-s|\dG(\omega)}$ for all $s\in\IR$,
		\item\label{Ipart:proj} $J_t\ph(f)J_t^*=E_t\ph(j_tf)E_t = \ph(j_tf)E_t$ for all $f\in L^2(\IR^d)$.
		\item\label{Ipart:proj2} $J_tG(\ph(f))J_t^*=E_t G(\ph(j_tf))E_t = G(\ph(j_tf))E_t$ for all $f\in L^2(\IR^d)$ and bounded measurable functions $G$  on $\R$.
	\end{enumlem}
\end{lem}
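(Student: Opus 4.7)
The plan is to lift the one-particle identities from \cref{lem:jprops,lem:eprops} to the Fock space level, using the functoriality properties of $\Gamma$ that are collected in \cref{lem:Fockprops} (in particular $\Gamma(AB)=\Gamma(A)\Gamma(B)$ for contractions, $\Gamma(A)^*=\Gamma(A^*)$, $\Gamma(e^{tA})=e^{t\dG(A)}$ for selfadjoint $A$, and the intertwining of $\Gamma$ with creation/annihilation operators).

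For parts (i)--(vi), I would argue as follows. The operator $e_I$ is by construction the orthogonal projection onto $\overline{\operatorname{lin}\bigcup_{t\in I}\operatorname{Ran}(j_t)}$, so $E_I=\Gamma(e_I)$ is a projection and its range is the Fock space over that subspace, which coincides with the closed linear span of vectors $\Gamma(j_t)\psi=J_t\psi$, $t\in I$, $\psi\in\FS$; this gives (i). Part (ii) follows from \cref{jpart:exp} applied to $t=s$ (or \cref{epart:simple}), namely $E_t=\Gamma(j_tj_t^*)=\Gamma(j_t)\Gamma(j_t^*)=J_tJ_t^*$. Parts (iii) and (iv) are obtained by applying $\Gamma$ to \cref{epart:calc,epart:markov}. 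Part (v) is an immediate consequence of (iv): writing $F=E_{[a,b]}F$ and $G=E_{[c,d]}G$, one has $\braket{F,E_tG}=\braket{F,E_{[a,b]}E_tE_{[c,d]}G}=\braket{F,E_{[a,b]}E_{[c,d]}G}=\braket{F,G}$. Part (vi) follows from \cref{jpart:exp}: $J_s^*J_t=\Gamma(j_s^*j_t)=\Gamma(e^{-|t-s|\omega})=e^{-|t-s|\dG(\omega)}$.

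For part (vii), I would use the standard intertwining relations $\Gamma(B)a^\#(f)=a^\#(Bf)\Gamma(B)$, which hold on the finite-particle subspace $\FSfin$ for any isometry $B$; a direct check on pure tensors using the defining formulas \cref{def:creann} and $\braket{Bf,Bg}=\braket{f,g}$ gives both the annihilation and creation versions. Summing yields $J_t\ph(f)=\ph(j_tf)J_t$ on $\FSfin$, and multiplying on the right by $J_t^*$ gives $J_t\ph(f)J_t^*=\ph(j_tf)E_t$ on the range of $J_t$; since $\FSfin$ is a core for $\ph(f)$ and everything extends by closure, this equality holds on $\cD(\ph(j_tf))\cap\operatorname{Ran}(E_t)$. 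Finally, $\operatorname{Ran}(E_t)$ is invariant under $\ph(j_tf)$ because it is (the closure of) the Fock space over $\operatorname{Ran}(j_t)$ and $j_tf\in\operatorname{Ran}(j_t)$, so $E_t\ph(j_tf)E_t=\ph(j_tf)E_t$.

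For part (viii), I would upgrade (vii) via functional calculus. The intertwining $J_t\ph(f)=\ph(j_tf)J_t$ extends exponentially to the Weyl operators $J_te^{is\ph(f)}=e^{is\ph(j_tf)}J_t$ for all $s\in\R$ (both sides are everywhere defined bounded operators agreeing on a core), and then by the spectral theorem (or Fourier inversion of the bounded characters $s\mapsto e^{is\,\cdot}$) one obtains $J_tG(\ph(f))=G(\ph(j_tf))J_t$ for all bounded Borel $G$. Multiplying by $J_t^*$ on the right and invoking invariance of $\operatorname{Ran}(E_t)$ under $G(\ph(j_tf))$ (which is the functional-calculus version of the invariance used in (vii)) yields both equalities in (viii). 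The only real obstacle is bookkeeping of domains in (vii) and making the passage to bounded functions in (viii) clean; neither is conceptually difficult once one uses Weyl operators as an intermediate step, so the proof is genuinely a straightforward lift of \cref{lem:jprops,lem:eprops} through $\Gamma$.
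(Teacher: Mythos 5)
Your handling of parts (i)--(vii) follows the same route as the paper: lift the one-particle identities of \cref{lem:jprops,lem:eprops} through the functor $\Gamma$, use the intertwining of $\Gamma(B)$ with $a^\#$ (which the paper records in \cref{lem:Fockprops}) for (vii), and deduce (v) from (iv) by inserting $E_{[a,b]}$ and $E_{[c,d]}$ in the inner product. The one genuine methodological divergence is in part (viii). The paper iterates (vii) to obtain $J_t\,P(\ph(f))\,J_t^* = P(\ph(j_tf))\,E_t$ for polynomials $P$ — here the collapse of the iterated product relies exactly on the identity $E_t\,\ph(j_tf)\,E_t = \ph(j_tf)\,E_t$ that you also isolate — and then appeals to the measurable functional calculus to pass from polynomials to bounded Borel $G$. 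You instead go via the Weyl operators: extend $J_t\,\ph(f)=\ph(j_tf)\,J_t$ to the unitary group, $J_t\,e^{\i s\ph(f)} = e^{\i s\ph(j_tf)}\,J_t$, and then invoke Fourier inversion or the spectral theorem to treat arbitrary bounded Borel $G$. Both are standard; your route has the advantage of staying entirely within bounded operators, which sidesteps the domain bookkeeping implicit in ``prove it for polynomials, then approximate,'' while the paper's route is arguably more elementary in that it never invokes the unitary group. In either approach the essential nontrivial ingredient — that $\operatorname{Ran}(E_t)$ is a reducing subspace for $\ph(j_tf)$, and hence for $G(\ph(j_tf))$ — is present and correctly used in your proposal.
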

\begin{proof} All statements except for \subcref{Ipart:scalar}--\subcref{Ipart:proj} follow trivially from \cref{lem:eprops,lem:Fockprops} and the definitions. \subcref{Ipart:scalar} follows from \subcref{Ipart:markov}, by the simple calculation
	\[ \braket{F,E_tG}=\braket{E_{[a,b]}F,E_tE_{[c,d]}G} = \braket{F,E_{[a,b]}E_tE_{[c,d]}G} = \braket{F,E_{[a,b]}E_{[c,d]}G} = \braket{E_{[a,b]}F,E_{[c,d]}G} = \braket{F,G}.  \]
	Finally, \subcref{Ipart:FKN} and \subcref{Ipart:proj} follow by combining \cref{jpart:exp,lem:Fockprops}. Repeated application of  \subcref{Ipart:proj} shows that  \subcref{Ipart:proj2} holds
	for $G$ a polynomial. That it holds for arbitrary bounded measurable $G$ follows from the measurable functional calculus \cite{ReedSimon.1972}. 
\end{proof}
We can now prove the full FKN formula.
\begin{proof}[\textbf{Proof of \cref{thm:feynmankac}}]
	Throughout this proof, we drop tensor products with the identity in our notation. Further, for the convenience of the reader, we explicitly state in which Hilbert space the inner product is taken.
	
	Let  $\chi_K(x) = \min\{  x , K \} $ if $x \geq 0$, and $\chi_K(x) = \max\{x   , -K\}$ if $x   <  0$.   
 Let 	$\ph_K(v)=\chi_{K}(\ph(v))$, $\pEK(j_tv)=\chi_{K}(\pE(j_tv))$ and $\Ht_K(\lambda,\mu)$ as in \cref{def:Ht} with $\ph$ replaced by $\ph_K$. Since $\Ht_K(\lambda,\mu)$ is lower-semibounded and $\ph_K$ is bounded, 
	we can use the Trotter product formula (cf. \cite[Theorem VIII.31]{ReedSimon.1972}) and \cref{lem:Iprops} Parts  \subcref{Ipart:FKN} and \subcref{Ipart:proj2} (where the exponential is considered on the eigenspaces of $\sigma_x$) to obtain
	\begin{align*}
		e^T\Braket{\Phi,e^{-T\Ht_K(\lambda,\mu)}\Psi}_\HS &= \lim\limits_{N\to\infty}\Braket{\Phi,\left(e^{-\frac TN\dG(\omega)}e^{-\frac TN\sigma_z}e^{-\frac TN \sigma_x\otimes(\lambda\ph_K(v)+\mu)}\right)^n\Psi}_\HS\\
		&= \lim\limits_{N\to\infty}\Braket{\Phi,\prod_{k=1}^N\left( J_{(k-1)\frac TN}^*J_{k\frac TN}e^{-\frac TN\sigma_z}e^{-\frac TN \sigma_x\otimes(\lambda\ph_K(v)+\mu)}\right) \Psi}_\HS\\
		&=\lim\limits_{N\to\infty}\Braket{J_0\Phi,\prod_{k=1}^N\left( J_{k\frac TN}e^{-\frac TN\sigma_z}e^{-\frac TN \sigma_x\otimes(\lambda\ph_K(v)+\mu)}J_{k\frac TN}^*\right) J_T\Psi}_{\IC^2\otimes\FS(\HE)}\\
		& = \lim\limits_{N\to\infty}\Braket{J_0\Phi,\prod_{k=1}^N\left( E_{k\frac TN}e^{-\frac TN\sigma_z}e^{-\frac TN \sigma_x\otimes(\lambda\ph_K(j_{k\frac TN}v)+\mu)}E_{k\frac TN}\right) J_T\Psi}_{\IC^2\otimes\FS(\HE)} . 
	\end{align*}
	Now we make iterated use of \cref{Ipart:scalar}. Explicitly, by \cref{Ipart:proj2}, the vector to the left of any $E_{k\frac TN}$, i.e.,
	\begin{align*}
		&\prod_{j=0}^{k-1} \left( E_{j\frac TN}e^{-\frac TN\sigma_z}e^{-\frac TN \sigma_x\otimes(\lambda\ph_K(j_{k\frac TN}v)+\mu)}E_{j\frac TN}\right)  J_0 \Phi \in \operatorname{Ran}(E_{(k-1)\frac TN}), \\
		& e^{-\frac TN\sigma_z}e^{-\frac TN \sigma_x\otimes(\lambda\ph_K(j_{k\frac TN}v)+\mu)}\prod_{j=0}^{k-1} \left( E_{j\frac TN}e^{-\frac TN\sigma_z}e^{-\frac TN \sigma_x\otimes(\lambda\ph_K(j_{k\frac TN}v)+\mu)}E_{j\frac TN}\right)  J_0 \Phi \in \operatorname{Ran}(E_{k\frac TN})
	\end{align*}
	 is an element of $\operatorname{Ran}(E_{[0,k\frac TN]})$. Equivalently, the vector to the right is an element of $\operatorname{Ran}(E_{[k\frac TN,T]})$. Hence, we can drop all the factors $E_{k\frac TN}$. Then, using \cref{prop:Q,def:It}, we derive
	\begin{align*}
		e^T\Braket{\Phi,e^{-T\Ht_K(\lambda,\mu)}\Psi}_\HS & = \lim\limits_{N\to\infty}\Braket{J_0\Phi,\prod_{k=1}^N\left( e^{-\frac TN\sigma_z}e^{-\frac TN \sigma_x\otimes(\lambda\ph_K(j_{k\frac TN}v)+\mu)}\right) J_T\Psi}_{\IC^2\otimes\FS(\HE)}\\
		&=\lim\limits_{N\to\infty}\Braket{I_0\Phi,\prod_{k=1}^N\left( e^{-\frac TN\sigma_z}e^{-\frac TN \sigma_x\otimes(\lambda\pEK(j_{k\frac TN}v)+\mu)}\right) I_T\Psi}_{\IC^2\otimes L^2(\QE)}.
	\end{align*}
	Hence, we can apply \cref{lem:spin} to obtain
	\begin{align} \label{intinexp} 
		\Braket{\Phi,e^{-T\Ht_K(\lambda,\mu)}\Psi}_\HS &= \lim\limits_{N\to\infty}\EE_X\EEE\left[\bar{I_0\Phi(X_0)}e^{-\frac TN\sum\limits_{k=1}^{N}\left(\lambda\pEK(j_{k\frac TN}v)+\mu\right)X_{k\frac TN}}I_T\Psi(X_T)\right] .
	\end{align}
	Since $\chi_K$ is Lipschitz continuous, it follows that $ t \mapsto \pEK(j_{t} v)$ is an $L^2(\QE)$-valued continuous function.  Thus, the sum in the exponential in  \cref{intinexp}   converges to an   $L^2(\QE)$-valued Riemann
	integral. By possibly going over to a subsequence the Riemann sum  converges $\mu_X \otimes \mE$-almost everywhere. Thus it follows by dominated convergence   that  
	\begin{equation}\label{eq:lastFKN}
		\Braket{\Phi,e^{-T\Ht_K(\lambda,\mu)}\Psi}_\HS = \EE_X\EEE\left[\bar{I_0\Phi(X_0)}e^{{-}\lambda\int_0^T\pEK\left( j_t v\right)X_tdt {-} \mu\int_0^T X_tdt}I_T\Psi(X_T)\right].
	\end{equation}
	(Alternatively, the convergence could also be deduced by estimating  the   expectation.)
	Since $\ph(v)$ is bounded with respect to $\dG(\omega)$ (cf. \cref{lem:Fockprops}), the spectral theorem implies that $\Ht_K(\lambda,\mu)$ converges to $\Ht(\lambda,\mu)$ in the strong resolvent sense and hence the left  hand side of above equation converges to $\Braket{\Phi,e^{-T\Ht(\lambda,\mu)}\Psi}$ as $K\to\infty$. On the other hand, using  that  for   $\mu_X  \otimes \mE $-almost every $(x,q) \in \sD \times \QE$ 
	the  function  $t \mapsto (\pEK(j_tv))(q)x_t$ is Lebesgue integrable,  see Remark  \ref{rem:integrals}, it follows  that $\int_0^T  \pEK(j_tv) x_t dt $ converges to  $\int_0^T  \pE(j_tv) x_t dt $ 
	almost everywhere.
	Hence, the right   hand side of \cref{eq:lastFKN} converges to
		\[\EE_X\EEE\left[\bar{I_0\Phi(X_0)}e^{{-}\lambda\int_0^T\pE\left( j_t v\right)X_tdt {-} \mu\int_0^T X_tdt}I_T\Psi(X_T)\right] \]
	as $K\to\infty$, by the dominated convergence theorem.
	For the majorant,   we use 
	that by Jensen's inequality  
	\begin{align*} 
	\exp( {-}\lambda  \int_0^T\pEK\left( j_t v\right)X_tdt )
	 &  \leq  \frac{1}{T} \int_0^T \exp( {-}\lambda T \pEK\left( j_t v\right)X_t  )  dt \\
	& \leq  \frac{1}{T} \int_0^T  [ \exp( {-}\lambda T \pE \left( j_t v\right) ) + \exp( \lambda T \pE \left( j_t v\right) ) ] dt  , 
	\end{align*} 
	where in the second line we used  $\max\{ e^x , 1 \}  \leq e^x + e^{-x}$. Now the right hand side is integrable over $\QE$-space by \eqref{eq:expgauss}.
	This proves the statement.
	%
	%
	%
\end{proof}\noindent
We end this section with the
\begin{proof}[\textbf{Proof of \cref{cor:intout}}]
	First, observe that  with $U$ as in  \eqref{def:unitary}, we have
	$(I_t(U^*\otimes\Id)\Od)(x) = 1$ for $x=\pm 1$ and $t\in\IR$ (cf. \cref{def:G,prop:Q}). Hence, \cref{thm:feynmankac} implies
	\[ e^{-T}\Braket{\Od,e^{-TH(\lambda,\mu)}\Od} = \EE_X\left[\EEE\left[\exp\left({-}\lambda\int_0^T\pE\left(j_tv\right) X_t dt\right)\right]\exp\left({-}\mu\int_0^T X_tdt\right)\right].\]
	Now, let $x\in\Df$. Then, Fubini's theorem and the identity \cref{eq:expgauss} yield
	\[\EEE\left[\left(\int_0^T\pE\left( j_t v\right)x_tdt\right)^2\right] = \int_0^T\int_0^T \EEE\left[\pE(j_tv)\pE(j_sv)\right]x_tx_sdtds\]
	Now, the definitions of the $\cR$-indexed Gaussian process (cf. \cref{def:gauss}) and $W$ \cref{def:W} imply
	\begin{align*}
		\int_0^T\int_0^T \EEE\left[\pE(j_tv)\pE(j_sv)\right]x_tx_sdtds
		&=\int_0^T\int_0^T \frac 12\braket{j_tv,j_sv}x_tx_sdtds\\&=
		2\int_0^T\int_0^T W(t-s)x_tx_sdtds,
	\end{align*}
	where we used $j_s^*j_t = e^{-|t-s|\omega}$ (cf. \cref{lem:jprops}).
	This proves the statement.
\end{proof}

\subsection{Derivatives of the Ground State Energy}\label{sec:GSenergyproof}

To prove our results on derivatives of the ground state energy, we start out with the proof of \cref{lem:Bloch}.
Let us first state our version of Bloch's formula. For the convenience of the reader, we provide the simple proof. Similar arguments are, for example, used in \cite{LorincziMinlosSpohn.2002,AbdessalamHasler.2012}.

Since the notion of positivity is essential therein, we recall it here for the convenience of the reader. For an arbitrary measure space $(\cM,\mu)$, we call a function $f\in L^2(\cM,d\mu)$ {\em (strictly) positive} if it satisfies $f(x)\ge 0$ ($f(x)>0$) for almost all $x\in\cM$. If $A$ is a bounded operator on $L^2(\cM,d\mu)$, we say $A$ is {\em positivity preserving (improving)} if $Af$ is (strictly) positive for all non-zero positive $f\in L^2(\cM,d\mu)$. 

\begin{lem}\label{applem:Bloch2}
	Let $(\cM,\mu)$ be a probability space and let $H$ be a selfadjoint and lower-semibounded operator on $L^2(\cM,d\mu)$. If $e^{-TH}$ is positivity preserving for all $T\ge 0$ and $f\in L^2(\cM,d\mu)$ is striclty positive, then
	\[\inf \sigma(H) = -\lim_{T\to\infty}  \frac 1T\ln\Braket{f,e^{-TH}f}.\]
\end{lem}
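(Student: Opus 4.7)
\emph{Plan.} I will prove matching $\liminf$ and $\limsup$ bounds for $-\tfrac{1}{T}\ln\langle f,e^{-TH}f\rangle$, both equal to $E_0:=\inf\sigma(H)$. Note first that Cauchy--Schwarz applied to $e^{-T_1 H/2}f$ and $e^{-T_2 H/2}f$ gives $\langle f,e^{-(T_1+T_2)H/2}f\rangle^2\leq \langle f,e^{-T_1 H}f\rangle\langle f,e^{-T_2 H}f\rangle$, so $T\mapsto\ln\langle f,e^{-TH}f\rangle$ is convex and the limit in the statement exists in $[-\infty,+\infty]$. The lower bound $\liminf\geq E_0$ is immediate from the spectral theorem: writing $\mu_f$ for the spectral measure of $f$ with respect to $H$, one has $\langle f,e^{-TH}f\rangle=\int e^{-T\lambda}\,d\mu_f(\lambda)\leq e^{-TE_0}\|f\|^2$.

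For the matching upper bound it suffices to show that for every $\delta>0$ the spectral projection $P_\delta:=\chi_{[E_0,E_0+\delta)}(H)$ satisfies $P_\delta f\neq 0$. Indeed, $\langle f,e^{-TH}f\rangle\geq\|P_\delta f\|^2 e^{-T(E_0+\delta)}$ then gives $\limsup\leq E_0+\delta$, and $\delta\to 0$ concludes the proof. I would prove $P_\delta f\neq 0$ by contradiction. Since $e^{-TH}$ being positivity preserving implies that it maps real functions to real functions, $H$ (and hence $P_\delta$) commutes with complex conjugation; as $E_0\in\sigma(H)$, there exists a nonzero real $\psi\in\operatorname{Ran} P_\delta$. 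Writing $\psi=\psi_+-\psi_-$ with $\psi_\pm\geq 0$ of disjoint support, the assumption $P_\delta f=0$ gives $\langle\psi,f\rangle=\langle\psi,P_\delta f\rangle=0$, hence $\langle\psi_+,f\rangle=\langle\psi_-,f\rangle$; strict positivity of $f$ forces this common value to be strictly positive, so in particular both $\psi_\pm\neq 0$. Moreover, the Beurling--Deny-type bound $\langle\psi_++\psi_-,H(\psi_++\psi_-)\rangle\leq\langle\psi,H\psi\rangle<(E_0+\delta)\|\psi\|^2$ (a consequence of positivity preservation) combined with Chebyshev's inequality places at least half of the spectral mass of $\psi_++\psi_-$ inside $[E_0,E_0+2\delta)$.

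\emph{Main obstacle.} The remaining, and most delicate, step is to convert this low-energy content of $\psi_++\psi_-$ into a quantitative contradiction with $P_\delta f=0$. My approach is an approximation argument exploiting strict positivity of $f$: the truncation $h_M:=\min(\psi_++\psi_-,Mf)\leq Mf$ satisfies $\langle h_M,e^{-TH}h_M\rangle\leq M^2\langle f,e^{-TH}f\rangle$ by two applications of positivity preservation, while $h_M\to\psi_++\psi_-$ in $L^2$ as $M\to\infty$ by dominated convergence (using $f>0$ almost everywhere). Comparing with the spectral lower bound on $\langle\psi_++\psi_-,e^{-TH}(\psi_++\psi_-)\rangle$ via Cauchy--Schwarz and carefully balancing $M$ against $T$ then produces the required contradiction. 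Ensuring that the $L^2$-approximation error does not overwhelm the exponential spectral bound is the technical crux of the argument and is precisely where the hypothesis $f>0$ enters decisively.
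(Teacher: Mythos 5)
The proposal stops short of a complete proof: the paragraph you label \emph{Main obstacle} is a plan, not an argument. You correctly reduce the matter to showing $P_\delta f\neq 0$ for every $\delta>0$, but the decisive step---the balancing of $M$ against $T$---is only announced, and as sketched it does not close. Under the contradiction hypothesis $P_\delta f=0$, the spectral measure $\nu_f$ lives on $[E_0+\delta,\infty)$, while the Chebyshev step applied to $\psi\in\operatorname{Ran}P_\delta$ only places half the mass of $|\psi|$ in $[E_0,E_0+2\delta)$. Those two windows overlap on $[E_0+\delta,E_0+2\delta)$, so the resulting exponential rates ($e^{-T(E_0+2\delta)}$ from below versus $M^2e^{-T(E_0+\delta)}\|f\|^2$ from above) go in the wrong direction as $T\to\infty$: no contradiction. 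To repair this you would need to take $\psi\in\operatorname{Ran}\chi_{[E_0,E_0+\delta')}(H)$ for some $\delta'<\delta/2$, so that the low-energy content of $h_M$ sits strictly below $E_0+\delta$; this is available because $E_0\in\sigma(H)$ gives nonzero vectors in $\operatorname{Ran}\chi_{[E_0,E_0+\delta')}(H)$ for every $\delta'>0$, but it is exactly the quantitative bookkeeping that your proposal leaves undone, and it is not a routine matter of ``carefully balancing.''

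The paper's route is considerably shorter and avoids the positive/negative-part decomposition, the Beurling--Deny form inequality, and the truncation argument entirely. After establishing the spectral identity $E_g:=\inf\operatorname{supp}\nu_g=-\lim_{T\to\infty}\tfrac1T\ln\langle g,e^{-TH}g\rangle$, one observes that any $h$ with $c_1 f\le h\le c_2 f$ ($c_1,c_2>0$) satisfies $c_1^2\langle f,e^{-TH}f\rangle\le\langle h,e^{-TH}h\rangle\le c_2^2\langle f,e^{-TH}f\rangle$ by positivity preservation, hence $E_h=E_f$. Strict positivity of $f$ makes the linear span of such $h$ dense in $L^2(\cM,d\mu)$, so if $E_f>\inf\sigma(H)$ the nonzero subspace $\chi_{(-\infty,E_f)}(H)L^2$ would be orthogonal to a dense set---impossible. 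You may want to compare that argument to your own: the density of sandwiched test functions does in one stroke what your truncation-plus-balancing is groping towards, and it sidesteps the delicate spectral-window comparison altogether.
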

\begin{proof}
	First, we note that by the spectral theorem
	\begin{equation}\label{eq:AH21}
		E_g := \inf \supp \nu_g = -\lim_{ T \to \infty}\frac 1T\ln \braket{g,e^{-TH}g} \qquad\mbox{for all}\ g\in L^2(\cM,d\mu),
	\end{equation}
	where $\nu_g$ denotes the spectral measure of $H$ associated with $g$. This easily follows from the inequality \[\nu_g([E_g,E_g+\eps))e^{-T(E_g+\eps)} \le \int_{E_g}^{\infty} e^{-T x}d\nu_g(x) = \braket{g,e^{-TH}g} \le e^{-TE_g} \qquad\mbox{for all}\ T\in\IR,\ \eps>0. \]
	Now, for $h\in L^2(\cM,d\mu)$ satisfying
	\begin{equation}  \label{ineqpos} 
		c_1 f  \leq h \leq c_2 f   \quad  \mbox{for some}  \  c_1 , c_2 > 0 , 
	\end{equation} 
	it follows from $e^{-TH}$ being positivity preserving that
	\[ c_1^2\braket{f,e^{-TH}f} \le \braket{h,e^{-TH}h} \le c_2^2 \braket{f,e^{-TH}f}. \]
	Combined with \cref{eq:AH21}, it follows that $\inf \sigma(H)\le E_f=E_h$.
	Since $f$ is strictly positive, the linear span of the set $\cX$ of functions satisfying \cref{ineqpos} is dense in $L^2(\cM,d\mu)$. It follows that $E_f = \sigma(H)$, since otherwise $E_f > \inf \sigma(H)$ and $\chi_{(-\infty, E_f)}(H) L^2(\cM , d\mu)$
	would contain a nonzero vector which is orthogonal to $\cX$. Thus, the statement follows from \cref{eq:AH21}.
%
\end{proof}
We now prove that the transformed spin boson Hamiltonian from \cref{def:Ht} is positivity improving in an appropriate $L^2$-representation.
\begin{lem} \label{lem:posimp}
	Let $\vartheta$ be the natural isomorphism $\HS=\IC^2\otimes \FS\to L^2(\{1,2\}\times \cQ_{L^2 (\IR^d;\IR)})$ which is determined by
	$ \alpha \otimes \psi \mapsto \left((i,x)\mapsto \alpha_i (\Theta_{L^2 (\IR^d;\IR)} \psi) (x) \right),  $
	where $\Theta_{L^2 (\IR^d;\IR)}:\FS\to L^2(\cQ_{L^2 (\IR^d;\IR)})$ is the unitary from
	\cref{prop:Q}. Then the operator
		$ \vartheta e^{-T\Ht(\lambda,\mu)} \vartheta^* 
	$
	 is positivity improving for all $T>0$.
	 \end{lem}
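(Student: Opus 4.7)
The plan is to apply the Trotter product formula, analyzing each factor explicitly in the $\vartheta$-representation on $L^2(\{1,2\}\times\cQ_{L^2(\IR^d;\IR)})$. Decomposing $\Ht(\lambda,\mu) = A + B$ with $A = (\Id-\sigma_x)\otimes\Id$ bounded and $B = \Id\otimes\dG(\omega) + \sigma_z\otimes(\lambda\ph(v)+\mu\Id)$ selfadjoint and bounded below (by the Kato--Rellich argument already used in \cref{lem:sbmin}), the Trotter--Kato formula gives
\[ \vartheta e^{-T\Ht(\lambda,\mu)}\vartheta^* = \slim_{n\to\infty}\bigl(\vartheta e^{-(T/n)A}\vartheta^*\cdot\vartheta e^{-(T/n)B}\vartheta^*\bigr)^n . \]

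Next I would examine both factors. The factor $\vartheta e^{-\tau A}\vartheta^*$ acts only on the spin index as the $2\times 2$ matrix $e^{-\tau}(\cosh\tau\,\delta_{ij} + \sinh\tau\,\delta_{i,3-j})$, with all entries strictly positive for $\tau>0$. Since $B$ commutes with $\sigma_z\otimes\Id$, the factor $\vartheta e^{-\tau B}\vartheta^*$ is block diagonal in the spin index, with each block being the semigroup of a ``van Hove'' type operator $\dG(\omega)\pm(\lambda\ph(v)+\mu\Id)$ on $L^2(\cQ_{L^2(\IR^d;\IR)})$. A further Trotter decomposition of this block as $\dG(\omega)$ plus the field multiplication shows it is positivity improving: the Mehler semigroup $\GG(e^{-\tau\omega})$ is positivity improving since $\omega>0$ a.e.\ by \cref{hyp:sbmin}, while multiplication by $e^{\mp\tau(\lambda\ph(v)+\mu)}$ is strictly positive and hence positivity preserving. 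Combining, given any nonnegative nonzero $f$, $\vartheta e^{-\tau B}\vartheta^* f$ becomes strictly positive in the $\cQ$-variable on every spin fibre where $f$ is nonzero, and $\vartheta e^{-\tau A}\vartheta^*$ then populates the other spin fibre via its strictly positive off-diagonal entry. Hence each outer Trotter iterate is positivity improving.

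The main obstacle is the passage from the Trotter iterates to the strong limit, because a strong limit of positivity improving operators is only positivity preserving a priori. I would overcome this via the Perron--Frobenius criterion for positivity preserving selfadjoint semigroups (see \cite[Theorem~XIII.44]{ReedSimon.1975}): once positivity preserving is established from the strong Trotter limit, it suffices to verify that $\vartheta e^{-T\Ht}\vartheta^*$ is ergodic, meaning no nontrivial projection $\chi_E$ commutes with the semigroup. The explicit two-by-two structure of $A$ forces any such $\chi_E$ to have the form $\Id_{\{1,2\}}\otimes\chi_{\tilde E}$, and isolating the $\dG(\omega)$-component from $\Ht$ (modulo bounded multiplicative perturbations) together with the positivity improving property of the Mehler semigroup $\GG(e^{-t\omega})$ on $L^2(\cQ_{L^2(\IR^d;\IR)})$ forces $\tilde E$ to be trivial, completing the proof.
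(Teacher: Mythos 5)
Your route is genuinely different from the paper's, and it has a gap in the middle. The paper splits $\Ht(\lambda,\mu)=\Ht(0,\mu)+\lambda\,\sigma_z\otimes\ph(v)$ and observes that in the $\vartheta$-representation the interaction is a (generally unbounded) \emph{multiplication operator}, so that after truncating it to bounded multiplications $V_n$ one can invoke \cite[Theorems~XIII.43, XIII.45]{ReedSimon.1978} directly: $\vartheta e^{-T\Ht(\lambda,\mu)}\vartheta^*$ is positivity improving if and only if $e^{-T\vartheta\Ht(0,\mu)\vartheta^*}$ is, and the free semigroup factors as a strictly positive $2\times 2$ matrix tensored with the Mehler semigroup. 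You instead rely on a Trotter decomposition with $A=(\Id-\sigma_x)\otimes\Id$ and $B=\Id\otimes\dG(\omega)+\sigma_z\otimes(\lambda\ph(v)+\mu\Id)$ and then on an ergodicity argument.

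The gap: you assert that the block semigroup $e^{-\tau(\dG(\omega)\pm(\lambda\ph(v)+\mu))}$ is positivity \emph{improving} because a further Trotter decomposition alternates the positivity improving Mehler semigroup with a strictly positive multiplication. But this inner Trotter limit suffers from precisely the defect you yourself flag for the outer one: a strong limit of positivity improving operators is only guaranteed to be positivity \emph{preserving}. Consequently the claim that ``each outer Trotter iterate is positivity improving'' is unjustified --- with $e^{-\tau B}$ known only to be positivity preserving, the product $(e^{-\tau A}e^{-\tau B})^n$ need not be positivity improving in the $\cQ$-variable. Establishing that the van Hove semigroup is positivity improving is essentially the present lemma with the spin stripped away, and the natural tool for it is exactly \cite[Theorem~XIII.45]{ReedSimon.1978} again, so you do not escape the need for that perturbation theorem.

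The ergodicity fall-back is the right idea in principle (for a selfadjoint positivity preserving semigroup, ergodicity is equivalent to positivity improving, and this together with Perron--Frobenius--Faris is what \cite[Theorems~XIII.43, XIII.44]{ReedSimon.1978} codify). But the verification you outline --- ``isolating the $\dG(\omega)$-component from $\Ht$ modulo bounded multiplicative perturbations'' --- is, once spelled out, precisely the content of the perturbation argument the paper runs, so this route is longer and less direct than it first appears. One minor nit: the relevant reference is \cite{ReedSimon.1978}, not \cite{ReedSimon.1975}.
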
 
	 
	 \begin{proof} 
First observe that 
	\begin{align}  \label{eq:Htilde} \vartheta e^{-T\Ht(\lambda,\mu)} \vartheta^*  =  e^{ - T \vartheta \tilde{H}(\lambda,\mu)\vartheta^*} 
		=  e^{ - T (  \vartheta \widetilde{H}(0,\mu)\vartheta^* + \vartheta(\sigma_x\otimes \ph(v))\vartheta^* )  }  
	\end{align} 
	To prove that \cref{eq:Htilde} is positivity improving,
	 we use a perturbative argument which can be found in \cite[Theorem XIII.45]{ReedSimon.1978}. Explicitly,
	by the definition of $\vartheta$ and \cref{prop:Q}, \[V_n := \vartheta(\sigma_x\otimes \ph(v)\chr{(-n,n)}(\ph(v)))\vartheta^*\] is a bounded multiplication operator in   $L^2(\{1,2\}\times \cQ_{L^2 (\IR^d;\IR)})$  for all $n \in \N$. Furthermore,  by the boundedness of $\ph(v)$ w.r.t. $\dG(\omega)$ (cf. \cref{lem:Fockprops}), we find
	$\vartheta \widetilde{H}(0,\mu) \vartheta^*+\lambda V_n$ converges to $\vartheta \widetilde{H}(\lambda,\mu) \vartheta^*$ in strong resolvent sense 
	and  $\vartheta \widetilde{H}(\lambda,\mu) \vartheta^* - \lambda V_n$ converges to $\vartheta \widetilde{H}(0,\mu)\vartheta^*$ in strong resolvent sense. 
	Hence, by \cite[Theorems XIII.43,XIII.45]{ReedSimon.1978} it follows that   \eqref{eq:Htilde}  is positivity improving  if and only if   
	\begin{align} \label{eq:pos} 
		e^{-T \vartheta \Ht(0,\mu)\vartheta^*} =  e^{ -T( ( 1- \sigma_x)+ \mu \sigma_z)  }     e^{-T  ( \Theta_{L^2 (\IR^d;\IR)}  \dG(\omega)\Theta_{L^2 (\IR^d;\IR)}^*  )  }
	\end{align}
	is. Note that  in  \eqref{eq:pos} the first factor only acts on the variables $\{1,2\}$, and the second factor only acts on the variables in $\cQ_{L^2(\IR^d;\IR)}$. 
	It is well-known (cf. \cite[Theorem I.16]{Simon.1974}) that the second factor on the right hand side of \eqref{eq:pos} 
	is positivity improving on $L^2(\cQ_{L^2(\IR^d;\IR)})$. Further, by explicit computation, we have that the first factor on the right hand side of \cref{eq:pos} 
	\begin{equation*} \label{expT} 
		\exp\left( -T( ( 1- \sigma_x)+ \mu \sigma_z)  \right) = \begin{pmatrix}
			e^{-(\mu+1) T} & e^{T} \\ e^{(\mu+1) T} & e^{-T}
		\end{pmatrix}
	\end{equation*} 
	is positivity improving  on $L^2(\{1,2\})$, since all matrix elements are strictly positive.
	This finishes the proof.
\end{proof}
We now obtain \cref{lem:Bloch.1} as an easy corollary of \cref{lem:posimp}.
\begin{proof}[\textbf{Proof of \cref{lem:Bloch}}]
	Let $\vartheta$ be defined as in \cref{lem:posimp}.
	By  the definitions \cref{def:H,def:Ht}, we have
	\begin{align*}  \Braket{\Od,e^{-TH(\lambda,\mu)}\Od}_\HS &  = \Braket{(U\otimes\Id)^*\Od,e^{-T(\Ht(\lambda,\mu)- \one)}(U\otimes\Id)^*\Od}_\HS \\
		&  = \Braket{ \vartheta (U\otimes\Id)^*\Od, \big( \vartheta e^{-T(\Ht(\lambda,\mu)- \one)} \vartheta^* \big)  \vartheta (U\otimes\Id)^*\Od}_{   L^2(\{1,2\}\times \cQ_{L^2 (\IR^d;\IR)}) }   .  
	\end{align*} 
	By \eqref{def:unitary}, \eqref{def:vac},  and \eqref{prop:Q}, we see that   $\vartheta(U\otimes\Id)^*\Od = 2^{-1/2}$, which  is  a constant  strictly positive  function.
	Hence, 
	the statement follows from \cref{lem:posimp,applem:Bloch2}, since  $\inf \sigma(H(\lambda,\mu)) = \inf \sigma (\widetilde{H}(\lambda,\mu)- \one)$.
\end{proof}
Further, the following statement also is a direct consequence of \cref{lem:posimp}. It will be a useful ingredient to our proof of \cref{prop:partition}.
\begin{prop}\label{lem:Bloch.2}
	If $E(\lambda,\mu)$ is an eigenvalue of $H(\lambda,\mu)$, then the corresponding eigenspace is non-degenerate. In this case, if $\psi_{\lambda,\mu}$ is a ground state of $H(\lambda,\mu)$, then $\braket{\psi_{\lambda,\mu},\Od}\ne 0$.
\end{prop}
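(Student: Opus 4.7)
The plan is to reduce everything to the transformed Hamiltonian $\Ht(\lambda,\mu)$ and then invoke a Perron--Frobenius-type argument based on \cref{lem:posimp}. Since $H(\lambda,\mu) = (U\otimes\Id)^*(\Ht(\lambda,\mu)-\Id)(U\otimes\Id)$ with $U$ unitary, the operator $\vartheta(U\otimes\Id)^*$ intertwines $H(\lambda,\mu)$ with $\vartheta(\Ht(\lambda,\mu)-\Id)\vartheta^*$ on $L^2(\{1,2\}\times\cQ_{L^2(\IR^d;\IR)})$. In particular, eigenspaces of $H(\lambda,\mu)$ at $E(\lambda,\mu)$ correspond bijectively (via this unitary) to eigenspaces of $\vartheta\Ht(\lambda,\mu)\vartheta^*$ at $E(\lambda,\mu)+1 = \inf\sigma(\vartheta\Ht(\lambda,\mu)\vartheta^*)$, so it suffices to prove the two claims for $\vartheta\Ht(\lambda,\mu)\vartheta^*$ and the strictly positive vector $\vartheta(U\otimes\Id)^*\Od = 2^{-1/2}$ (computed in the proof of \cref{lem:Bloch.1}).

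For the non-degeneracy, I would apply the standard Perron--Frobenius theorem for positivity improving semigroups, e.g. \cite[Theorem XIII.43]{ReedSimon.1978}. By \cref{lem:posimp}, $e^{-T\vartheta\Ht(\lambda,\mu)\vartheta^*}$ is positivity improving for every $T>0$, and if $E(\lambda,\mu)+1$ is an eigenvalue of $\vartheta\Ht(\lambda,\mu)\vartheta^*$, then $e^{-T(E(\lambda,\mu)+1)}$ is an eigenvalue of $e^{-T\vartheta\Ht(\lambda,\mu)\vartheta^*}$ and, being equal to the spectral radius $\|e^{-T\vartheta\Ht(\lambda,\mu)\vartheta^*}\|$, it is the largest one. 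The Perron--Frobenius theorem then asserts that this eigenvalue is simple and that the corresponding eigenvector can be chosen strictly positive (as an element of the $L^2$-space). Transporting back via the unitary $\vartheta(U\otimes\Id)^*$ gives the non-degeneracy of the ground state eigenspace of $H(\lambda,\mu)$.

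For the inner product with $\Od$: let $\psi_{\lambda,\mu}$ be a ground state of $H(\lambda,\mu)$ and set $\widetilde\psi = \vartheta(U\otimes\Id)^*\psi_{\lambda,\mu}$, so that $\widetilde\psi$ is (up to a global phase) the strictly positive Perron--Frobenius eigenvector of $\vartheta\Ht(\lambda,\mu)\vartheta^*$ identified in the previous step. Since $\vartheta(U\otimes\Id)^*\Od$ is the constant strictly positive function $2^{-1/2}$, we obtain
\[
\braket{\psi_{\lambda,\mu},\Od}_\HS = \braket{\widetilde\psi,\vartheta(U\otimes\Id)^*\Od}_{L^2} = 2^{-1/2}\int \widetilde\psi\, d(\mu_{1/2}\otimes\mu_{L^2(\IR^d;\IR)}) \neq 0,
\]
where strict positivity of $\widetilde\psi$ (almost everywhere) makes the integral strictly positive.

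The main obstacle is really just making sure the hypotheses of the Perron--Frobenius theorem from \cite{ReedSimon.1978} are properly checked: one needs the underlying measure space to be $\sigma$-finite (it is, being a product of a finite set with the probability space $\cQ_{L^2(\IR^d;\IR)}$), positivity improvement of $e^{-T\vartheta\Ht(\lambda,\mu)\vartheta^*}$ for all $T>0$ (which is precisely \cref{lem:posimp}), and the fact that the infimum of the spectrum is actually attained (which is our standing assumption that $E(\lambda,\mu)$ is an eigenvalue). Everything else is bookkeeping around the unitary equivalences.
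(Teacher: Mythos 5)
Your proposal is correct and follows essentially the same route as the paper: conjugate by $\vartheta$ and the unitary $U\otimes\Id$, invoke \cref{lem:posimp} to get a positivity improving semigroup, and apply the Perron--Frobenius--Faris theorem to obtain a simple ground state eigenvalue with a strictly positive eigenfunction, whose overlap with the strictly positive image of $\Od$ is then nonzero. The only small slip is the citation: the result you invoke is \cite[Theorem~XIII.44]{ReedSimon.1978} (the Perron--Frobenius--Faris theorem for positivity improving semigroups, which the paper itself cites), not XIII.43 (which characterizes ergodicity of a single positivity preserving operator); also note that since $\Ht=\Id+(U\otimes\Id)H(U\otimes\Id)^*$, the eigenvector of $\vartheta\Ht\vartheta^*$ is really $\vartheta(U\otimes\Id)\psi_{\lambda,\mu}$ and the relevant positive vector is $\vartheta(U\otimes\Id)\Od=2^{-1/2}$, a minor sign convention the paper itself gets backwards.
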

\begin{proof}
%
	
	 By the Perron-Frobenius-Faris theorem \cite[Theorem XIII.44]{ReedSimon.1978} and \cref{lem:posimp}, if $E(\lambda,\mu)$ is an eigenvalue of $H(\lambda,\mu)$, then there exists a strictly positive $\phi_{\lambda,\mu}\in L^2(\{1,2\}\times \cQ_{L^2 (\IR^d;\IR)})$ such that the eigenspace corresponding to $E(\lambda,\mu)$ is spanned by $\vartheta(U\otimes \Id)^*\phi_{\lambda,\mu}$, where $\vartheta$ again is the defined as in \cref{lem:posimp}.
	 Since $\vartheta (U\otimes \Id)^* \Od$ is (strictly) positive, this proves the statement.
\end{proof}
We can now prove the Bloch formula for the derivatives.
\begin{proof}[\textbf{Proof of \cref{prop:partition}}]
	Throughout this proof, we fix $\lambda,\mu_0$ as in the statement of the \lcnamecref{prop:partition}.
	\newcommand{\bh}{\mathbf{h}}\newcommand{\be}{\mathbf{e}}
	Further, for compact notation, we write
	\[\bh(\mu)=H(\lambda,\mu), \quad \be(\mu)=E(\lambda,\mu) \quad\mbox{and}\quad \be_T(\mu) = -\frac 1T\ln\Braket{\Od,e^{-Th(\mu)}\Od}.\]
	Hence, we want to prove
	\[ \be^{(n)}(\mu_0) = \lim_{ T \to \infty}\be_T^{(n)}(\mu_0) \qquad\mbox{for all}\ n\in\IN,\]
	where $(\cdot)^{(n)}$ as usually denotes the $n$-th derivative.
	
	We observe that the ground state energy  $\be(\mu_0)$ is a simple eigenvalue of $\bh(\mu_0)$, by \cref{lem:Bloch.2}. Further,
	by view of  \cref{def:H}, it is obvious that  the operator valued family $\mu\mapsto \bh(\mu) = \bh(0)+\mu\sigma_x\otimes \Id $ is  an analytic family of type (A), cf. \cite{Kato.1980,ReedSimon.1978}.
	Then, by the Kato-Rellich theorem \cite[Theorem XII.8]{ReedSimon.1978}, it follows that $\mu\mapsto \be(\mu)$ is analytic and $\be(\mu)$ is an isolated simple eigenvalue  of $\bh(\mu)$ in a neighborhood of $\mu_0$.
	
	We introduce  the distance of $\be(\mu_0)$ to the rest of the spectrum   by  
	\[\delta = \operatorname{dist}(\be(\mu_0),\sigma(\bh(\mu_0))\setminus\{\be(\mu_0)\}).\]
	By the Kato-Rellich theorem \cite[Theorem XII.8]{ReedSimon.1978},  we  can choose  an $\eps>0$ such that
	\begin{equation}\label{eq:epsassumptions}
		\left|\be(\mu)-\be(\mu_0)\right|\le \frac \delta 4 \qquad\mbox{and}\qquad \inf(\sigma(\bh(\mu))\setminus \be(\mu)) \ge \be(\mu_0)+ \frac 34 \delta \qquad\mbox{for}\ \mu\in(\mu_0-\eps,\mu_0+\eps), 
	\end{equation}
where the second inequality  can be obtained  using a Neumann series,  cf. \eqref{eq:series}, or alternatively it can be obtained from  the   lower boundedness of  \cref{lem:sbmin} and a compactness argument involving that the set of $(\mu,z)$, for which $\bh(\mu)-z$ is invertible,  is open, see   \cite[Theorem XII.7]{ReedSimon.1978}.
	Henceforth, let  $\mu\in (\mu_0-\eps,\mu_0+\eps)$. Then, by  \cref{eq:epsassumptions}, we can write the ground state projection $P(\mu)$ of $\bh(\mu)$
 as
%
	\[
	P(\mu) = \frac{1}{2\pi \i} \int_{\Gamma_0}  \frac{1}{z- \bh(\mu)} dz ,
	\]
	where $\Gamma_0$ is a curve   encircling  counterclockwise the point  $\be(\mu_0)$  at a distance $\delta/2$ .
	Further, let
	\begin{align*}
		& \gamma_0 : [-1,+1] \to \C  , \quad t \mapsto \be(\mu_0) + \frac{\delta}{2} - \i t, \\
		& \gamma_\pm : [0,\infty) \to \C , \quad t \mapsto \be(\mu_0) + \frac{\delta}{2} \pm  \i  + t 
	\end{align*}
	and define the curve $\Gamma_1 = -  \gamma_+ +  \gamma_0 +  \gamma_- $ surrounding  the set $\sigma(\bh(\mu_0)) \setminus \{ \be(\mu_0) \}$ (see \cref{fig1}).
	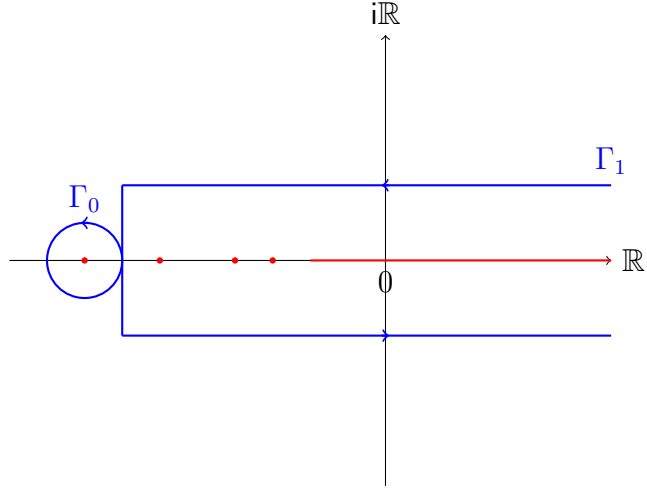
\begin{figure}
	\begin{center} 
		\begin{tikzpicture}
			\draw[thick, blue]  (-4,0)  circle[radius=0.5] ;
			\draw[blue,<-,thick] (-4.05,0.5)  -- (-3.95,0.5)  ;
			\draw[blue]  (-4,0.5)  node[above] {$\Gamma_0$};
			\draw[->] (-5, 0) -- (3, 0) node[right] {$\IR$};
			\draw[->] (0,  - 3) -- (0,3) node[above] {$\i \IR$};
			\draw[blue, thick] (-3.5,-1) -- (-3.5,1) ;  
			\draw[blue, thick] (-3.5,1) -- (3,1) node[above] {$\Gamma_1$};  
			\draw[blue, thick] (-3.5,-1) -- (3,-1) ;  
			\draw[blue,<-,thick] (-0.05,1)  -- (0.05,1)  ;
			\draw[blue,->,thick] (-0.05,-1)  -- (0.05,-1)  ;
			\filldraw[red]  (-4,0)  circle[radius=1pt];
			\filldraw[red]  (-3,0)  circle[radius=1pt];
			\filldraw[red]  (-2,0)  circle[radius=1pt];
			\filldraw[red]  (-1.5,0)  circle[radius=1pt];
			\draw[red,thick]  (-1,0) -- (3,0) ;
			\draw  (0,0)   node[below] {$0$};
		\end{tikzpicture}
	\end{center}
	\caption{Illustration of the curves $\Gamma_0$ and $\Gamma_1$. The spectrum of $\bh(\mu_0)$ is displayed in red.}
	\label{fig1}
\end{figure}
	In view of \cref{eq:epsassumptions}, we can define 
	\[Q_T(\mu) :=  \frac{1}{2\pi \i} \int_{\Gamma_1}  \frac{e^{ T (  \be(\mu) - z )}}{z- \bh(\mu)} dz  , \]
where the integral is understood as a Riemann integral with respect to the   operator topology. 	
The spectral theorem for the self-adjoint operator $\bh(\mu)$  and Cauchy's integral formula  yield
	\begin{equation}\label{eq:projectionsum}
		e^{-T(\bh(\mu)-\be(\mu))} = P(\mu) + Q_T(\mu).
	\end{equation}
	For  $z \in \rho(\bh(\mu_0))$  and  $\mu$ in a neighborhood of $\mu_0$  we have 
	\begin{equation}\label{eq:series}
		\frac{1}{z-\bh(\mu)} = \frac{1}{z-\bh(\mu_0)}\sum_{k=0}^{\infty}\left((\mu-\mu_0)(\sigma_x\otimes\Id)\frac{1}{z-\bh(\mu_0)}\right)^k . 
	\end{equation}
	Using this expansion and the following bounds obtained from \cref{eq:epsassumptions}
 \begin{equation}\label{eq:bounds}
		\begin{aligned}
			&\|(z-\bh(\mu_0))^{-1}\|\le  \frac 2  \delta   \quad   \mbox{for}\ z\in  \ran \Gamma_0  \cup \ran \Gamma_1 ,  \\
	&| e^{T(\be(\mu)-z)}| \le \begin{cases} e^{-\frac{\delta}{4} T}  & \mbox{for}\ z\in \ran \gamma_0, \\ e^{-\frac{\delta}{4} T}  e^{-Tt} & \mbox{for}\ z=\gamma_\pm(t), \  t \in [0, \infty) ,  \end{cases}
		\end{aligned} 
	\end{equation} 	
	we see that $P(\mu)$ and $Q_T(\mu)$ are  real analytic  for $\mu$ in a neighborhood of $\mu_0$ and, moreover,  that the integrals and derivatives with
respect to $\mu$   can be interchanged due to the uniform convergence of the integrand on the curves $\Gamma_0$ and $\Gamma_1$. Hence,
	%
	by virtue of \cref{eq:projectionsum}, we see
	   that the function  $\mu\mapsto\Braket{\Od,e^{-\bh(\mu)}\Od}$ is  real analytic  on $(\mu_0-\wt\eps,\mu_0+\wt\eps)$ for $\wt\eps\in(0,\eps)$ small enough.
	   
	Let $\psi_\mu$ be a normalized  ground state of $\bh(\mu)$. Then, by \cref{lem:Bloch.2}, we find
	\begin{equation}\label{eq:positivegs}
		\Braket{\Od,P(\mu)\Od} = |\braket{\psi_\mu|\Od}|^2>0.
	\end{equation}
	Further, by the spectral theorem and \cref{eq:epsassumptions}
	\begin{equation}\label{eq:boundQT}
		0\le \Braket{\Od,Q_T(\mu)\Od}=e^{-\frac 12T\delta}\int_{\be(\mu)+\frac12\delta}^{\infty}e^{T(\be(\mu)+\frac 12\delta-\lambda)}d\nu_{\Od}(\lambda)\le e^{-\frac 12T\delta}\|\Od\|^2,
	\end{equation}
	where $\nu_{\Od}$ denotes the spectral measure of $h(\mu)$ associated with $\Od$, cf. \cite[Section VII.2]{ReedSimon.1972}.
	
	By \cref{eq:projectionsum} and the definition of $\be_T(\mu)$, we have
	\[ \be(\mu) -  \be_T(\mu)  =   \frac 1T\ln \left(\Braket{\Od,P(\mu)\Od} + \Braket{\Od,Q_T(\mu)\Od}\right) \qquad\mbox{for}\ \mu\in(\mu_0-\wt{\eps},\mu_0+\wt{\eps}).\]
	Hence, we can calculate the $n$-th derivative of the expression on the left hand side at $\mu=\mu_0$, by taking the $n$-th derivative on the right hand side.
	Using the Faà di Bruno formula (\cref{lem:faa}) and recalling the notation from \cref{david123}, we  find
	\begin{align*} 
		\be^{(n)}(\mu_0)- \be_T^{(n)}(\mu_0) = & \frac{-1}{T} \sum_{\fP \in \mathcal{P}_n} \frac{   (-1)^{|\fP|} (|\fP|-1)! }{ \langle \Od , ( P(\mu_0) + Q_T(\mu_0)) \Od \rangle^{|\fP|}} \prod_{B \in  \fP } 
		\Braket{\Od , ( P^{(\lvert B\lvert)}(\mu_0) + Q_T^{(\lvert B\lvert)}(\mu_0)) \Od} .
 	\end{align*}
	By \cref{eq:positivegs,eq:boundQT}, the first factor is uniformly bounded in $T$. Hence, it remains to prove that $\Braket{ \Od , Q_T^{(k)}(\mu_0) \Od}$ is uniformly bounded in $T$ for all $k=1,\ldots,n$.
	Therefore, we explicitly calculate the derivative of $Q_T(\mu)$ at $\mu=\mu_0$. This is done by interchanging the integral with the derivative, which we justified above.
	Note that, by the series expansion \cref{eq:series}, we have
	\[\partial_\mu^k(z-\bh(\mu))^{-1}= \frac{k!}{z-\bh(\mu)}\left(\sigma_x \frac{1}{z-\bh(\mu)}\right)^k \qquad\mbox{for}\ k\in\IN_0.\]
	Again using Faà di Bruno's formula (\cref{lem:faa}) and the Leibniz rule, this yields
	\begin{align*} 
		&Q_T^{(k)}(\mu_0)   =  \frac{1}{2\pi \i}   \sum_{\ell=0}^k \binom{k}{\ell} \int_{\Gamma_1}  \partial_\mu^\ell ( e^{ T (  \be( \mu) - z )} ) \partial_\mu^{k-\ell} (z- \bh( \mu))^{-1} dz {\bigg|_{\mu = \mu_0}} \nonumber \\
		&\quad=  \frac{1}{2\pi \i}   \sum_{\ell=0}^k \binom{k}{\ell} (k-\ell)! \underbrace{\left(\sum_{\fP \in \mathcal{P}_\ell} \prod_{B \in \fP} (   T \be^{(|B|)}(\mu_0) )\right)}_{=: P_{k,\ell}(T)}  
		\underbrace{\int_{\Gamma_1} e^{ T (  \be(\mu_0) - z )}  { \frac{1}{z- \bh(\mu_0)} }   \left( \sigma_x \frac{1}{z- \bh(\mu_0)} \right)^{k-\ell} dz}_{=: I_{k,\ell}(T)}.
	\end{align*}
	Applying the bounds \cref{eq:bounds}, we find
	\[\|I_{k,\ell}(T)\| \le \left(\frac2\delta\right)^{k-\ell}e^{-\frac\delta4T} \left[ \int_{-1}^11dt + 2\int_0^\infty e^{-Tt}dt \right] .\]
	Since $P_{k,\ell}(T)$ only grows polynomially in $T$, this implies $\|Q_T^{(k)}(\mu_0)\|\xrightarrow{T\to\infty}0$ and especially proves $\Braket{ \Od , Q_T^{(k)}(\mu_0) \Od}$ is uniformly bounded in $T$.
\end{proof}
We now combine Bloch's formula for derivatives of the ground state energy with the FKN formula.
\begin{proof}[\textbf{Proof of \cref{david123}}]
	First, we recall the definition of $Z_T(\lambda,\mu)$ in \cref{def:ZT} and the notation $\bn{\cdot}_{T,\lambda,\mu}$ from \cref{def:exp}.
	By the dominated convergence theorem, one sees that $Z_T$ is infinitely often differentiable in $\mu$
	and has the derivatives
	\begin{equation}\label{eq:derivativeZT}
			\begin{aligned}
			\partial_\mu^nZ_T(\lambda,\mu)
			& = (-1)^n\EE_X\left[\left( \int_0^T X_t dt\right)^n\exp\left(\lambda^2\int_0^T\int_0^T W(t-s)X_tX_sdsdt {-} \mu\int_0^TX_tdt\right)\right]
			\\& = (-1)^nZ_T(\lambda,\mu)\left\llangle \left( \int_0^T X_t dt\right)^n \right\rrangle_{T,\lambda,\mu} .
		\end{aligned}
	\end{equation}
	Further, first using \cref{prop:partition} and the Fa\`a di Bruno formula \cref{lem:faa} to calculate the derivatives of the logarithm yields
	\begin{equation}\label{eq:derivativeGSen}
		\begin{aligned}
			\partial_\mu^n E(\lambda,\mu) 
			& = - \lim_{T \to \infty} \frac1T \sum_{\fP\in\cP_n} \frac{(-1)^{|\fP|-1}(|\fP|-1)!}{(\Braket{\Od,e^{-TH(\lambda,\mu)}\Od})^{|\fP|}} \prod_{B \in  \fP } \partial_\mu^{|B|} \Braket{\Od,e^{-TH(\lambda,\mu)}\Od}\\
			&= - \lim_{T \to \infty} \frac1T \sum_{\fP\in\cP_n} \frac{(-1)^{|\fP|-1}(|\fP|-1)!}{(Z_T(\lambda,\mu))^{|\fP|}} \prod_{B \in  \fP } \partial_\mu^{|B|}Z_T(\lambda,\mu),
		\end{aligned}
	\end{equation}
	where in the last line we inserted the identity \cref{eq:intout0} (which in turn follows from \cref{cor:intout}).
	Combining \cref{eq:derivativeZT,eq:derivativeGSen} proves \cref{david123}.
	\end{proof} 
	
	\begin{proof}[\textbf{Proof of \cref{david1234}}] 
	By  the definition of the Ursell functions \cref{def:ursell}  we find using  the Fa\`a die Bruno formula  
	\[ u_n\left(\int_0^T X_{s_1}ds_1,\ldots, \int_0^T X_{s_n}ds_n\right)  = 
	 \sum_{\fP  \in \mathcal{P}_n} (-1)^{|\fP|+n} ( |\fP|  -1)!  \prod_{B \in  \fP  } 
		\left\llangle   \left( \int_0^T X_t dt \right)^{|B|}    \right\rrangle_{T,\lambda,\mu}  .
		\]
	Now, the Ursell functions are multilinear, cf. \cite[Section 11]{Percus.1975}, and by the dominated convergence theorem we can hence exchange the integrals with the expectation value, i.e., 
		\[ u_n\left(\int_0^T X_{s_1}ds_1,\ldots, \int_0^T X_{s_n}ds_n\right)  = \int_0^T \cdots \int_0^T  u_n\left( X_{s_1},\ldots, X_{s_n}\right)   ds_1 \cdots ds_n . 
		\]
	Inserting this into \cref{david123}  finishes the proof of \cref{david1234}.
\end{proof}

\section{Existence of Ground States}\label{sec:suszbound}

In this \lcnamecref{sec:suszbound}, we use  the  bound on the second derivative of the ground state energy as function of the magnetic coupling from \cref{cor:corr}
to obtain the result that the spin boson Hamiltonian with massless bosons has a ground state for couplings  which exhibit  strong infrared singularities. 
This result  is non-trivial, since  the massless bosons imply that  there is  no spectral gap.

Our main result needs the following assumptions.
\begin{hyp}\label{hyp:gsexists}\ 
 	\begin{enumhyp}
		\item $\nu :\IR^d\to[0,\infty)$ is locally H\"older  continuous, positive a.e.,  and $\nu(k)=\nu(-k)$.
		\item $\lim\limits_{|k|\to\infty}\nu(k)=\infty$.
		\item\label{part:eps}  $v\in L^2(\IR^d)$ has real Fourier transform and there exists $\eps>0$, such that $\nu^{-1/2}v\in L^{2+\eps}(\IR^d) \cap L^2(\R^d)$.
		\item\label{part:integrals} $\displaystyle\sup_{|p|\le 1}\int_{\IR^d}\frac{|v(k)|}{\sqrt{\nu(k)}\nu(k+p)}dk<\infty$ and $\displaystyle\sup_{|p|\le 1}\int_{\IR^d}\frac{|v(k+p)-v(k)|}{\sqrt{\nu(k)}|p|^\alpha}dk<\infty$  for some $\alpha > 0$ .
	\end{enumhyp}
\end{hyp}
We can now state the main result of this section. 
\begin{thm}\label{thm:gsexists}
	Assume \cref{hyp:gsexists} holds. 
	Then there exists $\lambda_\sfc>0$, such that for all $\lambda\in(-\lambda_\sfc,\lambda_\sfc)$
	the spin boson Hamiltonian 
	\begin{equation}\label{def:Hl}
		H_\lambda = \sigma_z\otimes\Id + \Id\otimes \dG(\nu) +\lambda  \sigma_x\otimes \ph(v) 
	\end{equation}
	acting on $\C^2 \otimes \FS$  has a ground state, i.e., the infimum of the spectrum is an eigenvalue.
\end{thm}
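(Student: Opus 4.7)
The plan is to combine a mass-regularization of the dispersion relation with the uniform susceptibility bound provided by \cref{cor:corr}. Set $\omega_m := \sqrt{\nu^2+m^2}$ and let $H^m_\lambda$ denote the spin boson Hamiltonian obtained by replacing $\nu$ with $\omega_m$ in \cref{def:Hl}. Since $\essinf \omega_m \ge m > 0$, the massive existence result \cref{thm:massive} supplies, for each $m>0$, a normalized ground state $\psi_m$ of $H^m_\lambda$. The strategy is to extract a weak limit $\psi_\lambda$ of $\psi_m$ as $m \downarrow 0$ along a subsequence and to verify that $\psi_\lambda$ is a nonzero ground state of $H_\lambda$.

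The uniform input is \cref{cor:corr}: since \cref{hyp:gsexists} guarantees $\nu^{-1/2}v\in L^2(\IR^d)$ together with the required symmetries, there exists $\lambda_\sfc>0$ such that
\begin{equation*}
	\sup_{m>0}\bigl|\partial_\mu^2 E^m(\lambda,0)\bigr|<\infty \qquad\text{for all}\ \lambda\in(-\lambda_\sfc,\lambda_\sfc),
\end{equation*}
where $E^m$ denotes the ground state energy of the regularized model. I would transcribe this susceptibility bound into a compactness estimate on $(\psi_m)$ along the lines of \cite{HaslerHinrichsSiebert.2021a}. Second-order perturbation theory in $\mu$ (valid because the massive gap ensures $E^m(\lambda,0)\in\spd(H^m_\lambda)$) represents $-\partial_\mu^2 E^m(\lambda,0)$ as $2\braket{(\sigma_x\otimes\Id)\psi_m,(H^m_\lambda-E^m_\lambda)_\perp^{-1}(\sigma_x\otimes\Id)\psi_m}$, and a pull-through calculation combined with the Hölder-type integrability conditions in \cref{hyp:gsexists} converts this into sufficient control of $\|a(k)\psi_m\|$ to keep $(\psi_m)$ in a compact subset of $\HS$.

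Given a subsequence $\psi_{m_k}\rightharpoonup \psi_\lambda$, two points remain: non-vanishing of $\psi_\lambda$ and the eigenvalue equation. For non-vanishing I would exploit positivity: after the unitary \cref{def:unitary} and the $\cQ$-space identification, $e^{-TH^m_\lambda}$ is positivity improving as in \cref{lem:posimp}, so \cref{lem:Bloch.2} gives $\braket{\psi_m,\Od}>0$ with a uniform-in-$m$ lower bound obtained from the semigroup formula $\psi_m\propto\lim_T e^{-T(H^m_\lambda-E^m_\lambda)}\Od$ together with the convergence $E^m(\lambda,0)\to E(\lambda,0)$ (itself a consequence of monotone convergence $\omega_m\downarrow\nu$). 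The eigenvalue equation then follows from strong resolvent convergence $H^m_\lambda\to H_\lambda$ as $m\downarrow 0$ and lower semicontinuity of the energy functional, applied to the compact subsequence from the previous paragraph.

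The main obstacle is precisely the conversion of the finite susceptibility into Fock-space tightness of $(\psi_m)$: the naive pull-through estimate $\|a(k)\psi_m\|^2\lesssim |\lambda v(k)|^2/\nu(k)^2$ reproduces exactly the infrared-divergent integral that blocks the standard existence proof under \cref{hyp:gsexists}. Overcoming this is the essential role of the refined analysis in \cite{HaslerHinrichsSiebert.2021a}: the integrability conditions in \cref{hyp:gsexists} together with the susceptibility bound from \cref{cor:corr} are used to separate the infrared-singular part of $v$ from a regular remainder and to conclude that the weak limit $\psi_\lambda$ lies in the original Fock space rather than in a non-Fock representation, which is what distinguishes this argument from earlier approaches based on infrared cut-offs alone.
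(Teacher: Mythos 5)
Your proposal takes essentially the same route as the paper: the paper's proof is literally two sentences, observing that \cref{cor:corr} furnishes the uniform bound $\limsup_{m\downarrow 0}|\partial_\mu^2 E_m(\lambda,0)|<\infty$ for small $|\lambda|$, which is precisely the hypothesis of \cref{prop:gsexistsold} (the result quoted from \cite{HaslerHinrichsSiebert.2021a}), and that theorem then delivers the ground state. Your extra material — the second-order perturbation identity, the pull-through estimate, the positivity argument via \cref{lem:posimp} and \cref{lem:Bloch.2}, and the discussion of why the naive infrared estimate fails — is a reasonable sketch of the \emph{internals} of \cref{prop:gsexistsold}, but the paper simply invokes that result as a black box, so none of it is required here; it would only be needed if you were re-proving the cited theorem rather than applying it.
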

\begin{rem}
	This improves the previously known results on ground state existence \cite{HaslerHerbst.2010,BachBallesterosKoenenbergMenrath.2017}.
	We also remark that in principal our method of proof not only gives existence of a small $\lambda_\sfc$, but could in fact be used to estimate the critical coupling constant, due to its non-perturbative nature.
\end{rem}
\begin{ex}
	Let us consider the case
	\begin{equation}\label{def:physics}
		d=3, \qquad \nu(k)= |k| \qquad\mbox{and}\qquad v(k)=\chi(k)|k|^\delta,
	\end{equation}
	where $\chi:\IR^d\to\IR$ is the characteristic function of an arbitrary ball around $k=0$. Obviously the assumptions on $\omega$ in \cref{hyp:gsexists} are satisfied. Further, \cref{part:eps} holds for any $\delta>-1$ as is easily verified by integration in polar coordinates. The finiteness conditions \subcref{part:integrals} of \cref{hyp:gsexists} also hold in this case by simple estimates. 
	We remark that the previous results \cite{HaslerHerbst.2010,BachBallesterosKoenenbergMenrath.2017} covered the situation \cref{def:physics} with $\delta=-\frac 12$.
\end{ex}

The method of proof relies on the approximation of the photon dispersion relation $\nu$ by the infrared-regularized versions $\nu_m= \sqrt{\nu^2+m^2}$ with $m>0$. We denote by $H_m$ and $E_m$ the definitions \cref{def:H,def:E} with $\omega$ replaced by $\nu_m$. Since $\inf_{k\in\IR^d}\nu_m(k)\ge m >0$, the operator $H_m(\lambda,0)$ has a spectral gap for any $m>0$ and hence also a ground state, cf. \cref{thm:massive}.
In the recent paper \cite{HaslerHinrichsSiebert.2021a}, we showed the following result, which together with  \cref{cor:corr} give a proof of \cref{thm:gsexists}. 
\begin{thm}[{\cite{HaslerHinrichsSiebert.2021a}}]\label{prop:gsexistsold}
	Assume \cref{hyp:gsexists} holds and let $\lambda\in\IR$. If 
	for small  $m > 0$ the function 
	$\mu \mapsto E_m(\lambda,\mu)$ is twice differentiable  at zero  and $\limsup_{m\downarrow 0} | \partial_\mu^2 E_m(\lambda,0)|<\infty$, then $H_\lambda$ as defined in \cref{def:Hl} has a ground state.
\end{thm}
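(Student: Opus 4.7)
My plan is to prove \cref{prop:gsexistsold} by the infrared regularization scheme combined with a soft photon estimate extracted from the susceptibility bound. For each $m>0$, the operator $H_m$ (with $\mu=0$) satisfies $\essinf\nu_m\ge m>0$, so by \cref{thm:massive} it has a normalized ground state $\psi_m$ with eigenvalue $E_m(\lambda,0)$. The program is to (a) extract a weak subsequential limit $\psi_{m_k}\rightharpoonup\psi$ as $m_k\downarrow 0$, (b) show $\psi\ne 0$ using the susceptibility hypothesis, and (c) identify $\psi$ as a ground state of $H_\lambda$. I will work throughout on $\IC^2\otimes\FS$ and use the Fock-space tools from \cref{app:Fockspace}.

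For step (a), I would first establish a uniform upper bound $\sup_m\langle\psi_m,(\Id\otimes\dG(\nu))\psi_m\rangle<\infty$ by testing $H_m$ against a fixed trial vector (for example $\Od$) to control $E_m(\lambda,0)$ from above, and then using positivity of $\sigma_z\otimes\Id+\Id$ together with the standard relative bound from \cref{lem:Fockprops.standardbound} to convert the expectation of $H_m$ into control of $\dG(\nu)$. Weak-$*$ compactness then yields a weakly convergent subsequence. The monotone convergence $E_m(\lambda,0)\downarrow \inf\sigma(H_\lambda)$ follows from the strong resolvent convergence $H_m\to H_\lambda$ (by dominated convergence applied to the field terms and the fact that $\dG(\nu_m)\downarrow\dG(\nu)$ in the strong resolvent sense).

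The central difficulty is step (b). The starting point is the second-order perturbation identity
\begin{equation*}
\partial_\mu^2 E_m(\lambda,0) \;=\; -\,2\,\bigl\langle(\sigma_x\otimes\Id)\psi_m,\,(H_m-E_m)^{-1}P_m^\perp(\sigma_x\otimes\Id)\psi_m\bigr\rangle,
\end{equation*}
which the hypothesis bounds uniformly in $m$. Combined with the pull-through formula
\begin{equation*}
a(k)\psi_m \;=\; -\frac{\lambda\, v(k)}{\sqrt 2}\,(H_m-E_m+\nu_m(k))^{-1}(\sigma_x\otimes\Id)\psi_m,
\end{equation*}
which follows from the canonical commutation relations applied to the eigenvalue equation, I would control the soft photon content as follows. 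On $\mathrm{Ran}\,P_m^\perp$, interpolating between $(H_m-E_m+\nu_m(k))^{-1}\le\nu_m(k)^{-1}$ and $(H_m-E_m+\nu_m(k))^{-1}\le(H_m-E_m)^{-1}$, together with the susceptibility identity above, yields an estimate of the form
\begin{equation*}
\int_{|k|\le\delta}\|a(k)\psi_m\|^2\,dk \;\le\; C\,|\lambda|^2\,\bigl(\|v\|_{L^2(|k|\le\delta)}^2+|\partial_\mu^2 E_m(\lambda,0)|\bigr)
\end{equation*}
for $\delta$ small, where the $\sigma_x\otimes\Id$-component along $\psi_m$ (on $\mathrm{Ran}\,P_m$) contributes only a finite constant by normalization. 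The "hard" photon region $|k|>\delta$ is controlled directly via \cref{hyp:gsexists}\,\subcref{part:integrals}. Together these give $\sup_m\langle\psi_m,N\psi_m\rangle<\infty$.

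With this uniform photon number bound in hand, the standard compactness-on-Fock-space argument (used e.g. by Griesemer-Lieb-Loss) applies: the operator $N^{1/2}$ combined with weak convergence and a cut-off in momentum space together with the Rellich-type compactness for the single-boson space, shows that no mass escapes to infinity, whence $\|\psi\|=\lim\|\psi_{m_k}\|=1$. Finally, lower semicontinuity of the form of $H_\lambda$ under weak convergence and the energy convergence $E_m\to\inf\sigma(H_\lambda)$ imply $\langle\psi,H_\lambda\psi\rangle\le\inf\sigma(H_\lambda)$, so $\psi$ is a ground state. The main obstacle is the soft photon estimate in step (b): the naive bound $\|a(k)\psi_m\|^2\le|\lambda v(k)|^2/(2\nu_m(k)^2)$ fails to be integrable under \cref{hyp:gsexists} for infrared-singular $v$, and the whole novelty lies in replacing this by the susceptibility-based estimate which exploits the orthogonality to the ground state.
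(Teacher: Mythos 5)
The paper does not prove \cref{prop:gsexistsold}: the result is imported verbatim from the companion paper \cite{HaslerHinrichsSiebert.2021a}, so there is no in-paper proof against which to compare your attempt. That said, the broad strategy you describe --- infrared regularization, extraction of a weak limit, and a uniform photon-number bound obtained by feeding the second-order susceptibility into the pull-through formula --- is precisely the mechanism one expects here, and it explains why $\limsup_{m\downarrow 0}\lvert\partial_\mu^2E_m(\lambda,0)\rvert<\infty$ is the natural hypothesis.

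Two points in your sketch need repair. First, your assertion that the $\operatorname{Ran}P_m$-component of $(\sigma_x\otimes\Id)\psi_m$ contributes only a finite constant by normalization is wrong and, taken literally, ruins the estimate: that component, pushed through $(H_m-E_m+\nu_m(k))^{-1}$, equals $\nu_m(k)^{-1}\langle\psi_m,(\sigma_x\otimes\Id)\psi_m\rangle\psi_m$, and its contribution to $\|a(k)\psi_m\|^2$ is of order $|v(k)|^2/\nu_m(k)^2$ --- exactly the non-integrable term you identify at the end as the obstacle. What saves the argument is that this component is identically zero: $H_m(\lambda,0)$ commutes with $\sigma_z\otimes(-1)^N$ while $\sigma_x\otimes\Id$ anticommutes with it, so the simple (\cref{lem:Bloch.2}) ground state lies in a fixed parity sector and $\langle\psi_m,(\sigma_x\otimes\Id)\psi_m\rangle=0$. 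This is the operator-theoretic counterpart of the spin-flip symmetry invoked in the proof of \cref{cor:corr}, and only once it is in place does the interpolation between $\nu_m(k)^{-1}$ and $(H_m-E_m)^{-1}P_m^\perp$ produce an integrand controlled by $|v(k)|^2\nu_m(k)^{-1}\lvert\partial_\mu^2E_m(\lambda,0)\rvert$, which is integrable by \cref{part:eps}.

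Second, you dispose of the tightness step in one sentence, but this is where \cref{part:integrals} actually enters, not merely in a hard-photon region. A uniform photon-number bound controls the average number of bosons but does not by itself prevent the one-boson marginals $k\mapsto a(k)\psi_m$ from spreading or oscillating as $m\downarrow 0$; the two integral conditions in \cref{part:integrals} are designed to give, again via pull-through, a uniform momentum-space regularity estimate (a uniform bound on $\int\|a(k+p)\psi_m-a(k)\psi_m\|\,dk$ for small $p$, or an equivalent Sobolev/H\"older-type estimate) that supplies the local compactness needed to conclude that the weak limit has full norm. Without such an estimate, weak convergence plus a photon-number bound is insufficient. With both points repaired, the structure of your argument is sound and is presumably close to that of the companion paper.
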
\noindent
We conclude with the proof for existence of ground states.
\begin{proof}[\textbf{Proof of \cref{thm:gsexists}}] Applying  \cref{cor:corr} to the function $\nu$, we see that the assumptions of  \cref{prop:gsexistsold}
	are satisfied. This proves the theorem.
\end{proof}


\appendix

\section[L2-valued Riemann Integral and  Pointwise Lebesgue Integrability]{$L^2$-valued Riemann Integral and  Pointwise Lebesgue Integrability}

\label{app:integrals}
In \cref{rem:integrals}, we use the following \lcnamecref{lem:integrals} with $f(t)=\pE(j_tv)x_t$.
\begin{lem}\label{lem:integrals}
	Let $(\cQ,\mu)$ be a probability space and assume $t\mapsto f_t\in L^2(\cQ)$ is piecewise continuous on the interval $[0,T]$. Then $[t\mapsto f_t(q) ] \in L^1([0,T])$ for almost every $q\in\cQ$ and
	\begin{equation}\label{eq:intequal}
		\int_0^T f_t(q)dt = \left(\int_0^T f_tdt\right)(q) \qquad\mbox{for almost every}\ q\in\cQ, 
	\end{equation}
	where the integral on the right hand side is the $L^2(\cQ)$-valued Riemann integral.
\end{lem}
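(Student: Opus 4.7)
The plan is to reduce the claim to Fubini's theorem and a duality argument. The first step is to produce a jointly measurable representative of the map $(t,q)\mapsto f_t(q)$. Since $t\mapsto f_t\in L^2(\cQ)$ is piecewise continuous on the compact interval $[0,T]$, it is bounded: $M := \sup_{t\in[0,T]}\|f_t\|_{L^2(\cQ)} < \infty$. Take a sequence of partitions $0 = t^n_0 < t^n_1 < \cdots < t^n_{N_n} = T$ with mesh tending to zero and containing all discontinuity points, and form the piecewise constant $L^2$-valued functions $g_n(t) = \sum_{k}\chr{[t^n_{k-1},t^n_k)}(t)\, f_{t^n_k}$. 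By uniform continuity on each piece, $g_n(t)\to f_t$ in $L^2(\cQ)$ uniformly in $t$, so $g_n \to f$ in $L^2([0,T]\times \cQ)$, and a subsequence converges pointwise a.e. This furnishes a jointly measurable function $F(t,q)$ with $F(t,\cdot) = f_t$ in $L^2(\cQ)$ for every $t$.

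For the $L^1$-integrability statement, since $\mu$ is a probability measure, Cauchy--Schwarz gives $\|f_t\|_{L^1(\cQ)} \le \|f_t\|_{L^2(\cQ)} \le M$, hence
\[
\int_0^T\!\!\int_\cQ |F(t,q)|\, d\mu(q)\, dt \le TM < \infty.
\]
By Tonelli, $\int_0^T |F(t,q)|\, dt < \infty$ for $\mu$-a.e.\ $q$, proving the first claim. Define $\wt F(q) := \int_0^T F(t,q)\, dt$ for those $q$; then $\wt F \in L^1(\cQ)$.

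It remains to identify $\wt F$ with the $L^2$-valued Riemann integral $I := \int_0^T f_t\, dt$. Here I will use a duality argument to avoid any delicate subsequence extraction from Riemann sums. For any $\psi \in L^\infty(\cQ) \subset L^2(\cQ)$, continuity of the inner product and the definition of the $L^2$-valued Riemann integral give
\[
\Braket{\psi, I}_{L^2(\cQ)} = \int_0^T \Braket{\psi, f_t}_{L^2(\cQ)}\, dt.
\]
On the other hand, $|\psi(q)F(t,q)| \le \|\psi\|_\infty |F(t,q)|$ is integrable on $[0,T]\times \cQ$, so Fubini yields
\[
\int_\cQ \psi(q)\, \wt F(q)\, d\mu(q) = \int_0^T\!\!\int_\cQ \psi(q)\, F(t,q)\, d\mu(q)\, dt = \int_0^T \Braket{\psi, f_t}_{L^2(\cQ)}\, dt.
\]
Thus $\int_\cQ \psi(\wt F - I)\, d\mu = 0$ for every $\psi \in L^\infty(\cQ)$, and since $\wt F - I \in L^1(\cQ)$, this forces $\wt F = I$ $\mu$-a.e., which is \cref{eq:intequal}.

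The main technical point that needs care is the construction of the jointly measurable representative; beyond that, everything reduces to Fubini/Tonelli plus the observation that the $L^2$-valued Riemann integral is characterized by its pairings with $L^2$-test functions. The probability-measure hypothesis is used only to embed $L^2(\cQ) \hookrightarrow L^1(\cQ)$ via Cauchy--Schwarz.
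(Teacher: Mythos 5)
Your proof is correct, but the identification step takes a genuinely different route from the paper's. Both arguments rely on step-function approximations of the piecewise-continuous map $t\mapsto f_t$, but they use them differently. The paper plugs an $L^2$-valued step function $f_{n,t}$ directly into the quantity $\int_\cQ\big|\int_0^T f_t(q)\,dt - (\int_0^T f_t\,dt)(q)\big|\,d\mu(q)$, splits by the triangle inequality, and shows via Fubini and H\"older that both error terms are controlled by $\int_0^T\|f_t-f_{n,t}\|_{L^2(\cQ)}\,dt$, which tends to zero with the mesh; the identification is then an $L^1$-norm estimate. You instead first build a jointly measurable representative $F(t,q)$ (a point the paper leaves implicit before invoking Fubini), then obtain the $L^1$-integrability claim exactly as the paper does via Cauchy--Schwarz and Tonelli, and finally identify $\wt F$ with the Riemann integral $I$ by testing against $\psi\in L^\infty(\cQ)$: $\langle\psi,I\rangle$ is computed from the Riemann sums by continuity of the pairing, $\int_\cQ\psi\,\wt F\,d\mu$ from Fubini, and the separating property of $L^\infty$ on $L^1$ closes the argument. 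Your duality route avoids any explicit $L^1$-estimate of the difference and makes the joint-measurability prerequisite explicit, at the cost of an extra a.e.\ subsequence extraction; the paper's route is shorter but tacitly assumes joint measurability. One small imprecision in your write-up: after passing to an a.e.\ convergent subsequence, you get $F(t,\cdot)=f_t$ in $L^2(\cQ)$ only for almost every $t$, not every $t$; this is harmless for the Fubini and duality steps, but the phrasing should be adjusted.
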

\begin{proof}
	Using Fubini's theorem and Hölder's inequality, we find
	\[ \int_{\cQ} \int_0^T |f_t(q)|dt d\mu(q) = \int_0^T \int_{\cQ} |f_t(q)|d\mu(q) dt 
	%
	 \le \int_0^T \|f_t\|_{L^2(\cQ)}dt<\infty.  \]
	Hence, for $\mu$-almost all $q\in\cQ$, the map $t\mapsto f_t(q)$ is Lebesgue-integrable. Let $f_{n,t}$ be an $L^2(\cQ)$--valued step function. 
Then using the triangle inequality, Fubini's theorem and  H\"older's inequality, we find 
		\begin{align*} 
		&  \int_{\cQ} \left|\int_0^Tf_t(q)dt- \left( \int_0^T f_{t}dt\right)(q) \right| d\mu(q) \\
		&\le   \int_{\cQ} \left|\int_0^Tf_t(q)dt-  \int_0^T f_{n, t}(q)dt  \right| d\mu(q)  +    \int_{\cQ} \left| \left(  \int_0^T f_{n, t}\right)(q)dt - \left( \int_0^T f_{t}dt\right)(q) \right| d\mu(q)  \\ 
		& \le  \int_{\cQ} \int_0^T\left|f_t(q)- f_{n,t}(q)\right|dt d\mu(q) +   \left\|\int_0^T f_tdt-\int_0^T f_{n,t}dt\right\|_{L^1(\cQ)}  \\
		&\le  \int_0^T  \int_{\cQ}\left|f_t(q)- f_{n,t}(q)\right| d\mu(q)dt  +   \left\|\int_0^T f_tdt-\int_0^T f_{n,t}dt\right\|_{L^2(\cQ)}   \\
		& \le 2 \int_0^T \|f_t-f_{n,t}\|_{L^2(\cQ)}dt  . 
	\end{align*} 
	Now, by the piecewise $L^2(\cQ)$-continuity of $t \mapsto f_t$, the right hand side can be made arbitraritly small by making   the mesh of  the Riemann
 sum arbitrarily small. 
	This implies \cref{eq:intequal}. 
\end{proof}

\section{The Fa\`a di Bruno Formula}\label{app:faa}

The following formula is used in several places throughout the paper. A proof and historical discussion can be found in \cite{Hardy.2006}.
\begin{lem}\label{lem:faa} 
	Let $I \subset \IR$ and $\Omega \subset \R^m$ be open and let $f:J\to \IR$ and $g: \Omega  \to J$ be $n$-times continuously differentiable functions. Then $f\circ g: \Omega  \to\IR$ is $n$-times continuously differentiable and for any choice of  $k_1,\ldots,k_n \in \{1,\ldots,m\}$
	\[ \frac{ \partial^n}{ \partial x_{k_1} \cdots \partial x_{k_n}  }  (f\circ g)  = \sum_{\fP\in \cP_n}(f^{(|\fP|)}\circ g) \prod_{B\in \fP}   \frac{ \partial^{|B|} g  }{ \prod_{j \in B} \partial x_{k_j} } , \]
	where $\cP_n$ denotes the set of partitions of the set $\{1,\ldots,n\}$.
\end{lem}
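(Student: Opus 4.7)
\smallskip
\noindent\textbf{Proof proposal.} The natural approach is induction on $n$. For the base case $n=1$, the statement is exactly the multivariate chain rule: since $\cP_1$ consists of the single partition $\{\{1\}\}$, the claimed formula reads $\partial_{k_1}(f\circ g) = (f'\circ g)\,\partial_{k_1}g$, which holds by assumption on $f$ and $g$. In particular, the right-hand side is continuous in $x$, so $f\circ g$ is $C^1$.

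For the inductive step, suppose the formula and the $C^n$-regularity of $f\circ g$ hold for some $n\ge 1$ whenever $f,g$ are $C^n$. Take $f,g$ of class $C^{n+1}$. I would apply $\partial_{k_{n+1}}$ to the inductive formula and use linearity together with the product rule on each term indexed by a partition $\fP\in\cP_n$. Two kinds of contributions arise. First, differentiating the outer factor $f^{(|\fP|)}\circ g$ via the chain rule produces $(f^{(|\fP|+1)}\circ g)\,\partial_{k_{n+1}}g$ multiplied by the original product $\prod_{B\in\fP}\frac{\partial^{|B|}g}{\prod_{j\in B}\partial x_{k_j}}$; this is exactly the term indexed by the partition $\fP'\in\cP_{n+1}$ obtained by adjoining $\{n+1\}$ as a new singleton block to $\fP$. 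Second, differentiating one of the factors $\frac{\partial^{|B|}g}{\prod_{j\in B}\partial x_{k_j}}$ for a fixed block $B\in\fP$ yields $\frac{\partial^{|B|+1}g}{\partial x_{k_{n+1}}\prod_{j\in B}\partial x_{k_j}}$, leaving the other factors untouched; this corresponds to the partition $\fP'\in\cP_{n+1}$ obtained by inserting $n+1$ into the block $B$.

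The combinatorial heart of the argument is then the observation that every partition $\fP'\in\cP_{n+1}$ arises in exactly one of these two ways from a unique $(\fP,B)$ or from $\fP$ alone: namely, remove the element $n+1$; either it formed a singleton and one recovers a partition of $\{1,\ldots,n\}$ with no distinguished block, or $n+1$ lay in a block of size $\ge 2$ and one recovers a partition of $\{1,\ldots,n\}$ together with the block it came from. Summing the two contributions over $\fP\in\cP_n$ therefore reproduces the claimed sum over $\cP_{n+1}$ with each partition appearing exactly once. Since all derivatives of $f$ up to order $|\fP'|\le n+1$ and all mixed partials of $g$ up to order $\le n+1$ are continuous, the resulting expression is continuous in $x$, establishing that $f\circ g$ is $C^{n+1}$ and completing the induction.

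I do not anticipate a genuine obstacle; the only delicate point is the bijective bookkeeping between $\cP_{n+1}$ and the pairs generated by the product rule, which I would spell out by the removal-of-$n+1$ argument sketched above. Continuity of the partial derivatives at each stage follows automatically from the closed form of the formula together with the regularity hypothesis on $f$ and $g$.
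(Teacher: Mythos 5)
Your inductive argument is correct: the base case is the chain rule, the product rule in the inductive step produces exactly the singleton-adjunction and block-insertion terms, and the removal-of-$\{n+1\}$ map establishes the needed bijection between $\cP_{n+1}$ and the pairs appearing after differentiation. The continuity/$C^{n+1}$-regularity bookkeeping is also handled correctly, since the closed form involves only $f^{(k)}\circ g$ for $k\le n+1$ and mixed partials of $g$ of order $\le n+1$, all continuous by hypothesis.

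For the comparison you asked about: the paper does not actually prove this lemma. It states it and refers the reader to the reference \cite{Hardy.2006} for a proof and historical discussion. So your proposal is not an alternate route to the paper's argument — it is a self-contained argument where the paper supplies none. That said, the induction you give is the standard proof one would find in such a reference, so there is no methodological divergence to report; the only substantive difference is that you spell the proof out rather than cite it.

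One minor editorial note, unrelated to the mathematics of your proof: the lemma's statement introduces $I\subset\IR$ but then uses $J$ as the codomain of $g$ and domain of $f$; this is evidently a typo in the paper and does not affect your argument.
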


\section[Fock Space and Q-Space]{Fock Space and $\cQ$-Space}\label{app:FQ}

\subsection{Standard Fock Space Properties}\label{app:Fockspace}

In this \lcnamecref{app:Fockspace}, we collect well-known properties of the Fock space operators introduced in \cref{subsec:SBmodel}. In large parts, these can be found in standard textbooks such as \cite{ReedSimon.1975,Parthasarathy.1992,BratteliRobinson.1996,Arai.2018}. For the convenience of the reader, we give exemplary precise references to \cite{Arai.2018} below.
\begin{lem}\label{lem:Fockprops}
	Let $\hs,\fv,\fw$ be complex Hilbert spaces, let $A$ be a self-adjoint operator on $\hs$, let $B:\fv\to\fw$ and $C:\hs\to\fv$ be contraction operators and let $f,g\in \hs$.
	\begin{enumlem}
		\item\label{lem:Fockprops.SA} $\dG(A)$ is self-adjoint and $e^{\i t\dG(A)}=\Gamma(e^{\i tA})$.
		\item\label{lem:Fockprops.pos} If $A\ge 0$, then $\dG(A)\ge 0$ and $e^{-t\dG(A)}=\Gamma(e^{-tA})$ for $t\ge 0$.
		\item\label{lem:Fockprops.cont} $\Gamma(B)$ is a contraction operator.
		\item\label{lem:Fockprops.unit} If $B$ is unitary, so is $\Gamma(B)$.
		\item\label{lem:Fockprops.prod} $\Gamma(B)\Gamma(C)=\Gamma(BC)$ and $\Gamma(B)^* = \Gamma(B^*)$.
		\item\label{lem:Fockprops.SA2} $\ph(f)$ is self-adjoint.
		\item\label{lem:Fockprops.standardbound} If $A\ge0$ is injective and $f\in\cD(A^{-1/2})$, then $\ph(f)$ is $\dG(A)^{1/2}$-bounded and for all $\psi\in\cD(\dG(A)^{1/2})$
		\[\|\ph(f)\psi\| \le \sqrt 2\|A^{-1/2}f\|\|\dG(A)^{1/2}\psi\| + \frac{1}{\sqrt 2}\|f\|\|\psi\|.\]
			Especially, in this case $\ph(f)$ is infinitesimally $\dG(A)$-bounded.
		\item\label{applempart:trans1} $\Gamma(B)a^*(f)=a^*(Bf)\Gamma(B)$ and $a(f) \Gamma(B)^*= \Gamma(B)^*a(Bf) $.
		\item\label{appleampart:trans2} If $B$ is an isometry, i.e., $B^*B = \Id_{\hs}$, then $\Gamma(B)a(f)=a(Bf)\Gamma(B)$ and hence $\Gamma(B)\ph(f) = \ph(Bf)\Gamma(B)$.
	\end{enumlem}
\end{lem}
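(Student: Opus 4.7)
The plan is to reduce every part to explicit computations on the finite-particle subspace $\FSfin \subset \FS(\hs)$, which serves as a common invariant core for all the operators involved, and then to extend by closability or continuity. All nine statements are classical and can be found in the textbook references \cite{ReedSimon.1975,Parthasarathy.1992,Arai.2018}; what I sketch below is only the organization of the argument.

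The algebraic parts (iii)--(v) follow directly from the tensor-power definition \cref{def:G}. Multiplicativity $\Gamma(B)\Gamma(C) = \Gamma(BC)$ and the adjoint formula $\Gamma(B)^* = \Gamma(B^*)$ hold sector-wise since $(\otimes^n B)(\otimes^n C) = \otimes^n(BC)$ and $(\otimes^n B)^* = \otimes^n B^*$; the contraction bound is $\|\otimes^n B\| \le \|B\|^n \le 1$; and (iv) follows from (v) via $\Gamma(B)^*\Gamma(B) = \Gamma(B^*B) = \Id$. For the intertwining relations (viii) and (ix) I would evaluate both sides on pure symmetric tensors using \cref{def:creann}: (viii) is a direct computation from $\Gamma(B)(f \os g_1 \os \cdots \os g_n) = Bf \os Bg_1 \os \cdots \os Bg_n$, while (ix) additionally exploits the isometry identity $B^*B = \Id$ to rewrite $\braket{f, g_k} = \braket{Bf, Bg_k}$ inside the sum defining $a(f)$, which is exactly what produces $a(Bf)\Gamma(B)$ on the right.

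For (i) and (ii), the approach is to simultaneously diagonalize $\dG(A)$ and $\Gamma(e^{itA})$ through the spectral representation of $A$. Essential self-adjointness of $\dG(A)$ on $\FSfin \cap \cD(\dG(A))$ together with Stone's theorem yields $e^{it\dG(A)} = \Gamma(e^{itA})$, and the semigroup identity $e^{-t\dG(A)} = \Gamma(e^{-tA})$ for $A \ge 0$ follows either by analytic continuation into the upper half-plane or by repeating the spectral argument directly. Part (vi) is then obtained from Nelson's analytic vector theorem applied to $\ph(f)$, since the growth estimates $\|a(f)\psi^{(n)}\| \le \sqrt{n}\,\|f\|\,\|\psi^{(n)}\|$ and $\|\ad(f)\psi^{(n)}\| \le \sqrt{n+1}\,\|f\|\,\|\psi^{(n)}\|$ make every element of $\FSfin$ an analytic (in fact entire) vector for the symmetric sum.

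The one part that requires genuine computation, and is the main place where constants must be tracked carefully, is (vii). The idea is to estimate $\|a(f)\psi\|$ on pure symmetric tensors via Cauchy--Schwarz in the form $|\braket{f,g_k}| = |\braket{A^{-1/2}f, A^{1/2}g_k}| \le \|A^{-1/2}f\|\,\|A^{1/2}g_k\|$, then identify $\sum_{k=1}^n \|A^{1/2}g_k\|^2$ with the diagonal matrix element of $\dG(A)$ on the corresponding symmetric tensor. This yields $\|a(f)\psi\| \le \|A^{-1/2}f\|\,\|\dG(A)^{1/2}\psi\|$; the corresponding estimate for $\ad(f)$, obtained either by duality or by a parallel computation on $(n+1)$-particle sectors, picks up the additional term $\frac{1}{\sqrt{2}}\|f\|\,\|\psi\|$ coming from the canonical commutator when shifting particle number. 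Adding the two and using $\ph(f) = \frac{1}{\sqrt{2}}\bar{a(f) + \ad(f)}$ gives the stated bound, and infinitesimal $\dG(A)$-boundedness then follows from Young's inequality $ab \le \delta a^2 + \frac{1}{4\delta} b^2$ applied to $\|\dG(A)^{1/2}\psi\|^2 = \braket{\psi, \dG(A)\psi}$.
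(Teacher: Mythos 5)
Your proposal is correct and follows essentially the same route as the paper: the paper simply cites Arai's textbook for parts (i)--(vii) (Theorems 5.2, 5.5--5.7, 5.22, and Proposition 5.12) and only gives explicit proofs of (viii) and (ix) by evaluating both sides on pure symmetric tensors $g_1 \os \cdots \os g_n$ and using that such vectors span a core; your computations for (viii)--(ix) coincide with the paper's, and your sketches of (i)--(vii) are just the standard textbook arguments being cited. The only minor wrinkle is in your sketch of (vii): the sum $\sum_{k=1}^n \|A^{1/2}g_k\|^2$ equals the diagonal matrix element $\braket{\psi^{(n)},\dG(A)\psi^{(n)}}$ only when the $g_k$ are orthonormal, so the Cauchy--Schwarz step has to be carried out a little more carefully for general symmetric $n$-particle vectors (e.g.\ in the $L^2$ picture, or by an appropriate polarization), but the constant-tracking you describe is right and reproduces the stated bound exactly.
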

{\em References in \cite{Arai.2018}.}\\
	For \subcref*{lem:Fockprops.SA}, \subcref*{lem:Fockprops.pos} see Theorems 5.2 and 5.7. For \subcref*{lem:Fockprops.cont} see Theorem 5.5. For \subcref*{lem:Fockprops.unit}, \subcref*{lem:Fockprops.prod} see Theorem 5.6. For \subcref*{lem:Fockprops.SA2} see Theorem 5.22. For \subcref*{lem:Fockprops.standardbound} see Proposition 5.12. The last sentence in \subcref*{lem:Fockprops.standardbound} follows from the inequality $\|\dG(A)^{1/2}\psi\|\le \eps \|\dG(A)\psi\| + \frac 1\eps \|\psi\|$, which holds for any $\eps>0$.
\begin{proof}[Proof of \subcref{applempart:trans1} and \subcref{appleampart:trans2}]
	\subcref{applempart:trans1} follows by observing
	\begin{align*}
		\Gamma(B)a^\dag(f)g_1\os\cdots \os g_n
		&= \sqrt{n+1} \Gamma(B) f\os g_1\cdots\os g_n = \sqrt{n+1}Bf\os Bg_1\cdots\os Bg_n\\
		& = a^\dag(Bf)Bg_1\os\cdots \os Bg_n = a^\dag (Bf)\Gamma(B)g_1\os\cdots\os g_n
	\end{align*}
	and using that the span of vectors of the form $g_1\os \cdots g_n$ is a core for $\ad(f)$ by construction. Similarly, using the isometry property, we have
	\begin{align*}
		\Gamma(B)a(f)g_1\os\cdots \os g_n
		&= \frac{1}{\sqrt{n}}\sum_{k=1}^n\braket{f,g_k}\Gamma(B)g_1\os\cdots\widehat{g_k}\cdots\os g_n\\
		&= \frac{1}{\sqrt{n}}\sum_{k=1}^n\braket{Bf,Bg_k}Bg_1\os\cdots\widehat{Bg_k}\cdots\os Bg_n\\
		& = a(Bf)Bg_1\os\cdots \os Bg_n = a (Bf)\Gamma(B)g_1\os\cdots\os g_n,
	\end{align*}
	which proves \subcref{appleampart:trans2}.
\end{proof}

\subsection[Q-Space Construction]{$\cQ$-Space Construction}\label{appsec:Q}

In this appendix, we define Gaussian processes indexed by a real Hilbert space $\fr$ on a probability space $(\cQ,\Sigma,\mu)$. We then recall the isomorphism theorem connecting $\FS(\fr\oplus\i\fr)$ and $L^2(\cQ)$. More details can be found in \cite{Simon.1974,LorincziHiroshimaBetz.2011}.

A {random process indexed by} $\fr$ is a ($\IR$-)linear map $\phi$ from $\fr$ to the random variables on $(\cQ,\Sigma,\mu)$. A {Gaussian random process indexed by} $\fr$ is a random process indexed by $\fr$, such that $\phi(v)$ is normally distributed with mean zero for any $v\in\fr$, has covariance
\begin{equation}\label{def:gauss}
	\int_{\cQ}\phi(v)\phi(w)d\mu = \frac 12\braket{v,w}_\fr
\end{equation}
and $\Sigma$ is the minimal $\sigma$-field generated by $\{\phi(v):v\in\fr\}$.

The following \lcnamecref{prop:gaussian} states existence and uniqueness of Hilbert space valued Gaussian processes. Extensive proofs can, for example, be found in \cite[Theorems I.6 and I.9]{Simon.1974} or \cite[Prop. 5.6, Section 5.4]{LorincziHiroshimaBetz.2011}. For the convenience of the reader, we add a sketch of the proof below.
\begin{lem}\label{prop:gaussian}
	For any real Hilbert space $\fr$ there exist a unique (up to isomorphism) probability space $(\cQ_\fr,\Sigma_\fr,\mu_\fr)$ and a unique (again up to isomorphism) Gaussian random process $\phi_\fr$ indexed by $\fr$ on $(\cQ_\fr,\Sigma_\fr,\mu_\fr)$.
\end{lem}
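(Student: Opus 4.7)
The strategy is to handle existence and uniqueness separately, using an orthonormal basis for the existence and the matching of finite-dimensional distributions for the uniqueness.

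\textbf{Existence.} Fix an orthonormal basis $(e_\iota)_{\iota\in I}$ of $\fr$. On the product space $\cQ_\fr=\IR^I$, equip each factor with the centred Gaussian measure $\gamma$ of variance $\tfrac{1}{2}$, take $\mu_\fr=\bigotimes_{\iota\in I}\gamma$ on the product $\sigma$-algebra, and let $\xi_\iota(q)=q_\iota$ be the coordinate projections, which are then i.i.d.\ $\cN(0,\tfrac12)$. For a finite linear combination $v=\sum_{\iota\in F}c_\iota e_\iota$ set $\phi_\fr(v)=\sum_{\iota\in F}c_\iota\xi_\iota$; this is centred Gaussian of variance $\tfrac12\|v\|_\fr^2$. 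For general $v\in\fr$ approximate by finite sums $v_n\to v$; then $\|\phi_\fr(v_n)-\phi_\fr(v_m)\|_{L^2(\mu_\fr)}^2=\tfrac12\|v_n-v_m\|_\fr^2$, so $(\phi_\fr(v_n))$ is Cauchy in $L^2$ and we define $\phi_\fr(v)$ as its $L^2$-limit. Passing to limits preserves linearity, Gaussianity, mean zero, and the covariance identity \eqref{def:gauss}. Finally, replace $\Sigma_\fr$ by the $\sigma$-algebra generated by $\{\phi_\fr(v):v\in\fr\}$, which is the minimality required in the definition.

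\textbf{Uniqueness.} Let $(\cQ,\Sigma,\mu,\phi)$ and $(\cQ',\Sigma',\mu',\phi')$ be two Gaussian processes indexed by $\fr$. For any $v_1,\dots,v_n\in\fr$, the random vector $(\phi(v_1),\dots,\phi(v_n))$ is a centred Gaussian in $\IR^n$: indeed any real linear combination $\sum a_j\phi(v_j)=\phi(\sum a_j v_j)$ is Gaussian by assumption, and its covariance matrix $\tfrac12(\langle v_i,v_j\rangle_\fr)_{ij}$ depends only on $\fr$. The same holds for $\phi'$, so $(\phi(v_1),\dots,\phi(v_n))$ and $(\phi'(v_1),\dots,\phi'(v_n))$ have identical distributions on $\IR^n$. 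Define $U$ on cylinder functions by
\[
U\bigl(F(\phi(v_1),\dots,\phi(v_n))\bigr)=F(\phi'(v_1),\dots,\phi'(v_n)),\qquad F\in C_b(\IR^n).
\]
By the equality of finite-dimensional distributions, this is well-defined and $L^2$-isometric; since the cylinder functions are dense in $L^2(\cQ,\mu)$ (by minimality of $\Sigma$ and a monotone class argument), $U$ extends to a unitary $L^2(\cQ,\mu)\to L^2(\cQ',\mu')$ mapping $1$ to $1$ and preserving products of bounded functions. This makes $U$ an algebra isomorphism of $L^\infty$, and by standard measure-theoretic arguments (Gelfand-type duality between measure algebras and their $L^\infty$-algebras, or a direct construction via generators of $\Sigma$) it is induced by a measure-preserving isomorphism $\cQ\to\cQ'$ intertwining $\phi$ and $\phi'$.

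\textbf{Main obstacle.} Existence is straightforward modulo the usual measure-theoretic bookkeeping. The genuinely delicate step is turning the Hilbert-space isomorphism $U$ of the uniqueness argument into an honest isomorphism of the underlying probability spaces; this is the standard passage from equality of finite-dimensional distributions to isomorphism of measure algebras, where one must use that $\Sigma$ is \emph{generated} by $\{\phi(v):v\in\fr\}$ (and similarly for $\Sigma'$) to apply a $\pi$-$\lambda$/monotone class argument on cylinder sets. Since this is a classical computation carried out in detail in \cite{Simon.1974,LorincziHiroshimaBetz.2011}, I would merely sketch it and refer to those sources for the technical closure.
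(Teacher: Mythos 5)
Your proof is correct and follows essentially the same route as the paper's sketch: existence via an orthonormal basis and a countable/arbitrary product of one-dimensional $\cN(0,\tfrac12)$ measures, with $\phi_\fr$ defined on finite linear combinations and extended by $L^2$-continuity (you make this extension explicit, whereas the paper only says \enquote{by linearity}); uniqueness from the fact that finite-dimensional distributions are determined by the inner product. The one genuine divergence is the uniqueness argument: the paper disposes of it in a single line by appealing to the Kolmogorov extension theorem, while you construct the cylinder-function isometry $U$ directly and then invoke measure-algebra machinery to upgrade it to a point isomorphism. Both are legitimate; your version makes more explicit \emph{where} the isomorphism comes from, at the cost of having to flag (correctly) that the passage from an $L^\infty$-algebra isomorphism to a genuine measure-preserving bijection is the delicate step and requires the minimality of $\Sigma$ plus a monotone-class argument. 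Since the paper treats this proof as a sketch with external references, your level of detail is entirely appropriate.
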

\begin{proof}[Sketch of Proof] \quad \\
\noindent 
{\it Existence}:
	We present one possible construction here, further constructions can be found in \cite{Simon.1974,LorincziHiroshimaBetz.2011}. Let $\{e_i\}_{i\in\sI}$ be a (not necessarily countable) orthonormal basis of $\fr$. We set $\cQ_\fr= \bigtimes_{i\in\sI}(\IR\cup\{\infty\})$ and equip it with the infinite product measure of the probability measures $\pi^{-1/2}\exp(-x_i^2)dx_i$, $i\in\sI$, which obviously is a probability measure itself. The Gaussian random process is now defined by $\phi_\fr(e_i)$ being the multiplication operator with the variable $x_i$. Clearly, $\mu_\fr\circ \phi_\fr(e_i)$ is normally distributed with mean zero and variance $\frac 12$. It also easily follows that $\int_{\cQ_\fr}\phi_\fr(e_i)\phi_\fr(e_j)d\mu_\fr = \frac 12\delta_{i,j}$, with $\delta$ denoting the usual Kronecker symbol.
	Finally, the Borel $\sigma$-algebra on $\cQ_\fr$ is generated by the set $\{\phi_\fr(e_i):i\in\sI\}$.
	Hence,
	 extending this definition to $\phi_\fr(f)$ for arbitrary $f\in\fr$ by linearity finishes the construction. 
	 
\noindent
{\it Uniqueness:} The uniqueness can be deduced from the Kolmogorov extension theorem \cite[Theorem 2.1]{Simon.1979}, which states that a probability space is uniquely determined by a consistent family of probability measures.
\end{proof} 
\noindent
The Hilbert space isomorphism introduced in \cref{prop:Q}, below, is often referred to as Wiener-It\^o-Segal isomorphism. More details on its construction, which is sketched below, can be found in \cite[Theorem I.11]{Simon.1974} or \cite[Prop. 5.7]{LorincziHiroshimaBetz.2011}. Here, we denote the complexification of the real Hilbert space $\fr$ as $\fr_\IC$, which is the real Hilbert space $\fr\times \fr$ with the complex structure given by
$ \i(\psi,\phi) = -(\phi,\psi)$.
\begin{lem}\label{prop:Q}
	There exists a unitary operator $\Theta_\fr:\FS(\fr_\IC)\to L^2(\cQ_\fr)$ such that
	\begin{enumlem}
		\item\label{prop:Q.1} $\Theta_\fr \Omega = 1$,
		\item\label{prop:Q.2} $\Theta_\fr^{-1} \phi_\fr (f)\Theta_\fr = \ph(v)$ for all $f\in\fr$.
	\end{enumlem}
\end{lem}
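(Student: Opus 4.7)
The plan is to construct $\Theta_\fr$ by defining it on the total family of coherent (exponential) vectors in $\FS(\fr_\IC)$ and identifying their images with suitable Wick-ordered exponentials of the Gaussian process. For $h \in \fr_\IC$, recall the coherent vector
\[\Psi(h) := \sum_{n=0}^\infty \tfrac{1}{\sqrt{n!}}\,h^{\otimes n} \in \FS(\fr_\IC),\]
which satisfies $\langle \Psi(h), \Psi(g)\rangle = \exp\langle h, g\rangle_{\fr_\IC}$, specializes to $\Psi(0) = \Omega$, and whose linear span is known to be dense in $\FS(\fr_\IC)$. On the probability side, for each $h = h_1 + \i h_2$ with $h_1, h_2 \in \fr$ I would associate a Wick-ordered exponential $W(h) \in L^2(\cQ_\fr)$ built from $\phi_\fr(h_1)$ and $\phi_\fr(h_2)$, with the normalization chosen so that $W(0) = 1$ and so that the factors of $\sqrt 2$ appearing in \cref{def:field} and the factor $\tfrac 12$ in \cref{def:gauss} are absorbed consistently.

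Using \cref{def:gauss} and the Gaussian moment identity \cref{eq:expgauss}, a direct computation then yields the reproducing-kernel identity
\[\int_{\cQ_\fr} \overline{W(h)}\,W(g)\,d\mu_\fr = \exp\langle h, g\rangle_{\fr_\IC},\]
so that the assignment $\Psi(h) \mapsto W(h)$ extends linearly and by continuity to an isometry $\Theta_\fr : \FS(\fr_\IC) \to L^2(\cQ_\fr)$, proving (i). To obtain surjectivity I would show that the linear span of $\{W(h) : h \in \fr\}$ is dense in $L^2(\cQ_\fr)$. Repeated $\lambda$-differentiation of $W(\lambda h)$ at $\lambda = 0$ produces all polynomials in $\{\phi_\fr(f) : f \in \fr\}$, and these are dense in $L^2(\cQ_\fr)$ because by \cref{prop:gaussian} the $\sigma$-algebra $\Sigma_\fr$ is generated by the $\phi_\fr(f)$ and the one-dimensional Gaussian marginals have determinate moments.

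For property (ii) I would verify the intertwining on the core of coherent vectors. Using the standard identities $a(f)\Psi(h) = \langle f, h\rangle_{\fr_\IC}\Psi(h)$ and $a^\dag(f)\Psi(h) = \tfrac{d}{d\lambda}\Psi(h+\lambda f)\big|_{\lambda = 0}$ together with \cref{def:field}, one computes $\ph(f)\Psi(h)$ explicitly as a linear combination of $\Psi(h)$ and its directional derivative. On the $\cQ$-side, differentiating $W(h + \lambda f)$ in the real parameter $\lambda$ at $\lambda = 0$ produces the matching expression $\phi_\fr(f)\,W(h)$ after the cross-terms in the quadratic Gaussian exponent are accounted for. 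Comparing the two yields $\phi_\fr(f)\Theta_\fr\Psi(h) = \Theta_\fr\ph(f)\Psi(h)$ for all $f \in \fr$ and $h \in \fr_\IC$. Since the span of the $\Psi(h)$ is a core for $\ph(f)$ and its image under $\Theta_\fr$ is a core for $\phi_\fr(f)$ (both operators being essentially self-adjoint on these spans), this pointwise identity extends to the operator equality claimed in (ii).

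The main obstacle will be the density statement: verifying rigorously that polynomials in $\{\phi_\fr(f) : f \in \fr\}$ are dense in $L^2(\cQ_\fr)$ and that $\phi_\fr(f)$ is essentially self-adjoint on them. For the explicit product-space construction given in the sketch of \cref{prop:gaussian}, both facts reduce via the Hermite polynomial basis to the one-dimensional Gaussian case; for a representation-independent argument one can instead invoke Stone--Weierstrass on the bounded characters $\{e^{\i t \phi_\fr(f)} : t \in \IR, f \in \fr\}$ together with the uniqueness clause of \cref{prop:gaussian}. Beyond this, keeping the various $\sqrt 2$-factors consistent between \cref{def:field}, \cref{def:gauss}, and the normalization of $W(h)$ will require some careful bookkeeping but poses no conceptual difficulty.
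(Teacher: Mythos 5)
Your proposal is correct but takes a genuinely different route from the paper. The paper's sketch defines $\Theta_\fr$ directly on pure symmetric tensors $f_1\os\cdots\os f_n$ ($f_i\in\fr$), sending them to $\sqrt 2$ times the Wick polynomial $\colon\!\phi_\fr(f_1)\cdots\phi_\fr(f_n)\colon\!$, which is the Hermite-polynomial orthogonalization of the Gaussian monomials; unitarity then rests on the orthogonality relations for Wick polynomials, and property (ii) is checked on pure tensors as a core for $\ph(f)$. You instead define $\Theta_\fr$ on the total family of coherent vectors $\Psi(h)$, $h\in\fr_\IC$, pairing them with Wick-ordered exponentials $W(h)$, which is the generating-function version of the same map. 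Your approach buys cleaner one-line verifications of both the isometry (the reproducing-kernel identity is a single Gaussian integral via \cref{eq:expgauss}) and the intertwining (coherent vectors are eigenvectors of $a(f)$, so $\ph(f)\Psi(h)$ is explicit); the paper's buys a more transparent connection to the Hermite basis and avoids introducing the exponential vectors. In both approaches the nontrivial remaining step is exactly the one you flag: surjectivity of the isometry, i.e.\ density of polynomials (equivalently, of the Wick exponentials) in $L^2(\cQ_\fr)$. Your two proposed routes to that density---reduction to the one-dimensional Gaussian via Hermite polynomials in the product-space model, or Stone--Weierstrass on the characters $e^{\i t\phi_\fr(f)}$ together with the minimality of $\Sigma_\fr$---are both standard and correct. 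One small caution on the bookkeeping: since you evaluate $W(h)$ at complex $h\in\fr_\IC$, you must extend $\phi_\fr$ $\IC$-linearly before forming the Wick exponential, and the constant in $W(h)=\colon\!e^{c\phi_\fr(h)}\colon\!$ must be $c=\sqrt 2$ to compensate the $\tfrac12$ in \cref{def:gauss} and the $\tfrac1{\sqrt2}$ in \cref{def:field}; this is the same $\sqrt 2$ appearing in the paper's formula $f_1\os\cdots\os f_n\mapsto\sqrt 2\colon\!\phi_\fr(f_1)\cdots\phi_\fr(f_n)\colon\!$.
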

\begin{proof}[Sketch of Proof]
	\newcommand{\wick}[1]{\colon\! #1 \colon\!}
	We recursively define the Wick product of $\fr$-indexed Gaussian random variables by
	\[ \wick{\phi_\fr (f)}  = \phi_\fr(f), \quad \ \wick{\phi_\fr(f)\phi_\fr(f_1)\cdots \phi_\fr(f_{n})} = \phi_\fr(f)\wick{\phi_\fr(f_1)\cdots\phi_\fr(f_{n})} - \frac 12 \sum_{j=1}^{n}\braket{f,f_j}\wick{\prod_{i\ne j}\phi_\fr(f_i)}. \]
	Then, the map $\Theta_{\fr}:\FS(\fr_\IC)\mapsto L^2(\cQ_\fr)$ given by \subcref{prop:Q.1} and
	\[  f_1\os\cdots\os f_n \mapsto \sqrt 2\wick{\phi_\fr(f_1)\cdots \phi_\fr(f_n)} \qquad\mbox{for}\ f_1,\ldots,f_n\in\fr \]
	extends to a unitary. The property \subcref{prop:Q.2} follows explicitly from the definitions \cref{def:creann,def:field,def:gauss} and the fact that the pure symmetric tensors form a core for $\ph(f)$ by construction.
\end{proof}

\section{The Massive Spin Boson Model}\label{app:massive}
\newcommand{\HSBm}{H}
\newcommand{\Em}{E}

In this \lcnamecref{app:massive}, we prove that the ground state energy of the spin boson model with external magnetic field $H(\lambda,\mu)$ is in the discrete spectrum for any choice of the coupling constants $\lambda,\mu\in \IR$, if the dispersion relation $\omega$ is massive, i.e.,
\begin{equation}\label{eq:massive}
	m_\omega = \essinf_{k\in\IR^d}\omega(k) >0.
\end{equation}
The statement of the following \lcnamecref{thm:massive} for the case $\mu=0$, except for some simple technical restrictions on \cref{hyp:sbmin}, can be found in \cite{AraiHirokawa.1995}.
\begin{thm}\label{thm:massive}\label{thm:gsexistsHSBm}
	Assume \cref{hyp:sbmin,eq:massive} hold. Then $E(\lambda,\mu)\in\sigma_{\rm disc}(H(\lambda,\mu))$ for any choice of $\lambda,\mu\in\IR$.
\end{thm}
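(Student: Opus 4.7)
The plan is to reduce the theorem to the HVZ-type bound
\begin{equation*}
\inf \sigma_{\mathrm{ess}}(H(\lambda,\mu)) \;\geq\; E(\lambda,\mu) + m_\omega, \qquad m_\omega := \essinf \omega > 0.
\end{equation*}
Once this inequality is established, the statement follows immediately: for any self-adjoint operator one has the disjoint decomposition $\sigma = \sigma_{\mathrm{disc}} \sqcup \sigma_{\mathrm{ess}}$, and $E(\lambda,\mu)=\inf\sigma(H(\lambda,\mu))$ always belongs to $\sigma(H(\lambda,\mu))$ since the spectrum is closed. The strict gap provided by $m_\omega>0$ then forces $E(\lambda,\mu) \in \sigma_{\mathrm{disc}}(H(\lambda,\mu))$. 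The reference \cite{AraiHirokawa.1995} proves the HVZ bound for $\mu=0$; the task is to extend it to arbitrary $\mu\in\IR$.

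To carry out the extension, I would use an IMS-type localization in the boson number operator $N=\dG(\Id)$. Pick smooth $j_0,j_\infty:[0,\infty)\to[0,1]$ with $j_0^2+j_\infty^2\equiv 1$, $j_0\equiv 1$ on $[0,1/2]$ and $j_0\equiv 0$ on $[1,\infty)$, and set $J_{R,\star}:=j_\star(N/R)$ for $R>0$. The IMS identity reads
\begin{equation*}
H(\lambda,\mu) \;=\; \sum_{\star \in \{0,\infty\}} J_{R,\star}\, H(\lambda,\mu)\, J_{R,\star} \;-\; \tfrac{1}{2}\sum_{\star} \bigl[J_{R,\star},[J_{R,\star},H(\lambda,\mu)]\bigr].
\end{equation*}
The crucial structural observation is that $\sigma_z\otimes\Id$, $\Id\otimes\dG(\omega)$, and the new magnetic term $\sigma_x\otimes\mu\Id$ all commute strongly with every $f(N)$; only the field coupling $\sigma_x\otimes\lambda\phi(v)$ generates a nonvanishing double commutator. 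The resulting error can be estimated via the relative boundedness of $\phi(v)$ with respect to $\dG(\omega)^{1/2}$ from \cref{lem:Fockprops}, yielding a bound of order $O(R^{-1})$ in the operator sense on $\cD(\dG(\omega))$. In particular, this part of the analysis reduces verbatim to the one used for $\mu=0$.

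On $\operatorname{Ran} J_{R,\infty}$, one has $N\geq R/2$ and hence $\dG(\omega)\geq m_\omega R/2$. Together with the uniform bound $\|\sigma_z\otimes\Id + \mu\sigma_x\otimes\Id\| \leq \sqrt{1+\mu^2}$ and the commutator estimate, this gives
\begin{equation*}
J_{R,\infty}\, H(\lambda,\mu)\, J_{R,\infty} \;\geq\; \bigl(m_\omega R/2 - C\bigr)\, J_{R,\infty}^2
\end{equation*}
for a constant $C$ independent of $R$. Suppose for contradiction that some $\alpha < E(\lambda,\mu)+m_\omega$ belonged to $\sigma_{\mathrm{ess}}(H(\lambda,\mu))$, and take a corresponding Weyl sequence $\psi_n$ with $\|\psi_n\|=1$, $\psi_n\rightharpoonup 0$, and $(H(\lambda,\mu)-\alpha)\psi_n\to 0$. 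Fixing $R$ large enough, the high-$N$ estimate forces $\|J_{R,\infty}\psi_n\|\to 0$, so the sequence concentrates in the low-$N$ sector $\operatorname{Ran} J_{R,0}$. A compactness argument in that sector -- the Arai-Hirokawa input -- then yields $J_{R,0}\psi_n\to 0$ strongly, contradicting $\|\psi_n\|=1$.

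The principal obstacle is precisely this final compactness step. The good news is that extending it from $\mu=0$ to $\mu\neq 0$ is essentially free: the perturbation $\mu\sigma_x\otimes\Id$ is norm-bounded by $|\mu|$, acts only on the two-dimensional spin factor, and commutes with every spectral projection of $N$. Therefore it does not enter the Fock-space compactness estimates at all, while the additional spin bookkeeping is finite dimensional and automatic. Consequently the Arai-Hirokawa argument transfers to arbitrary $\mu\in\IR$ with only cosmetic modifications, proving the HVZ bound and hence the theorem.
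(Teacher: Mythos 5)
Your reduction to the HVZ bound $\inf\sigma_{\rm ess}(H(\lambda,\mu))\ge E(\lambda,\mu)+m_\omega$ is exactly what the paper does (its \cref{thm:HVZSB}), and your observation that a strict spectral gap plus closedness of the spectrum then gives $E(\lambda,\mu)\in\spd$ is correct. The route you propose for proving that bound, however, diverges from the paper's and has two genuine gaps.

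The first gap is the compactness step in the low-$N$ sector. You localize only in the boson number $N=\dG(\Id)$, so $\operatorname{Ran}J_{R,0}$ is $\IC^2\otimes\FS^{(\le R)}(L^2(\IR^d))$, which is infinite dimensional already at $N=1$. An $N$-cutoff alone does \emph{not} produce compactness: a Weyl sequence can just as well escape to infinity in photon momentum while staying in the low-$N$ sector. To close the argument one must also localize in the one-particle variable, and that is precisely where the IMS/spatial-localization route runs into the regularity problem the paper explicitly flags: controlling commutators of the multiplication operator $\omega$ with momentum cutoffs requires Lipschitz-type regularity of $\omega$, which \cref{hyp:sbmin} does not provide. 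Delegating the step to ``the Arai--Hirokawa input'' does not help, because the paper states that \cite{AraiHirokawa.1995} works under ``some simple technical restrictions on \cref{hyp:sbmin}'' that the theorem here is formulated to avoid. So your proposal does not prove the statement at the generality claimed.

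The second issue is that the ``essentially free'' extension from $\mu=0$ to $\mu\neq0$ needs a word of justification beyond ``bounded and commutes with $N$''. Adding a bounded perturbation shifts both $E$ and $\inf\sigma_{\rm ess}$, possibly by different amounts, so an HVZ bound does not transfer automatically; what actually carries it over is that the $\mu$-term respects the same decompositions used in the $\mu=0$ proof and only modifies the finite-dimensional spin factor, so the proof (not just the conclusion) goes through. You gesture at this, but do not isolate it from the bounded-perturbation heuristic, which alone is insufficient. Also, for fixed $R$ the high-$N$ estimate yields $\limsup_n\|J_{R,\infty}\psi_n\|^2=O(1/R)$, not $\|J_{R,\infty}\psi_n\|\to0$; this can be repaired but as written the argument overreaches.

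For contrast, the paper proves \cref{thm:HVZSB} without any IMS localization: it first treats the case where $\omega$ and $v$ are simple functions supported on a bounded set, so that the one-particle space splits as $L^2(\IR^d)=\VS\oplus\VS^\perp$ with $\VS$ \emph{finite dimensional}. The Hamiltonian then decomposes as an exact direct sum, there are no commutator error terms at all, and the compactness is trivial because the finite-particle subspaces of $\FS(\VS)$ are finite dimensional while $\dG(\omega)$ grows. Steps 2 and 3 then extend to general $\omega,v$ by norm-resolvent approximation (\cref{lem:HSBmnormres}). This route is exactly what allows the theorem to hold under the bare \cref{hyp:sbmin} plus $m_\omega>0$, with no regularity on $\omega$.
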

We obtain the above \lcnamecref{thm:gsexistsHSBm} as a corollary of the following \lcnamecref{thm:HVZSB}.
\begin{prop}\label{thm:HVZSB} Assume \cref{hyp:sbmin} holds.
	Then, for all $\lambda,\mu\in\IR$, we have \[\inf \spess(\HSBm(\lambda,\mu))\ge\Em(\lambda,\mu)+m_\omega.\]
\end{prop}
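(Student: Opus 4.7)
The statement is an HVZ-type lower bound on the essential spectrum; note that the bound $\inf\spess(H(\lambda,\mu))\ge E(\lambda,\mu)$ holds trivially when $m_\omega=0$, so the content is the massive case $m_\omega>0$. I would prove it by constructing a Weyl sequence at an arbitrary $E'\in\spess(H(\lambda,\mu))$, showing an a priori control on the number expectation via a pull-through identity, and then deriving a contradiction under the assumption $E'<E(\lambda,\mu)+m_\omega$ through a geometric decomposition of the one-particle Hilbert space.

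\medskip
Concretely, I would first take a normalized Weyl sequence $(\psi_n)$ with $\psi_n\rightharpoonup 0$ and $(H(\lambda,\mu)-E')\psi_n\to 0$. Since the vacuum projection $P_\Omega=\Id_{\IC^2}\otimes|\Omega\rangle\langle\Omega|$ has finite rank, weak convergence gives $\|P_\Omega\psi_n\|\to 0$, so I may assume $\psi_n\in(1-P_\Omega)\HS$, on which $N:=\dG(\Id)\ge 1$ and hence $\dG(\omega)\ge m_\omega$. Then I would invoke the pull-through identity
\[
a(k)\,H(\lambda,\mu) \;=\; (H(\lambda,\mu)+\omega(k))\,a(k) \;+\; \tfrac{\lambda\,v(k)}{\sqrt{2}}\,(\sigma_x\otimes\Id),
\]
which follows from the canonical commutation relations and $\ph(v)=\tfrac{1}{\sqrt{2}}(a(v)+a^\dagger(v))$, valid in the distributional sense on $\IR^d$. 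Assuming for contradiction $E'<E(\lambda,\mu)+m_\omega$, then $E'-\omega(k)<E(\lambda,\mu)$ for a.e.\ $k$, so the resolvent $R(k):=(H(\lambda,\mu)-E'+\omega(k))^{-1}$ exists and satisfies $\|R(k)\|\le (E(\lambda,\mu)-E'+m_\omega)^{-1}$. Solving the pull-through for $a(k)\psi_n$, integrating $\int\|a(k)\psi_n\|^2\,dk=\langle\psi_n,N\psi_n\rangle$, and using $\cref{lem:Fockprops.standardbound}$ to bound the auxiliary term $\|N^{1/2}(H-E')\psi_n\|$ would yield a uniform bound $\langle\psi_n,N\psi_n\rangle\le C$.

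\medskip
To close the argument, I would implement a Dereziński--Gérard-style geometric partition of $L^2(\IR^d)$ via a smooth cutoff in the spectral variable of $\omega$, separating bosons in a compact spectral region from those at ``infinity,'' and lift it to an isometry on Fock space. The localized component, having bounded $N$ and energy in a bounded window, is precompact in each $n$-boson sector and, combined with the summability over $n$ from the $N$-bound, produces a strongly convergent subsequence; the asymptotic component, by comparison with the unique ground state provided in the massive case by \cref{lem:Bloch.2}, contributes at least $E(\lambda,\mu)+m_\omega$ to the energy expectation. Combining these against the weak null convergence contradicts $\|\psi_n\|=1$, establishing $E'\ge E(\lambda,\mu)+m_\omega$. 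The main obstacle is the compactness in the geometric decomposition under the bare assumption \cref{hyp:sbmin}: without growth of $\omega$ at infinity, precompactness in each $n$-boson sector is subtle and is typically handled by further restricting to the spectral subspace $\{\omega\le \Lambda\}$ of the one-particle space and showing that the complement contributes at most $\langle\dG(\omega)\rangle/\Lambda$ to the norm, which is uniformly small. The magnetic term $\mu(\sigma_x\otimes\Id)$ is a bounded operator commuting with all boson operators and therefore does not affect the pull-through or the geometric estimates, so the argument extends the original $\mu=0$ proof of \cite{AraiHirokawa.1995} by merely tracking this bounded contribution.
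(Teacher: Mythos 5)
Your plan — Weyl sequence, pull‑through to control $\langle\psi_n,N\psi_n\rangle$, then a geometric (Dereziński–Gérard) partition to extract a contradiction — is the standard route in the literature, and you correctly identified where it founders. That gap is real and is precisely what the paper is designed to avoid, so let me be concrete about why your closing step fails and what the paper does instead.

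The compactness you invoke for the ``localized component'' does not follow from \cref{hyp:sbmin}. Cutting off at $\{\omega\le\Lambda\}$ only gains you something if $\omega$ is coercive, which is not assumed: $\omega$ could be a bounded function (so $\{\omega\le\Lambda\}=\IR^d$ for $\Lambda$ large), in which case bounded $N$ and bounded energy on that spectral subspace give no precompactness at all. The alternative — a genuine phase‑space or position‑space partition — requires commutator estimates between $\omega$ (a multiplication operator) and Fourier multipliers, which the paper explicitly flags as needing Lipschitz‑type regularity of $\omega$; none is available under \cref{hyp:sbmin}, which demands only measurability and a.e.\ positivity. So the step ``produces a strongly convergent subsequence'' is unsupported. (A secondary soft point: you appeal to \cref{lem:Bloch.2} as ``providing'' a ground state in the massive case, but that lemma only asserts non‑degeneracy \emph{if} $E(\lambda,\mu)$ is an eigenvalue; existence of the ground state is in fact a corollary of the very proposition you are trying to prove, so you cannot use it here.)

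The paper's resolution is structurally different and more robust. It first treats the case where $\omega$ and $v$ are \emph{simple functions} on a bounded set $M$ (with $v=0$ off $M$): the active modes then span a finite‑dimensional subspace $\VS\subset L^2(\IR^d)$, and via the unitary decomposition $L^2(\IR^d)=\VS\oplus\VS^\perp$ the Hamiltonian becomes a direct sum over the number of bosons in $\VS^\perp$, the $n\ge1$ blocks being $\ge E(\lambda,\mu)+m_\omega$ and the $n=0$ block being a self‑adjoint operator $T$ on $\IC^2\otimes\FS(\VS)$ with $S^{-1}T$ compact simply because $\VS$ is finite‑dimensional. A Weyl‑sequence argument then kills the $n=0$ contribution. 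Two approximation steps (simple‑function approximation of $\omega,v$ on a bounded set, then removal of the boundedness) are handled by a norm‑resolvent convergence lemma (\cref{lem:HSBmnormres}), which transfers both $\inf\sigma$ and $\inf\spess$ to the limit. In short: the compactness you were struggling to manufacture by localization is instead obtained \emph{for free} by reducing to a finite‑dimensional one‑particle space, and no regularity of $\omega$ beyond measurability is ever needed. Your pull‑through bound and the observation that the magnetic term $\mu\,\sigma_x\otimes\Id$ is bounded and harmless are both correct and could be kept, but the localization/compactness half of your argument needs to be replaced by the finite‑dimensional reduction and resolvent‑convergence machinery.
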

\begin{rem}
	The statement can be seen as one half of an
	HVZ-type theorem for the spin boson model with external magnetic field.
	By a slight generalization of known techniques, one can prove
	\[ \spess(\HSBm(\lambda,\mu)) = [\Em(\lambda,\mu)+m_\omega,\infty), \]
	see for example \cite{AraiHirokawa.1995,DamMoller.2018a}.
	Here, we restrict our attention to the proof of the lower bound.
\end{rem}

In the context of non-relativistic quantum field theory,
HVZ-type theorems are often proven using spatial localization of quantum particles, cf. \cite{DerezinskiGerard.1999,GriesemerLiebLoss.2001,Moller.2005,LossMiyaoSpohn.2007,HaslerSiebert.2020}.
%
%
%
To bound the error terms obtained by confining the system to a ball of radius $L$ in position space, one needs to estimate the commutator of the multiplication operator $\omega$ and the Fourier multiplier $\eta(-\i\nabla/L)$, where $\eta$ is a smooth and compactly supported function. Bounds on the commutator can be easily obtained, when $\omega$ is Lipschitz-continuous (cf. \cite[Proof of Lemma 24]{HaslerSiebert.2020}). However, for less regular choices of the dispersion relation, a generalization of the standard localization approach does not seem obvious.

Here, we use an approach used by Fröhlich \cite{Frohlich.1974} and recently applied in \cite{DamMoller.2018a}, allowing us to work directly in momentum space and without any regularity assumptions on $\omega$ going beyond \cref{hyp:sbmin}.
 The proof needs several approximation steps, so we start out with a convergence \lcnamecref{lem:HSBmnormres}. In the statement, the norms $\|\cdot\|_\infty$ and $\|\cdot\|_2$ are the usual norms in $L^\infty(\IR^d)$ and $L^2(\IR^d)$.

\begin{lem}\label{lem:HSBmnormres}
	Let $(\omega_k)_{k\in \IN}$ and $(v_k)_{k\in\IN}$ be chosen such that \cref{hyp:sbmin} holds, i.e., 
 $\omega_k :\IR^d\to[0,\infty)$ is measurable and has positive values almost everywhere, and  $v_k \in L^2(\IR^d)$ satisfies $\omega_k^{-1/2}v_k\in L^2(\IR^d)$.
 Moreover,  define $\HSBm_k(\lambda,\mu)$ to be the operator defined in \cref{def:H}, i.e., 
\begin{equation} \label{defofHk} 
	H_k(\lambda,\mu) = \sigma_z\otimes\Id + \Id\otimes \dG(\omega_k) + \sigma_x\otimes(\lambda \ph(v_k)+\mu\Id) .
\end{equation}
 Further, assume $\| \omega / \omega_{k} \|_\infty$ and  $\| \omega_k / \omega \|_\infty$ are bounded and
 \begin{align}
 	&\lim_{k\to\infty}\left\|\frac{\omega_k}{\omega}-1\right\|_\infty = \lim_{k\to\infty}\left\|\frac{\omega}{\omega_k}-1\right\|_{\infty} = 0, \label{eq:uniformconv} \\
 	&\lim_{k\to\infty}\big\|v-v_k\big\|_2 = \lim_{k\to\infty}\big\|\omega^{-1/2}v-\omega_k^{-1/2}v_k\big\|_2 = 0. \label{eq:L2conv}
 \end{align}
	Then, for all $\lambda,\mu\in\IR$, the operators $\HSBm_k(\lambda,\mu)$ are uniformly bounded from below and converge to $\HSBm(\lambda,\mu)$ in the norm resolvent sense.
\end{lem}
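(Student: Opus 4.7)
The plan is to prove the two assertions---uniform lower boundedness and norm resolvent convergence---in sequence, both resting on the observation that the hypotheses on $\omega_k/\omega$ and $\omega/\omega_k$ force $\cD(\dG(\omega_k))=\cD(\dG(\omega))$ for every $k$, since on each $n$-particle sector $\dG(\omega_k)$ acts as multiplication by $\sum_i \omega_k(k_i)$, and these multipliers are pointwise comparable uniformly in $k$. For the uniform lower bound, I would re-run the Kato--Rellich argument from \cref{lem:sbmin} but track constants. The standard bound \cref{lem:Fockprops.standardbound} gives
\[ \|\ph(v_k)\psi\|\le\sqrt{2}\|\omega_k^{-1/2}v_k\|_2\|\dG(\omega_k)^{1/2}\psi\|+\tfrac{1}{\sqrt{2}}\|v_k\|_2\|\psi\|, \]
where both $L^2$-norms converge by \cref{eq:L2conv} and are hence uniformly bounded in $k$. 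Combining this with the elementary spectral inequality $\|A^{1/2}\psi\|\le\eps\|A\psi\|+\eps^{-1}\|\psi\|$ for $A=\dG(\omega_k)\ge 0$, I obtain an infinitesimal $\dG(\omega_k)$-bound on $\ph(v_k)$ with constants independent of $k$. Kato--Rellich then yields self-adjointness of each $\HSBm_k(\lambda,\mu)$ on the common domain together with a $k$-uniform lower bound.

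For norm resolvent convergence, I would pick $z\in\IC\setminus\IR$ with $|\Im z|$ so large that $\|(\HSBm_k(\lambda,\mu)-z)^{-1}\|$ is uniformly bounded by the previous step, and then apply the second resolvent identity
\[ (\HSBm_k(\lambda,\mu)-z)^{-1}-(\HSBm(\lambda,\mu)-z)^{-1}=(\HSBm_k(\lambda,\mu)-z)^{-1}(\HSBm(\lambda,\mu)-\HSBm_k(\lambda,\mu))(\HSBm(\lambda,\mu)-z)^{-1}. \]
On the common domain the difference equals $\Id\otimes\dG(\omega-\omega_k)+\lambda\sigma_x\otimes\ph(v-v_k)$, so it suffices to show that each piece, composed on the right with $(\HSBm(\lambda,\mu)-z)^{-1}$, vanishes in norm. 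For the $\dG$-part, the pointwise bound $|\omega-\omega_k|\le\|\omega_k/\omega-1\|_\infty\,\omega$ lifts sectorwise to $\|\dG(\omega-\omega_k)\psi\|\le\|\omega_k/\omega-1\|_\infty\|\dG(\omega)\psi\|$, and since $\dG(\omega)(\HSBm(\lambda,\mu)-z)^{-1}$ is bounded by the relative bound established above, \cref{eq:uniformconv} closes this case.

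The anticipated main obstacle lies in the field part, because \cref{eq:L2conv} provides convergence of the composite quantity $\omega^{-1/2}v-\omega_k^{-1/2}v_k$ rather than of $\omega^{-1/2}(v-v_k)$, which is what the standard bound demands. I would handle this via the triangle-inequality decomposition
\[ \omega^{-1/2}(v-v_k)=(\omega^{-1/2}v-\omega_k^{-1/2}v_k)+(\omega_k^{-1/2}-\omega^{-1/2})v_k, \]
the first summand vanishing directly by \cref{eq:L2conv} and the second controlled by
\[ \|(\omega_k^{-1/2}-\omega^{-1/2})v_k\|_2\le\|1-(\omega_k/\omega)^{1/2}\|_\infty\|\omega_k^{-1/2}v_k\|_2, \]
whose first factor tends to zero by \cref{eq:uniformconv} (linearizing the square root using the uniform boundedness of $\omega_k/\omega$) and whose second factor is uniformly bounded by \cref{eq:L2conv}. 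With $\|\omega^{-1/2}(v-v_k)\|_2\to 0$ and $\|v-v_k\|_2\to 0$ in hand, the standard bound together with boundedness of $\dG(\omega)^{1/2}(\HSBm(\lambda,\mu)-z)^{-1}$ gives $\|\ph(v-v_k)(\HSBm(\lambda,\mu)-z)^{-1}\|\to 0$. Combining the two pieces yields norm resolvent convergence.
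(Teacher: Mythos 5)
Your proposal is correct and follows essentially the same route as the paper's proof: Kato--Rellich with uniformly bounded constants from \cref{eq:L2conv} for the lower bound, the second resolvent identity, the sectorwise bound $\|\dG(\omega-\omega_k)\psi\|\le\|\omega_k/\omega-1\|_\infty\|\dG(\omega)\psi\|$, and the identical triangle-inequality decomposition of $\omega^{-1/2}(v-v_k)$ to feed into the standard field-operator bound composed with the bounded operator $\dG(\omega)^{1/2}(\HSBm(\lambda,\mu)-z)^{-1}$. The only cosmetic difference is that you take $|\Im z|$ large, whereas the paper simply uses $z=-\i$ and the self-adjointness bound $\|(\HSBm_k+\i)^{-1}\|\le1$, which makes the reference to the uniform lower bound at that point unnecessary.
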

\begin{rem}
	If $\omega$ and $\omega_k$ are uniformly bounded above and below by some positive constants, then the uniform convergence assumptions \cref{eq:uniformconv} are easily seen to be equivalent to $\|\omega_k-\omega\|_\infty \xrightarrow{k\to \infty}0$.
\end{rem}
\begin{proof}
	The uniform lower bound follows directly from the $L^2$-convergence assumptions \cref{eq:L2conv}, the bounds in  \cref{lem:Fockprops.standardbound} and the Kato-Rellich theorem \cite[Theorem X.12]{ReedSimon.1975}, see also the proof of \cref{lem:sbmin}.
	
	By the definition and setting $\omega_{\infty}=\omega$, for $n\in\IN$ and $f_1,\ldots,f_n\in L^2(\IR^d)$, we find
	\[ \left\|\dG(\omega_k) f_1\os\cdots\os f_n \right\| \le \left\|\frac{\omega_k}{\omega_{k'}}\right\|_\infty \left\|\dG(\omega_{k'})f_1\os\cdots\os f_n\right\| \qquad\mbox{for}\ k,k'\in\IN\cup\{\infty\}.\]
	Since the vectors $f_1\os\cdots\os f_n$ span a core for $\dG(\omega)$ by construction, we have $\cD(\dG(\omega_k))=\cD(\dG(\omega))$ for all $k\in\IN$. A similar argument yields
	\[ \left\|\left(\dG(\omega_k)-\dG(\omega)\right) \psi \right\| \le \left\|\frac{\omega_k}{\omega}-1\right\|_\infty\left\|\dG(\omega) \psi \right\| \qquad\mbox{for}\ \psi\in\cD(\dG(\omega)). \]
	Further, observe that the assumptions \cref{eq:uniformconv,eq:L2conv} easily imply
	\begin{equation} \label{eq:ccctg}
  \big
\|\omega^{-1/2}(v-v_k)\big\|_2 \leq  \big\|\omega^{-1/2}v- \omega_k^{-1/2} v_k\big\|_2 + \big\| (1 - \omega^{-1/2}/\omega_k^{-1/2}) \omega_k^{-1/2} v_k\big\|_2    \xrightarrow{k\to\infty} 0.  
\end{equation} 
	Now, by \cref{lem:sbmin}, $\big\| ( \Id \otimes \dG(\omega))(\HSBm(\lambda,\mu)+\i)^{-1}\big\|$ is bounded.
	Hence, using the resolvent identity as well as the standard bounds \cref{lem:Fockprops.standardbound} and \[\|(H_k(\lambda,\mu)+\i)^{-1}\|\le 1,\] we find  
	\begin{align*}
		\left\|\left(\HSBm_k(\lambda,\mu)+\i\right)^{-1} - \left(\HSBm(\lambda,\mu)+\i\right)^{-1}\right\| 
		\le &\, \left\|\frac{\omega_k}{\omega}-1\right\|_\infty \left\|(\Id\otimes \dG(\omega))\left(\HSBm(\lambda,\mu)+\i\right)^{-1}\right\|
		\\ & +  \sqrt 2|\lambda|\big\|\omega^{-1/2}(v_k-v)\big\|_2\| \sqrt{ \Id \otimes \dG(\omega)}\left(\HSBm(\lambda,\mu)+\i\right)^{-1}\| \\& + 2^{-1/2}|\lambda|\big\|v_k-v\big\|_2\left\|\left(\HSBm(\lambda,\mu)+\i\right)^{-1}\right\|\\
		&\xrightarrow{k\to\infty} 0, \qedhere
	\end{align*}
where the right hand side tends to zero by \cref{eq:uniformconv,,eq:L2conv,,eq:ccctg}.
\end{proof}

\begin{proof}[\textbf{Proof of \cref{thm:HVZSB}}]
	It suffices to treat the case $m_\omega>0$, since the statement is trivial otherwise. The proof has three steps and we fix $\lambda,\mu\in\IR$ throughout.
	
	\bigskip\noindent
	\hypertarget{proofHVZSB.1}{{\em Step 1.}} We first prove the statement in a very simplified case: Assume $M\subset \IR^d$ is a bounded and measurable set, $\omega\chr M$ and $v\chr M$ are simple functions on $M$ and $v=0$ almost everywhere on $M^c$.
	
	Let $M_k$ for $k=1,\ldots,N$ be a disjoint partition of $M$ into measurable sets such that $\omega\res{M_k}$ and $v\res{M_k}$ are constant for each $k=1,\ldots,N$. We define
	\[ \VS = \operatorname{lin} \{ \chr{M_k} : k=1,\ldots,N \}\subset L^2(\IR^d).  \]
	Since $\VS$ is finite-dimensional, it is closed and we have the decomposition $L^2(\IR^d)=\VS\oplus \VS^\perp$. Observing that by the assumptions $v\in \VS$, we can define
	\[  T = \sigma_z \otimes \Id + \Id \otimes\dG(\omega) + \sigma_x\otimes(\lambda\ph(v) + \mu \Id) \quad\mbox{as operator on}\ \IC^2\otimes\FS(\VS).  \]
	We define a linear map  $U:\IC^2\otimes \FS \to \bigoplus\limits_{n=0}^\infty (\IC^2\otimes \FS(\VS))\otimes \left(\VS^\perp\right)^{\otimes_\sfs n}$, where we set $\IC := V^{\otimes_\sfs 0}$ for any vector space $V$,  by
	\[  (f_1\os \cdots \os f_n) \mapsto \bigoplus_{ \underline{p} \in \{ 0 , 1 \}^n }   \bigg( \bigotimes_{\substack{ j =1,\ldots,n \\ p_j =1 } }^\sfs P_\VS f_j   \bigg) 
\otimes \bigg(  \bigotimes^\sfs_{\substack{ j =1,\ldots,n \\ p_j =0 }} (1- P_\VS) f_j \bigg)  \]
	where $P_\VS$ denotes the orthogonal projection in  $L^2(\IR^d)$ onto $\VS$.
It is straightforward to verify that $U$ is unitary.
 For $n\in\IN_0$, we denote by $\Pi_n$ the projection onto the subspace $\IC^2\otimes \FS(\VS)\otimes\left(\VS^\perp\right)^{\otimes_\sfs n}$ in the range of $U$.
	
	From the definition \cref{def:H} and the definitions of $T$ and $U$ above, it is easily verified that
	\begin{equation}\label{eq:directsumdecomp}
		U\HSBm(\lambda,\mu)U^* = T \oplus \bigoplus_{n=1}^\infty \left(T\otimes \Id_{\left(\VS^\perp\right)^{\otimes_\sfs n}} + \Id \otimes \dG(\omega)\res{\left(\VS^\perp\right)^{\otimes_\sfs n}} \right).
	\end{equation}
	Thus $\inf \sigma(T)\ge \Em(\lambda,\mu)$ and hence $\Pi_k UH(\lambda,\mu)U^* \Pi_k \ge E(\lambda,\mu) + m_\omega$ for $k\ge 1$.
	
	Now, assume $\gamma\in \spess(\HSBm(\lambda,\mu))$. Then, by Weyl's criterion, there exists a normalized sequence $(\psi_n)_{n\in\IN}$ weakly converging to zero such that
	\begin{equation}\label{eq:Weylsequence}
		\lim\limits_{n\to\infty} \left\| \left(\HSBm(\lambda,\mu)-\gamma\right)\psi_n \right\| = 0.
	\end{equation}
	Using \cref{eq:Weylsequence} and    \cref{eq:directsumdecomp}, we find 
	\begin{equation}\label{eq:Weylineq0}
  \gamma  = \lim_{n \to \infty} 		\langle \psi_n    , H(\lambda, \mu) \psi_n   \rangle  \ge \Em(\lambda,\mu) + m_\omega +  \limsup_{n\to\infty}\Braket{\Pi_0(U\psi_n),T\Pi_0(U\psi_n)} .
	\end{equation}
	We now want to show that the last term converges to zero. To that end,  we write 
	\begin{equation}\label{eq:Weylineq}
 \Braket{\Pi_0(U\psi_n),T\Pi_0(U\psi_n)}  =  \Braket{S\Pi_0(U\psi_n),S^{-1}T\Pi_0(U\psi_n)},
	\end{equation}
	where $S=\Id\otimes (\Id + \dG(\omega))$ as operator on $\IC^2\otimes \FS(\VS)$.
	By \cref{eq:Weylsequence}, $\|S \Pi_0(U^*\psi_n)\|$ is uniformly bounded in $n$.
	We write
	\[\FS^{(\le N)}(\VS) = \bigoplus_{n=0}^N \os^n\VS.\] 
	The assumption $\omega\ge m_\omega>0$ implies that 
	\[\lim_{N\to\infty} S^{-1}T \res{\FS^{(\le N)}(\VS)} = S^{-1}T.\]
	Since $\FS^{(\le N)}(\VS)$ is finite-dimensional by construction $S^{-1}T \res{\FS^{(\le N)}(\VS)} $ has finite rank for any $N\in\IN$ and it follows that $S^{-1}T$ is compact, since it is the limit of compact operators. Hence, the last term on the right hand side of \cref{eq:Weylineq} (and hence that of \cref{eq:Weylineq0}) converges to zero as $n\to\infty$.
	
	This finishes the first step.\\[.5em] 
	\noindent
	\hypertarget{proofHVZSB.2}{{\em Step 2.}} We now relax the condition that $\omega$ and $v$ must be simple functions: Assume $M\subset \IR$ is a bounded measurable set, $\omega\chr M$ is bounded and $v=0$ almost everywhere on $M^c$.
	
	By the simple function approximation theorem \cite[Theorem 2.10]{Folland.1999}, we can pick a sequence $(\omega_k)_{k\in\IN}$ of pointwise monotonically increasing simple functions on $M$ uniformly converging to $\omega$. Outside of $M$ we set $\omega_k$ equal to $\omega$. Further, w.l.o.g., we can assume that $m_\omega \le \omega_k$.
	
	For given $k\in \IN$ let $M_{k,i}$, $i=1,\ldots,N_k$, be a disjoint partition of $M$ into measurable sets such that $\omega_k\res{M_{k,i}}$ is constant for all $i=1,\ldots,N_k$. Further, w.l.o.g, we can assume that
	\begin{equation}\label{eq:diam}
		\min_{i=1,\ldots,N_k}\operatorname{diam}(M_{k,i}) \xrightarrow{k\to\infty} 0,
	\end{equation}
	where $\operatorname{diam}$ denotes the usual diameter of a bounded set.
	Then, we define a projection $P$ onto the vector space of  simple functions with support in $M$ by
	\[  P_k f = \sum_{i=1}^{N_k} \frac{\chr{M_{k,i}}}{\operatorname{vol}(M_{k,i})}\int_{M_{k,i}} f(x)\d x, \]
	which can be easily verified to be well-defined for any $f\in L^2(\IR^d)$.
	If $f$ is continuous and compactly supported on $M$, then it is straightforward to verify $P_kf\xrightarrow{k\to\infty} f$ in $L^2$-sense. Since the continuous, compactly supported functions are dense in $L^2(M)$, this implies $\slim\limits_{k\to\infty} P_k = \Id_{L^2(M)}$ in strong operator toplogy. 
	
	We now define
	\[ v_k = \omega_k^{1/2}P_k(\omega^{-1/2}v) \]
	and observe this directly implies $\omega_k^{-1/2} v_k$ converges to $\omega^{-1/2}v$ in $L^2$-sense. Further,
	by the triangle inequality and monotonicity of the integral we find 
	\begin{align*}
		\|v_k-v\|_2^2 & \le \int_M |v_k-\omega_k^{1/2}\omega^{-1/2}v|^2 + \int_M |\omega_k^{1/2}\omega^{-1/2}v-v|^2 \\
		& \le \|\omega\res M \!\!\|_\infty \|\omega_k^{-1/2}v_k-\omega^{-1/2}v\|_2^2 + \frac 1{m_\omega}\|\omega_k^{1/2}-\omega^{1/2}\|_\infty \|v\|_2^2.
	\end{align*}
	By construction the right hand side goes to zero as $k\to\infty$.
	
	Hence,  we have shown that all assumptions of \cref{lem:HSBmnormres} are satisfied and the operators $\HSBm_k(\lambda,\mu)$, defined in \eqref{defofHk},
   are uniformly bounded from  below and converge to $\HSBm(\lambda,\mu)$ in  norm resolvent sense.
	Further,  $\omega_k$ and $v_k$ satisfy by construction  the assumptions of \hyperlink{proofHVZSB.1}{Step 1}.
	The statement  of the Theorem now follows under the simplifying assumptions of Step 2, since on the one hand  the uniform convergence of $\omega_k$ to $\omega$ implies $m_{\omega_k}$ converges to $m_\omega$ and on the other  the norm resolvent convergence and uniform lower boundedness imply convergence of the ground state energy,   
 cf.  \cite[Theorem 6.38]{Teschl.2014}, as well as  the infimum of the essential spectrum, cf.
\cite[Theorem XIII.77]{ReedSimon.1978}.\\[.5em]  
	\noindent
	{\em Step 3.} We now move to the general case.
	
	For  $k \in \N$  define
	\[ M_k = \{q\in \R^d  : \ | q| \leq k ,  \ v(q)\ne 0 , \ \omega(q) \leq k \} \cup \{q\in\IR^d:v(q)=0\}. \]
	Set $v_k = \chr{M_k}v$. Then, taking $k\to\infty$, it is straightforward to verify that both $v_k$ and $\omega^{-1/2}v_k$ converge to $v$ and $\omega^{-1/2}v$ in $L^2$-sense, respectively. Hence, we can once more apply \cref{lem:HSBmnormres} to see that $\HSBm_k(\lambda,\mu)$ with  $\omega = \omega_k$ and $v_k$ in \cref{defofHk}  is uniformly bounded below and converges to $\HSBm(\lambda,\mu)$ in the norm resolvent sense as $k\to\infty$. Since, $\omega$ and $v_k$ also satisfy the assumptions of \hyperlink{proofHVZSB.2}{Step 2}, the statement again follows by the spectral convergence as in Step 2. 
\end{proof}
We conclude this \lcnamecref{app:massive} with the\vspace*{-.5em}
\begin{proof}[\textbf{Proof of \cref{thm:gsexistsHSBm}}]
	Since the spectrum of $H(\lambda,\mu)$   is the disjoint union of discrete and essential spectrum, the statement follows from \cref{thm:HVZSB}.
\end{proof}

\vspace*{-2em}\section*{Declarations}\vspace*{-.5em}

The authors have no relevant financial or non-financial interests to disclose.\\[-2.5em]

\bibliographystyle{halpha-abbrv}
\bibliography{lit}

\end{document}